\def\nth#1{
    \IfInteger{#1}
    {\nthH{#1}}
    {#1\nthscript{th}}
}
\def\nthH#1{% First print number:
    \expandafter\nthM \number #1\relax
    \nthscript{%
        \ifnum#1\nthtest0 th\else % negatives are all ``th'' (depending on \nthtest)
        \expandafter \nthSuff \expandafter 0\number\ifnum #1<0-\fi#1\delimiter
        \fi
    }}
\def\nthM#1{\if -#1\ifmmode-\else$-$\fi\else #1\fi}
\def\nthSuff#1#2#3{%
    \ifx \delimiter#3% #1#2 are last two digits
    \ifnum #1=1 th%  teens are always ``th''
    \else % use appropriate suffix
    \ifcase #2 th\or st\or nd\or rd\else th\fi
    \fi
    \else % continue scanning for last two digits
    \expandafter \nthSuff \expandafter #2\expandafter #3%
    \fi}
\def\nthscript#1{\textsuperscript{#1}}% alternate: \def\nthscript#1{$\rm^{#1}$}
\def\nthtest{=}  % alternate: \def\nthtest{=}
\theoremstyle{plain}
\newtheorem{proposition}[theorem]{Proposition}
\newenvironment{claiminproof}
{%

\vspace{2mm}
\noindent
\textit{Claim:}
}%
{%
\mbox{}%

\vspace{2mm}%
\noindent%
\ignorespacesafterend
}%
\newcommand{\rot}[1]{\rotatebox[origin=c]{90}{$#1$}}
\newcommand{\eqdef}{\stackrel{\text{def}}{=}}
\newcommand{\wqo}{\text{wqo}\xspace}
\newcommand{\W}{\mathcal{W}}
\newcommand{\barW}{\overline{\mathcal{W}}}
\newcommand{\barS}{\overline{S}}
\newcommand{\barT}{\overline{T}}
\newcommand{\barI}{\overline{I}}
\newcommand{\barF}{\overline{F}}
\newcommand{\barX}{\overline{X}}
\newcommand{\barpreceq}{\,\overline{\preceq}\,}
\newcommand{\A}{\mathcal{A}}
\newcommand{\B}{\mathcal{B}}
\newcommand{\N}{\mathbb{N}}
\newcommand{\idec}[2]{\text{\sc Id-dec}_{#1}(#2)}
\newcommand{\idcompl}[1]{\overline{#1}}  % \text{\sc Id-compl}(#1)}
\newcommand{\prodW}{\W_{\times}}
\newcommand{\prodS}{S_{\times}}
\newcommand{\prodT}{T_{\times}}
\newcommand{\prodI}{I_{\times}}
\newcommand{\prodF}{F_{\times}}
\newcommand{\prodpreceq}{\preceq_{\times}}
\newcommand{\WSTS}{\text{WSTS}\xspace}
\newcommand{\UWSTS}{\text{UWSTS}\xspace}
\newcommand{\DWSTS}{\text{DWSTS}\xspace}
\newcommand{\WSTSes}{WSTS\xspace}
\newcommand{\UWSTSes}{UWSTS\xspace}
\newcommand{\DWSTSes}{DWSTS\xspace}
\newcommand{\ULTS}{ULTS\xspace}
\newcommand{\ULTSes}{ULTS\xspace}
\newcommand{\fin}{\textup{fin}}
\newcommand{\pfindown}[1]{{\mathcal{P}}^{\downarrow}_{\fin}(#1)}
\newcommand{\pdown}[1]{{\mathcal{P}}^{\downarrow}(#1)}
\newcommand{\pup}[1]{{\mathcal{P}}^{\uparrow}(#1)}
\newcommand{\upclosure}[1]{{\uparrow}{#1}}
\newcommand{\downclosure}[1]{{\downarrow}{#1}}
\newcommand{\downclosureinv}[1]{{\downarrow_{-1}}{#1}}
\newcommand{\rev}{\textup{rev}}
\newcommand{\trans}[1]{\xrightarrow{#1}}
\newcommand{\revsubseteq}{\subseteq_\text{\tiny rev}}
\newcommand{\succe}[3]{\text{\sc Succ}_{#1}(#2, #3)}
\newcommand{\prede}[3]{\text{\sc Pred}_{#1}(#2, #3)}
\newcommand{\reachgen}[3]{\text{\sc Reach}_{#1}(#2, #3)}
\newcommand{\revreachgen}[3]{\text{\sc Reach}^{-1}_{#1}(#2, #3)}
\newcommand{\revreachall}[1]{\text{\sc Reach}^{-1}_{#1}}
\newcommand{\reach}[2]{\text{\sc Reach}_{#1}(#2)}
\newcommand{\revreach}[2]{\text{\sc Reach}^{-1}_{#1}(#2)}
\newcommand{\reachall}[1]{\text{\sc Reach}_{#1}}
\newcommand{\existsformula}[2]{\exists #1 \colon #2}
\newcommand{\set}[1]{ \left\{ #1 \right\}}
\newcommand{\setof}[2]{\{ #1 \mid #2\}}
\newcommand{\fire}[1]{[#1\rangle{}}
\newcommand{\iso}{\overset{\mathit{iso}}{=}}
\newcommand{\inc}{\mathit{inc}}
\newcommand{\dec}{\mathit{dec}}
\newcommand{\ninc}{N_{\inc}}
\newcommand{\ndec}{N_{\dec}}
\newcommand{\complementof}[1]{#1^{\mathcal{C}}}
\newcommand{\complA}{\complementof{\A}}
\renewcommand{\L}{\mathcal{L}}
\newcommand{\K}{\mathcal{K}}
\newcommand{\R}{\mathcal{R}}
\newcommand{\card}[1]{\left|#1\right|}
\newcommand{\norm}[1]{\lVert#1\rVert_\infty}
\newcommand {\dotcup}{\ensuremath{\mathop{\mathaccent\cdot\cup}}}
\newcommand{\lang}[1]{\mathcal{L}(#1)}
\newcommand{\Nat}{\mathbb{N}}
\newcommand{\EXPSPACE}{\mathsf{EXPSPACE}}
\newcommand{\PSPACE}{\mathsf{PSPACE}}
\newcommand {\bigO} [1]{ \mathcal{O} \left( {#1} \right) }
\newcommand{\NBfree}{N_{\mathit{det}}}
\newcommand{\NAlambda}{N_{-\lambda}}
\newcommand{\interval}[1]{[1..#1]}
\newcommand{\NA}{N_1}
\newcommand{\NB}{N_2}
\newcommand{\TA}{T_1}
\newcommand{\TB}{T_2}
\newcommand{\lambdaA}{\lambda_1}
\newcommand{\wstsof}[1]{\W_{#1}}
\newcommand{\WSTSN}{\wstsof{N}}
\newcommand{\Cmplment}{X}
\newcommand{\binary}{\set{0,1}}
\newcommand{\pout}{p_\mathit{out}}
\newcommand{\pin}{p_\mathit{in}}
\newcommand{\phaltinc}{p_\mathit{haltinc}}
\newcommand{\phaltdec}{p_\mathit{haltdec}}
\newcommand{\llipton}{b}
\newcommand{\lphase}{c}
\newcommand{\calB}{\mathcal{B}}
\title{Regular Separability of Well-Structured Transition Systems}
\author{Wojciech Czerwi\'nski}
{University of Warsaw, Poland}
{wczerwin@mimuw.edu.pl}
{https://orcid.org/0000-0002-6169-868X}
{Supported by the Polish National Science Centre under grant 2016/21/D/ST6/01376.}
\author{S{\l}awomir Lasota}
{University of Warsaw, Poland}
{sl@mimuw.edu.pl}
{https://orcid.org/0000-0001-8674-4470}
{Partially supported by the European Research Council (ERC) project Lipa under the EU Horizon 2020 research and innovation programme (grant agreement No.~683080).}
\author{Roland Meyer}
{TU Braunschweig, Germany}
{roland.meyer@tu-bs.de}
{https://orcid.org/0000-0001-8495-671X}
{}
\author{Sebastian Muskalla}
{TU Braunschweig, Germany}
{s.muskalla@tu-bs.de}
{https://orcid.org/0000-0001-9195-7323}
{}
\author{K Narayan Kumar}
{Chennai Mathematical Institute and UMI RELAX, India}
{kumar@cmi.ac.in}
{}
{Partially supported by the Indo-French project AVeCSo, the Infosys Foundation, and DST-VR Project P-02/2014.}
\author{Prakash Saivasan}
{TU Braunschweig, Germany}
{p.saivasan@tu-bs.de}
{}
{}
\authorrunning{W. Czerwi\'nski, S. Lasota, R. Meyer, S. Muskalla, K. Narayan Kumar, and P. Saivasan}
\subjclass{\\
    \ccsdesc[500]{Theory of computation~Models of computation}\\
    \ccsdesc[500]{Theory of computation~Formal languages and automata theory}\\
    \ccsdesc[500]{Theory of computation~Regular languages}\\
    \ccsdesc[300]{Theory of computation~Parallel computing models}
}
\keywords{\\
    regular separability, 
    wsts, 
    coverability languages,
    Petri nets
}
\begin{document}

\maketitle

% !TEX root = ../main.tex

\begin{abstract}
    We investigate the languages recognized by well-structured transition systems (\WSTS) with upward and downward compatibility.
    Our first result shows that, under very mild assumptions, every two disjoint \WSTS languages are regular separable:
    There is a regular language containing one of them and being disjoint from the other.
    As a consequence, if a language as well as its complement are both recognized by \WSTS, then they are necessarily regular.
    In particular, no subclass of WSTS languages beyond the regular languages is closed under complement.
    Our second result shows that for Petri nets, the complexity of the backwards coverability algorithm yields a bound on the size of the regular separator.
    We complement it by a lower bound construction.
\end{abstract}

\section{Introduction}
%WSTS Everywhere.
We study the languages recognized by well-structured transition systems (\WSTS)~\cite{DBLP:conf/icalp/Finkel87,DBLP:journals/iandc/Finkel90,AJ1993,ACJT96,DBLP:journals/tcs/FinkelS01}. 
\WSTS form a framework subsuming several widely-studied models, like Petri nets~\cite{Esparza:1996:DCP:647444.727062} and their extensions with transfer~\cite{DFS98}, data~\cite{RV12}, and time~\cite{ADM04}, graph rewriting systems~\cite{JK2008}, depth-bounded systems~\cite{M2008,WZH10,EDOsualdo}, ad-hoc networks~\cite{ADRST11}, process algebras~\cite{BGZ04}, 
lossy channel systems (LCS)~\cite{AJ1993}, and programs running under weak memory models~\cite{ABBM10,ABBM12}. 
Besides their applicability, the importance of \WSTS stems from numerous decidability results.
Finkel showed the decidability of termination and boundedness~\cite{DBLP:conf/icalp/Finkel87,DBLP:journals/iandc/Finkel90}.
Abdulla came up with a backward algorithm for coverability~\cite{AJ1993}, for which a matching forward procedure was found only much later~\cite{GEERAERTS2006180}. 
Several simulation and equivalence problems are also decidable for \WSTS~\cite{DBLP:journals/tcs/FinkelS01}. 
The work on \WSTS\ even influenced algorithms for regular languages~\cite{Antichains2006} and recently led to the study of new complexity classes~\cite{SchmitzSchnoebelen2011}. 

%WSTS Languages.
Technically, a \WSTS\ is a transition system equipped with a quasi order on the configurations that satisfies two properties.
It is a well quasi order and it is (upward or downward) compatible with the transition relation in the sense that it forms a simulation relation. 
For our language-theoretic study, we assume the transitions to be labeled and the \WSTS to be equipped with sets of initial and final configurations. 
The set of final configurations is supposed to be upward or downward closed wrt. the quasi order of the \WSTS.
When specialized to VAS, this yields the so-called \emph{covering languages}. 

%Regular Separability.
For \WSTS languages, we study the problem of regular separability.
Given two languages $\L$ and $\K$ over the same alphabet, a
\emph{separator} is a language $\R$ that contains one of the languages and is disjoint from the other, $\L \subseteq \R$ and $\R\cap \K = \emptyset$. 
The separator is regular if it is a regular language. 
Separability has recently attracted considerable attention.
We discuss the related work in a moment.

Disjointness is clearly necessary for regular separability. 
We show that for most {\WSTS}, disjointness is also sufficient. 
Our main result is the following:
\begin{center}
Any two disjoint \WSTS languages are regular separable. 
\end{center}
The only assumption we need is that, in the case of upward-compatible \WSTS resp.~downward-compatible \WSTS, one of the \WSTS is finitely branching resp.~deterministic.
%For the case of , one should be . 

% Proof technique
The proof proceeds in two steps.
In the first step, we link inductive invariants from verification~\cite{MP1995} to separability in formal languages.
More precisely, we show that any inductive invariant (of the product of the given systems) gives rise to a regular separator~--~provided it can be finitely represented.
We do not even need \WSTS here, but only upward compatibility. 
An inductive invariant is a set of configurations that contains the initial ones, is closed under the transition relation, and is disjoint from the final configurations. 

In a second step, we show that finitely-represented invariants always exist.
To this end, we use ideal completions from lattice theory~\cite{KP92,DBLP:conf/icalp/BlondinFM14,FGII}.
The insight is that, in a \WSTS, any inductive invariant can be finitely represented by its ideal decomposition. 
This ideal decomposition yields states in the ideal completion of the \WSTS, and the first step applies.

The result has theoretical as well as practical applications.
On the theoretical side, recall the following about Petri nets from~\cite{DBLP:conf/asian/MukundKRS98,DBLP:conf/icalp/MukundKRS98}:
Every two Petri net covering languages that are complements of each other are necessarily regular. 
The result not only follows from ours, but the same applies to other classes of \WSTSes, for instance to the languages of LCS, and actually to
\emph{all} \WSTS languages fulfilling the above-mentioned assumptions.
For instance, if the covering language of a Petri net 
is the complement of the language of an LCS, they are necessarily regular;
and if the languages are just disjoint, they are regular separable.

The result is also important in verification.
In 2016 and 2017, the Software Verification Competition was won by  so-called language-theoretic algorithms~\cite{HHP10}.
These algorithms replace the classical state-space search by proofs of language disjointness (between a refinement of the control-flow language and the language of undesirable behavior).
Regular separators are precisely what is needed to prove disjointness.
In this setting, regular separators seem to play the role that inductive invariants play for safety verification~\cite{MP1995}.
Indeed, our results establishes a first link between the two.

We accompany our main result by two more findings.
The first ones are determinization results that broaden the applicability of our results. 
For upward compatibility, we show that every finitely branching \WSTS\ can be determinized.
For downward compatibility, we show that every \WSTS can be determinized if the quasi order is an $\omega^2$-\wqo.
In fact all examples from the literature are $\omega^2$-{\WSTS}, hence they determinize, and in consequence satisfy the assumptions of our results.

Our second accompanying result is on the size of regular separators for Petri nets.
%}
%Here, the interesting case to consider are Petri nets.
We show how to construct a regular separator in the form of a non-deterministic automaton of size triply exponential in size of the given nets.
With the main result at hand, the result amounts to giving a bound on the size of a finite representation of an inductive invariant.
As inductive invariant, we use the complement of the configurations backward reachable from the final ones.
The estimation starts from a result on the size of a basis for the backward reachable configurations~\cite{DBLP:conf/rp/LazicS15} and reasons about the complementation.
There is a matching lower bound for deterministic automata.

\subparagraph{Outline.} 
Section~\ref{sec:wsts} recalls the basics on \WSTS. 
The determinization results can be found in Section~\ref{sec:expressibility}. 
They prepare the main result in Section~\ref{sec:separability}. 
The state complexity of separators for Petri nets is in Section~\ref{sec:bound}. Section~\ref{sec:remarks} concludes the paper.

\subparagraph{Related Work.}

Separability is a widely-studied problem in Theoretical Computer Science.
A classical result says that every two co-recursively enumerable languages are recursively separable, i.e.~separable by a recursive language~\cite{handbook}.
In the area of formal languages, separability \emph{of} regular languages by subclasses thereof 
was investigated most extensively as a decision problem: Given two regular languages, decide whether they are
separable by a language from a fixed subclass. 
For the following subclasses, among others, the separability problem of regular languages is decidable:
The piecewise-testable languages, shown independently in~\cite{DBLP:conf/icalp/CzerwinskiMM13} and~\cite{DBLP:conf/mfcs/PlaceRZ13},
the locally testable and locally threshold-testable languages~\cite{DBLP:conf/fsttcs/PlaceRZ13}, 
the languages definable in first-order logic~\cite{DBLP:journals/corr/PlaceZ14},
and the languages of certain higher levels of the first-order hierarchy~\cite{DBLP:conf/icalp/PlaceZ14}. 

Regular separability of classes larger than the regular languages attracted little attention until recently. 
As a remarkable example, already in the 70s, the undecidability of regular separability of context-free languages has been shown~\cite{DBLP:journals/siamcomp/SzymanskiW76} (see also a later proof~\cite{DBLP:journals/jacm/Hunt82a}); then the undecidability has been strengthened to visibly pushdown languages~\cite{DBLP:conf/lics/Kopczynski16} and to languages of one-counter automata~\cite{onecounter:sep}.

An intriguing problem, to the best of our knowledge still open, 
is the decidability of regular separability of Petri net languages, under the proviso that acceptance is by \emph{reaching} a distinguished final configuration. 
As for now, positive answers are known only for subclasses of VAS languages: 
$\PSPACE$-completeness for one-counter nets (i.e.~one-dimensional vector addition systems with states)~\cite{onecounter:sep},
and elementary complexity for languages recognizable by Parikh automata (or, equivalently,
by integer vector addition systems)~\cite{DBLP:journals/corr/ClementeCLP16a}.
Finally, regular separability of \emph{commutative closures of} VAS languages has been shown to be decidable in~\cite{DBLP:journals/corr/ClementeCLP16}.
As a consequence of this paper, regular separability of two VAS languages reduces to disjointness of 
the same two VAS languages (and is thus trivially decidable), given that acceptance is by \emph{covering} a distinguished final configuration. 

Languages of upward-compatible \WSTS\ were investigated e.g.~in~\cite{GRV-wsl-07}, where interesting closure properties have been shown, including a natural pumping lemma.
Various subclasses of languages of \WSTS have been considered, e.g.~in~\cite{DELZANNO201312,DBLP:conf/csl/AbdullaDB07,DBLP:conf/apn/Martos-SalgadoR14}.

% !TEX root = ../main.tex

\section{Well structured transition systems}
\label{sec:wsts}

\subparagraph*{Well Quasi Orders.}

A quasi order $(X, \preceq)$, i.e.~a set $X$ equipped with a reflexive and transitive binary relation $\preceq$, is called 
\emph{well quasi order} (\wqo) if
for every infinite sequence $x_1, x_2, \ldots \in X$ there are indices $i < j$ such that $x_i \preceq x_j$.
It is folklore that $(X, \preceq)$ is \wqo iff it admits neither
an infinite descending sequence (i.e.~it is well-founded) nor an infinite antichain
(i.e.~it has the finite antichain property).

We will be working either with {\wqo}s, or with $\omega^2$-{\wqo}s, a strengthening of {\wqo}s.
We prefer not to provide the technical definition of $\omega^2$-\wqo (which can be found, e.g~in~\cite{Marcone-bqo}), as it would not serve our aims.
Instead, we take the characterization provided by Lemma~\ref{lem:omega2} below as a working definition.
The class of $\omega^2$-{\wqo}s provides a framework underlying the forward WSTS analysis developed in~\cite{FGI,FGII,GEERAERTS2006180}.
Both classes, namely {\wqo}s and $\omega^2$-{\wqo}s, are stable under various operations like taking the Cartesian product, the lifting to finite multisets (multiset embedding), and the lifting to finite sequences (Higman ordering).

% Let $(X, \unlhd)$ be an order.
A subset $U \subseteq X$ is \emph{upward closed} with respect to $\preceq$
if $u \in U$ and $u' \succeq u$ implies $u' \in U$.
Similarly, one defines \emph{downward closed} sets.
Clearly, $U$ is upward closed iff $X \setminus U$ is downward closed.
The \emph{upward} and \emph{downward closure} of a set $U \subseteq X$ are defined as:
\begin{align*}
    \upclosure{U} = \{x\in X \mid \exists u \in U, \, x \succeq u\}
    \quad \text{ and } \quad
    \downclosure{U} = \{x\in X \mid \exists u \in U, \, x \preceq u\}
    \ .
\end{align*}
% wrt. $\unlhd$ is the set of all elements
%$u'$ such that $u' \unrhd u$ ($u' \unlhd u$) for some $u \in U$. We denote it
%$\upcl(U) = \{u' \mid \exists_{u \in U} \, u' \unrhd u\}$ and $\downcl(U) = \{u' \mid \exists_{u \in U} \, u' \unlhd u\}$.
%Notice here that if $U$ is finite and $\unlhd$ is \wqo then downward closure of $U$ wrt. $\unlhd$ is also finite.
%\wojtek{why actually}
%
The family of all upward-closed resp.~downward-closed subsets of $X$ we denote by $\pup{X}$ resp.~$\pdown{X}$.
If $(X, \preceq)$ is a \wqo then every upward closed set is the upward closure of a finite set, namely of the set of its minimal elements.
This is not the case for downward closed set; we thus distinguish a subfamily $\pfindown{X}\subseteq \pdown{X}$ of \emph{finitary} downward closed subsets of $X$, i.e.~downward closures of finite sets.
In general, these are not necessarily finite sets (e.g.~consider the set $\N \cup \{\omega\}$ with $\omega$ bigger than all natural numbers, and the downward closure of $\{\omega\}$).
The set $\pfindown{X}$, ordered by inclusion, is a \wqo whenever $(X, \preceq)$ is:
\begin{restatable}{rlemma}{restateLemmapfin}
\label{c:pfin}
    $\big(\pfindown{X}, {\subseteq}\big)$ is a \wqo iff $(X, \preceq)$ is a \wqo.
\end{restatable}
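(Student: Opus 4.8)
The plan is to transfer the question to the finite subsets of $X$ equipped with the Hoare (domination) quasi order. For a finite set $F \subseteq X$, its downward closure $\downclosure{F}$ is an element of $\pfindown{X}$, and by definition every element of $\pfindown{X}$ arises this way. Writing $\sqsubseteq$ for the domination order on $\pfin{X}$, namely
\[
  F \sqsubseteq G \quad\text{iff}\quad \forall f \in F\ \exists g \in G : f \preceq g ,
\]
the crucial observation is that this order is faithfully reflected by inclusion of closures: if $F \sqsubseteq G$ then $\downclosure{F} \subseteq \downclosure{G}$ by transitivity of $\preceq$, and conversely $\downclosure{F} \subseteq \downclosure{G}$ forces $F \sqsubseteq G$ since each $f \in F$ lies in $\downclosure{F} \subseteq \downclosure{G}$ by reflexivity. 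In particular the surjection $F \mapsto \downclosure{F}$ from $(\pfin{X}, \sqsubseteq)$ onto $(\pfindown{X}, \subseteq)$ is monotone.

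For the easy direction (\wqo of $\pfindown{X}$ implies \wqo of $X$), I would feed the wqo property the \emph{principal} ideals. Given an infinite sequence $x_1, x_2, \ldots \in X$, the singletons' closures $\downclosure{\set{x_1}}, \downclosure{\set{x_2}}, \ldots$ all lie in $\pfindown{X}$, so the assumption yields indices $i < j$ with $\downclosure{\set{x_i}} \subseteq \downclosure{\set{x_j}}$. Since $x_i \in \downclosure{\set{x_i}}$, this gives $x_i \in \downclosure{\set{x_j}}$, i.e.~$x_i \preceq x_j$, as required.

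For the converse (the substantial direction), I would use the helper fact that a monotone surjective image of a \wqo is again a \wqo: lift any sequence in the image to a sequence of preimages, apply the \wqo property upstairs, and push the comparable pair back down by monotonicity. By the observation above it therefore suffices to prove $(\pfin{X}, \sqsubseteq)$ is a \wqo. Here I invoke Higman's lemma, i.e.~the stated stability of \wqo under the lifting to finite sequences: $(X^\ast, \preceq^\ast)$ is a \wqo whenever $(X, \preceq)$ is. The ``underlying set'' map $X^\ast \to \pfin{X}$ is a monotone surjection, because a Higman embedding matches every letter of the shorter word to a $\succeq$-larger letter of the longer one, hence dominates underlying sets; and every finite set is the underlying set of some enumerating word. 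Composing the two monotone surjections exhibits $\pfindown{X}$ as a monotone surjective image of the \wqo $X^\ast$, which closes the argument. The main obstacle is precisely this converse: it rests on Higman's lemma, the only genuinely combinatorial ingredient, while the monotone-surjection packaging and the singleton trick reduce everything else to routine manipulation of closures.
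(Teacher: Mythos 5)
Your proof is correct and follows essentially the same route as the paper: the easy direction is the identical singleton trick, and the substantial direction rests on Higman's lemma applied to words enumerating the finite generating sets, exactly as in the paper's argument. The only difference is cosmetic packaging --- you interpose the Hoare order on $\pfin{X}$ and the standard fact that monotone surjective images of {\wqo}s are {\wqo}s, where the paper directly observes that a Higman embedding of generator words yields inclusion of the downward closures.
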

\noindent
This property does not necessarily extend to the whole set $\pdown{X}$ of all downward closed subsets of $X$.
As shown in~\cite{DBLP:journals/ipl/Jancar99}:
\begin{lemma}
\label{lem:omega2}
    $\big(\pdown{X}, {\subseteq}\big)$ is a \wqo iff $(X, \preceq)$ is an $\omega^2$-\wqo.
\end{lemma}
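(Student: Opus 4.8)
The plan is to read this equivalence off from Jan\v{c}ar~\cite{DBLP:journals/ipl/Jancar99}, where exactly this statement is established, rather than to reprove it: throughout the paper we adopt the right-hand condition---that $\big(\pdown{X},{\subseteq}\big)$ is a \wqo---as our working definition of $\omega^2$-\wqo, so the lemma is precisely the bridge between this usable property and the classical (Nash--Williams, barrier-based) definition of $\omega^2$-\wqo, see~\cite{Marcone-bqo}. It is worth contrasting it with Lemma~\ref{c:pfin}: restricting to the \emph{finitary} downward-closed sets $\pfindown{X}$ keeps \wqo sufficient, whereas passing to the full family $\pdown{X}$ is exactly the step that demands the strictly stronger $\omega^2$ assumption. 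For a reader who wants the shape of a self-contained argument, I would proceed as follows.

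For the direction ``$X$ is $\omega^2$-\wqo $\Rightarrow \big(\pdown{X},{\subseteq}\big)$ is a \wqo'', I would first replace inclusion of downward-closed sets by the domination (Hoare) order on subsets, $A \preceq^\flat B \iff \forall a \in A\,\exists b \in B : a \preceq b$, noting that $\big(\pdown{X},{\subseteq}\big)$ is order-isomorphic to the subsets of $X$ under $\preceq^\flat$ once subsets with equal downward closure are identified. The content is then that controlling sequences of \emph{sets} over $X$ lies strictly beyond what plain \wqo guarantees: the finitely-generated reasoning available for Lemma~\ref{c:pfin} no longer applies, since a downward-closed set need not be the closure of a finite antichain. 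The $\omega^2$-\wqo hypothesis is calibrated precisely to make a minimal-bad-array / Nash--Williams argument go through, extracting from a putative bad sequence $D_0, D_1, \dots$ a good pair $D_i \subseteq D_j$. The point of the strengthening is that plain \wqo is simply not closed under this set-level construction, as the converse direction makes explicit.

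For the converse I would argue by contraposition, turning a failure of $\omega^2$-\wqo into an infinite antichain of $\big(\pdown{X},{\subseteq}\big)$. The canonical witness is the Rado order---the standard \wqo that is not $\omega^2$-\wqo---whose ``columns'' $A_n = \{(n,m) : m > n\}$ are pairwise incomparable under $\preceq^\flat$, so their downward closures form an infinite antichain in $\pdown{X}$; the general case replays this obstruction from the combinatorial data witnessing that $X$ is not $\omega^2$-\wqo. The main obstacle, and the reason we prefer to cite~\cite{DBLP:journals/ipl/Jancar99}, is that every part of this argument is phrased against the technical definition of $\omega^2$-\wqo (barriers and the $\alpha$-\wqo hierarchy), which is exactly the machinery we chose not to unfold.
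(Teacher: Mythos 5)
Your proposal matches the paper's treatment: the paper gives no proof of Lemma~\ref{lem:omega2} either, but simply cites Jan\v{c}ar~\cite{DBLP:journals/ipl/Jancar99} (adding only the remark that Jan\v{c}ar phrases the result for upward-closed sets under reverse inclusion, which is isomorphic to $\big(\pdown{X},{\subseteq}\big)$), and explicitly adopts the statement as the working definition of $\omega^2$-\wqo. Your additional sketch of a self-contained argument (Hoare-order reformulation, minimal-bad-sequence for one direction, Rado's order for the contrapositive) is a correct outline of the standard proof, but it is supplementary to what the paper actually does.
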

\noindent
As a matter of fact, \cite{DBLP:journals/ipl/Jancar99} considers the reverse inclusion order on upward closed sets, which is clearly isomorphic to
the inclusion order on downward closed sets.

\subparagraph*{Labeled Transition Systems.}

In the sequel we always fix a finite alphabet $\Sigma$.
A labeled transition system (LTS) $\W = (S, T, I, F)$ over $\Sigma$ consists of a set of \emph{configurations} $S$,
a set of \emph{transitions} $T \subseteq S \times \Sigma \times S$, and subsets $I, F \subseteq S$ of \emph{initial} 
and \emph{final} configurations.
% maybe $\Sigma_\eps = \Sigma \cup \{\eps\}$
We write $s \trans{a} s'$ instead of $(s, a, s') \in T$. 
A path from configuration $s$ to configuration $s'$ over a word $w = a_0 \cdots a_{k-1}$ 
is a sequence of configurations $s = s_0, s_1, \ldots, s_{k-1}, s_k = s'$
such that $s_i \trans{a_i} s_{i+1}$ for all $i \in \{0, \ldots, k-1\}$. We write $s \trans{w} s'$. 
%
%We will need some further but straigthforward notation.
For a subset $X\subseteq S$ of configurations and a word $w\in\Sigma^*$  we write
\begin{align*}
    \reachgen{\W}{X}{w} & = \setof{s \in S}{\existsformula{x \in X}{x \trans{w} s}}
    \ , \\
    \revreachgen{\W}{X}{w} & = \setof{s \in S}{\existsformula{x \in X}{s \trans{w} x}}
\end{align*}
for the set of all configurations reachable (resp.~reversely reachable) from $X$ along $w$.
Note that we have \mbox{$\reachgen{\W}{X}{\varepsilon} = X = \revreachgen{\W}{X}{\varepsilon}$.}
Important special cases will be the set of all $a$-successors (resp.~$a$-predecessors) for $a\in\Sigma$, 
i.e.~configurations reachable along a one-letter word $a$,
and the configurations reachable from the initial configurations $I$ (resp.~reversely reachable from the final configurations $F$):
\begin{align*}
    \succe{\W}{X}{a}  & = \reachgen{\W}{X}{a}
    &
    \reach{\W}{w}  & = \reachgen{\W}{I}{w}
    \\
    \prede{\W}{X}{a}  & = \revreachgen{\W}{X}{a}
    &
    \revreach{\W}{w}  & = \revreachgen{\W}{F}{w}
%    \ .
\end{align*}
satisfying the following equalities for all $w\in\Sigma^*$ and $a\in\Sigma$:
\begin{align}
\label{eq:reachsucc}
    \reach{\W}{w.a} & = \succe{\W}{\reach{\W}{w}}{a}
    \\
\label{eq:reachpred}
    \revreach{\W}{a.w} & = \prede{\W}{\revreach{\W}{w}}{a}
    \ .
\end{align}
We also establish the notation for the whole set of (reversely) reachable configurations:
\begin{align*}\
    \reachall{\W} &= \bigcup_{w\in\Sigma^*} \reach{\W}{w}
    &
    \revreachall{\W} & =  \ \bigcup_{w\in\Sigma^*} \revreach{\W}{w}
    \ .
\end{align*}
An LTS $\W = (S, T, I, F)$ is \emph{finitely branching} if $I$ is finite and for every configuration $s \in S$ and each $a \in \Sigma$ there are only finitely many configurations $s' \in S$ such that $s \trans{a} s'$.
Furthermore, $\W$ is \emph{deterministic} if it has exactly one initial configuration and for every $s\in S$ and each $a\in \Sigma$ there is exactly one $s' \in S$ such that $s \trans{a} s'$.
If $\W$ is deterministic, we write $s' = \succe{\W}{s}{a}$ (resp. $s' = \reach{\W}{w}$) instead of 
$\{s'\} = \succe{\W}{s}{a}$ (resp.~\mbox{$\{s'\} = \reach{\W}{w}$}).

The language recognized by $\W$, denoted $\lang{\W}$, is the set of words
which occur on some path starting in an initial configuration and ending in a final one, i.e.
\[
    \lang{\W} = \setof{w \in \Sigma^* }{ \exists i \in I, f \in F \colon i \trans{w} f }
    \ .
\]
We call two LTS $\W, \W'$  \emph{equivalent} if their languages are the same.
They are \emph{reverse-equivalent} if~\mbox{$\lang{\W} = \setof{\rev(w) }{ w \in \lang{\W'} }$} with \mbox{$\rev(a_1 \ldots a_k) = a_k \ldots a_1$}.
%\todo{check if we actually use this notation}

Note that we did not allow for $\varepsilon$-steps in transition systems.
Even if $\varepsilon$-steps can be eliminated by pre-composing and post-composing every transition $s \trans{a} s'$ with the 
reflexive-transitive closure of $\trans{\varepsilon}$,
%we need to emphasize that
this transformation does not necessarily preserve finite branching.

\subparagraph{Synchronized Products.}

Consider {LTS} $\W = (S, T, I, F)$  and $\W' = (S', T', I', F')$.
Their \emph{synchronized product} is the LTS \mbox{$\W \times \W' = (\prodS, \prodT, \prodI, \prodF)$} defined as follows:
The configurations are tuples of configurations, $\prodS = S\times S'$, and the initial and final configurations are $\prodI= I\times I'$ and $\prodF = F\times F'$, respectively.
The transition relation is defined by

\vspace*{5pt}
\noindent
\quad
    $
    \begin{tabu}{rl@{}l}
        (s, s') \trans{a} (r, r') \text{ in } \W\times \W'
        &  \text{ if } 
        & \ s \trans{a} r \text{ in } \W
        \\
        & \text{ and }
        &  \ s' \trans{a} r' \text{ in } \W'
        \ .
    \end{tabu}
    $
  
\vspace*{5pt}  
\noindent
It is immediate from the definition that the language of the product is the intersection of the languages, i.e.~\mbox{$\lang{\W \times \W'} = \lang{\W} \cap \lang{\W'}$.}
If $\W$ and $\W'$ both are finitely branching, then so is their product.

\subparagraph*{Upward-Compatible Well-Structured Transition Systems.}

Now we define a labeled version of \emph{well-structured transition systems} as described in~\cite{DBLP:journals/tcs/FinkelS01}, here called upward-compatible well-structured transition system (\UWSTS).
We start by defining the more general notions of quasi ordered {LTS} and \ULTSes.

By a \emph{quasi-ordered LTS} $\W = (S, T, \preceq, I, F)$ we mean an LTS $(S, T, I, F)$ extended with a quasi order $\preceq$ on
configurations. 

An {upward-compatible LTS} (\ULTS) is a quasi-ordered LTS such that the set $F$ of final configurations $F$ is upward closed%
\footnote{Languages defined by upward-closed sets of final configurations are usually called \emph{coverability languages}.} 
with respect to $\preceq$, and the following upward compatibility%
\footnote{In the terminology of~\cite{DBLP:journals/tcs/FinkelS01}, this is strong compatibility.}
is satisfied: 
whenever $s \preceq s'$ and $s \trans{a} r$, then $s' \trans{a} r'$ for some $r' \in S$ such that $r \preceq r'$.
%
%\[
%    \xymatrix
%    {
%        s' \ar@{.>}[r]^{a \quad } & r' \, (\exists)\\
%        s \ar@{}[u]|{\rot{\preceq}} \ar[r]^{a \quad } & r \phantom{\, (\exists)} \ar@{}[u]|{\rot{\preceq} \quad\ }
%    }
%\]
%
In other words, $\preceq$ is a simulation relation.
Upward compatibility extends to words:
\begin{restatable}{rlemma}{restateLemmaHigherBetter}
\label{lem:higher-better}
    For $w\in \Sigma^*$, $s \preceq s'$ with $s \trans{w} r$, we have $s' \trans{w} r'$ for some $r' \in S$ with $r \preceq r'$.
\end{restatable}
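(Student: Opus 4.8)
The plan is to prove this by induction on the length of the word $w$, using the single-step upward compatibility built into the definition of a \ULTS as the engine that propagates the simulation one transition at a time. Intuitively, the statement is nothing more than the iteration of the one-step property along a path, so the induction should go through mechanically.

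For the base case $w = \varepsilon$, I would observe that the only path $s \trans{\varepsilon} r$ forces $r = s$ (recall that the excerpt disallows $\varepsilon$-steps, so a path labeled by $\varepsilon$ has length zero). Hence I simply take $r' \eqdef s'$: we have $s' \trans{\varepsilon} s'$ and $r = s \preceq s' = r'$, as required.

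For the inductive step, I would write $w = a\cdot v$ with $a \in \Sigma$ and $v \in \Sigma^*$, and split the given path as $s \trans{a} t \trans{v} r$ for some intermediate configuration $t$. Applying single-step upward compatibility to $s \preceq s'$ and $s \trans{a} t$ yields a configuration $t'$ with $s' \trans{a} t'$ and $t \preceq t'$. Now the induction hypothesis applies to the strictly shorter word $v$, to the pair $t \preceq t'$, and to the path $t \trans{v} r$, producing some $r'$ with $t' \trans{v} r'$ and $r \preceq r'$. Concatenating the two pieces gives $s' \trans{a} t' \trans{v} r'$, that is $s' \trans{w} r'$ with $r \preceq r'$, which closes the induction.

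I do not expect a genuine obstacle here: the claim is exactly the word-length iteration of the one-step definition, and both the prefix decomposition of the path and the bookkeeping of the order relations are routine. The only point deserving a moment of care is the base case, where one must note that an $\varepsilon$-labeled path is trivial so that no transition has to be matched and $r' = s'$ works; everything else is a direct application of the single-step compatibility.
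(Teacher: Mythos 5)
Your proof is correct and follows essentially the same route as the paper's: induction on the length of $w$, using single-step upward compatibility to propagate the simulation. The only (immaterial) difference is that you peel off the first letter and apply compatibility before the induction hypothesis, whereas the paper peels off the last letter and applies the induction hypothesis to the prefix first; both versions are routine and equally valid.
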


If the order $(S,\preceq)$ in a \ULTS $\W = (S, T, \preceq, I, F)$ is a \wqo, we call $\W$ a \UWSTS.

As $F$ is upward closed, $\W$ is equivalent to its downward closure $\downclosure{\W}$,
obtained from $\W$ by replacing the set $I$ by its (not necessarily finite) 
downward closure $\downclosure{I}$ with respect to $\preceq$,
and by extending the transition relation as follows: $s \trans{a} r$ in $\downclosure{\W}$ if $s \trans{a} r'$ in $\W$
for some $r' \succeq r$.
Note that with respect to the extended transition relation, $\succe{\downclosure{\W}}{X}{a}$ is downward closed for every $X\subseteq S$.
One easily checks that $\downclosure{\W}$ still satisfies upward compatibility, and
every word accepted by $\W$ is also accepted by $\downclosure{\W}$.
The converse implication follows by the following simulation of $\downclosure{\W}$ by $\W$:
\begin{restatable}{rlemma}{restateLemmaHigherBetterDownwardClosure}
\label{lem:higher-better-downclosure}
    Let $w\in \Sigma^*$.
    Whenever $s \preceq s'$ and $s \trans{w} r$ in $\downclosure{\W}$, then $s' \trans{w} r'$ in $\W$ for some $r' \in S$ such that $r \preceq r'$.
\end{restatable}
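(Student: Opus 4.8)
The plan is to proceed by induction on the length of $w$, mirroring Lemma~\ref{lem:higher-better} but in the mixed setting where the source path lives in $\downclosure{\W}$ while the path to be constructed must live in $\W$. The base case $w = \varepsilon$ is immediate: a path $s \trans{\varepsilon} r$ in $\downclosure{\W}$ forces $r = s$, so I would take $r' = s'$, and the hypothesis $s \preceq s'$ gives $r = s \preceq s' = r'$. All the content is concentrated in a single-letter simulation reconciling the two transition relations, which I set up next.

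For the single-letter case, suppose $s \preceq s'$ and $s \trans{a} t$ in $\downclosure{\W}$ for some $a \in \Sigma$. By the definition of the extended transition relation of $\downclosure{\W}$, there is some $\hat{t} \in S$ with $\hat{t} \succeq t$ and $s \trans{a} \hat{t}$ in $\W$. Applying the upward compatibility of $\W$ to $s \preceq s'$ and the $\W$-step $s \trans{a} \hat{t}$, I obtain $t' \in S$ with $s' \trans{a} t'$ in $\W$ and $t' \succeq \hat{t}$. Transitivity of $\preceq$ then yields $t \preceq \hat{t} \preceq t'$, which establishes the single-letter claim: $s' \trans{a} t'$ in $\W$ with $t \preceq t'$.

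For the inductive step I would write $w = a \cdot v$ and decompose the given path as $s \trans{a} t \trans{v} r$ in $\downclosure{\W}$. The single-letter case supplies $s' \trans{a} t'$ in $\W$ with $t \preceq t'$. Since $|v| < |w|$, and we have $t \preceq t'$ together with $t \trans{v} r$ in $\downclosure{\W}$, the induction hypothesis applied to $v$ gives $t' \trans{v} r'$ in $\W$ with $r \preceq r'$. Concatenating the two $\W$-paths produces $s' \trans{a} t' \trans{v} r'$, that is $s' \trans{w} r'$ in $\W$ with $r \preceq r'$, as required.

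The step I expect to require the most care is not any calculation but bookkeeping: one must track in which transition system each step lives, since the relation of $\downclosure{\W}$ is strictly richer, its targets being allowed to undershoot (lie below) a genuine $\W$-target. The single-letter lemma is exactly the ingredient that bridges the two relations, first unfolding the definition of $\downclosure{\W}$ to recover a dominating $\W$-step and then invoking upward compatibility of $\W$; transitivity of $\preceq$ chains the successive dominations, and the remaining concatenation of $\W$-paths is entirely routine.
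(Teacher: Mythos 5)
Your proof is correct and follows essentially the same route as the paper's: an induction on the length of $w$ whose inductive step combines the definition of the extended transition relation of $\downclosure{\W}$ (to recover a dominating genuine $\W$-step) with upward compatibility of $\W$, chained by transitivity of $\preceq$. The only cosmetic difference is that you peel off the first letter ($w = a\cdot v$) and isolate the single-letter simulation as a separate claim, whereas the paper peels off the last letter ($w.a$); both decompositions work because the statement is uniform in the pair $s \preceq s'$.
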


The synchronized product of two \ULTSes $(S,T,\preceq,I,F)$ and $(S',T',\preceq',I',F')$ is still a \ULTS with respect to the product order $\prodpreceq$ defined by $(x,x') \prodpreceq (y,y')$ iff $x \preceq y$ and $x' \preceq' y'$.
Indeed, $F\times F'$ is upward closed wrt.~$\prodpreceq$ and the transition relation satisfies upward compatibility.
Since the product order of two {\wqo}s is again a \wqo, the synchronized product of two \UWSTSes is a \UWSTS.

When $\preceq$ is a $\omega^2$-\wqo, the \UWSTS $\W$ is called \emph{$\omega^2$-\UWSTS}.
When the LTS $(S, T, I, F)$ is finitely branching (resp.~deterministic), the \UWSTS $\W$ is called \emph{finitely-branching \UWSTS}
(resp.~\emph{deterministic \UWSTS}).
In the sequel we speak shortly of \UWSTS-languages (resp.~$\omega^2$-\UWSTS-languages, finitely-branching \UWSTS-languages, etc.).

\subparagraph*{Downward-Compatible Well-Structured Transition Systems.}
A downward-compatible well-structured transition system (\DWSTS) is defined like its upward-compatible counterpart, with two modifications.
First, we assume the set of final configurations $F$ to be downward closed, instead of being upward closed.
Second, instead of upward compatibility, we require its symmetric variant, namely \emph{downward compatibility}:
Whenever $s' \preceq s$ and $s \trans{a} r$, then $s' \trans{a} r'$ for some $r' \in S$ such that $r' \preceq r$.
In other words, the inverse of $\preceq$ is a simulation relation.
Downward compatibility extends to words, which can been shown similar to Lemma~\ref{lem:higher-better}.
%
%\begin{lemma}
%\label{lem:lower-better}
%    Let $w\in \Sigma^*$.
%    Whenever $s' \preceq s$ and $s \trans{w} r$, then $s' \trans{w} r'$ for some $r' \in S$ such that $r' \preceq r$.
%\end{lemma}
%
Symmetrically to the downward closure of a \UWSTS,
we may define the upward closure $\upclosure{\W}$ of a \DWSTS $\W$ that recognizes the same language.

%The language $L(\W)$ of a \DWSTS $\W$ is defined exactly as in case of \UWSTS. 
As above, we also speak of
finitely-branching \DWSTS, or $\omega^2$-\DWSTS. 
We jointly call \UWSTS and \DWSTS just \WSTS.

\subparagraph*{Examples of WSTS.}
Various well known and intensively investigated models of computation happen to be either an \UWSTS or \DWSTS.
The list of natural classes of systems which are \UWSTSes contains, among the others:
vector addition systems (VAS) resp.~Petri nets and their extensions (e.g.~with reset arcs or transfer arcs);
lossy counter machines~\cite{DBLP:conf/stacs/BouajjaniM99};
string rewriting systems based on context-free grammars;
lossy communicating finite state machines (aka lossy channel systems, LCS)~\cite{DBLP:journals/jacm/BrandZ83};
% lossy Turing machines
and many others.
%, as far as we set an upward closed accepting condition.
In the first two models listed above the configurations are ordered by the multiset embedding, while in the remaining two ones
the configurations are ordered by Higman's subsequence ordering.
The natural examples of \UWSTS, including all models listed above, are $\omega^2$-\UWSTSes and, 
when considered without $\varepsilon$-transitions, finitely-branching.
%Thus both Theorems~\ref{thm:regularity} and~\ref{thm:zamiastlemma} apply to them.

\DWSTSes are less common.
A natural source of examples is \emph{gainy} models,
like gainy counter system machines or gainy communicating finite state machines.
For an overview, see e.g.~page 31 of~\cite{DBLP:journals/tcs/FinkelS01}.

% !TEX root = main.tex

\section{Expressibility}
\label{sec:expressibility}

Our proof of regular separability assumes one of the \WSTSes to be deterministic.
In this section, we show that this is no strong restriction.
We compare the languages recognized by different classes of \WSTSes, in particular deterministic ones. 
The findings are summarized in Theorem~\ref{thm:summary-expr},  
%
%For classes of \WSTSes, 
where we use %the symbol 
$\subseteq$ to say that every language of a \WSTS from one class is also the language
of a \WSTS from another class; and we use $\revsubseteq$ to say that every language of a \WSTS from one class is the
reverse of the language of a \WSTS from another class.

\begin{restatable}{rtheorem}{restateTheoremSummaryExpr}
\label{thm:summary-expr}
    The following relations hold between the \WSTS language classes:
    \begin{align*}
        \text{$\omega^2$-\UWSTS} \ & \subseteq  \text{deterministic \UWSTS}
        =  \text{finitely-branching \UWSTS} \, \subseteq \, \text{all \UWSTS}
        \ ,
        \\
        \text{$\omega^2$-\DWSTS} \ &\subseteq  \text{deterministic \DWSTS}
        \subseteq  \text{finitely-branching \DWSTS} \, = \, \text{all \DWSTS}
        \ ,
        \\
        \text{$\omega^2$-\UWSTS} \ & \revsubseteq \, \text{deterministic \DWSTS}
        \ ,
        \\
        \text{$\omega^2$-\DWSTS} \ & \revsubseteq \,  \text{deterministic \UWSTS}
        \ .
    \end{align*}
\end{restatable}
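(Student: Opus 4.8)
The plan is to handle all four lines of the theorem with one tool, a subset (powerset) construction, whose only delicate point is verifying that the resulting, generally infinite, state space is again a \wqo. The trivial inclusions are immediate: a deterministic system is finitely branching, so deterministic \UWSTS $\subseteq$ finitely-branching \UWSTS and deterministic \DWSTS $\subseteq$ finitely-branching \DWSTS, and any finitely-branching \WSTS is in particular a \WSTS. All the content sits in the converse determinization statements and in the two reverse statements.

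\emph{Relation 1.} For finitely-branching \UWSTS $\subseteq$ deterministic \UWSTS I would determinize by tracking the reachable sets $\reach{\W}{w}$. Since $I$ is finite and $\W$ is finitely branching, every $\reach{\W}{w}$ is a finite set; I take these finite sets as configurations, let $\reach{\W}{w}\trans{a}\succe{\W}{\reach{\W}{w}}{a}$ (which is deterministic), declare a set final iff it meets $F$, and order configurations by domination, $X\sqsubseteq Y$ iff $\downclosure{X}\subseteq\downclosure{Y}$. Final sets are upward closed for $\sqsubseteq$ because $F$ is upward closed, upward compatibility of $\W$ lifts (via Lemma~\ref{lem:higher-better}) to $\sqsubseteq$-monotonicity of the successor map, and $\sqsubseteq$ is a \wqo since it is the inclusion order on $\pfindown{S}$, which is a \wqo by Lemma~\ref{c:pfin}. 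For the non-finitely-branching case $\omega^2$-\UWSTS $\subseteq$ deterministic \UWSTS I would run the same construction on the equivalent downward closure $\downclosure{\W}$, where every $\succe{\downclosure{\W}}{X}{a}$ is downward closed; then all reachable configurations live in $\pdown{S}$ ordered by inclusion, which is a \wqo by Lemma~\ref{lem:omega2} exactly because $\preceq$ is an $\omega^2$-\wqo.

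\emph{Relation 2.} The determinization $\omega^2$-\DWSTS $\subseteq$ deterministic \DWSTS is the mirror image: run the subset construction on the equivalent upward closure $\upclosure{\W}$ so that configurations are upward-closed sets, and order them so that larger sets count as smaller (reverse inclusion). Then the final collection $\{X : X\cap F\neq\emptyset\}$ is downward closed, the successor map is monotone, and reverse inclusion on $\pup{S}$ is isomorphic to inclusion on $\pdown{S}$, hence a \wqo by Lemma~\ref{lem:omega2}. It remains to show all \DWSTS $\subseteq$ finitely-branching \DWSTS, and here I would prune rather than determinize: using well-foundedness and the finite-antichain property of a \wqo, I replace $I$ by its finite set $\min(I)$ of minimal elements, and restrict each transition to lead only to the finitely many minimal elements of the upward-closed successor set $\succe{\upclosure{\W}}{s}{a}$. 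Downward compatibility (extended to words) lets one push any accepting run down to minimal configurations, and since $F$ is downward closed this preserves acceptance, so the language is unchanged while the pruned system is finitely branching.

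\emph{Relations 3 and 4.} For the two reverse statements I would avoid reversing transitions (which destroys compatibility) and instead use a backward-reachability subset construction that reads the input as the reversed word: take configurations to be sets $R\subseteq S$, start from $F$, let $R\trans{b}\prede{\W}{R}{b}$, and accept iff $R$ meets $I$. By the identity $\revreach{\W}{a.w}=\prede{\W}{\revreach{\W}{w}}{a}$, after reading a word $u$ this automaton is in state $\revreach{\W}{\rev(u)}$, so it accepts exactly $\{\,\rev(u) : u\in\lang{\W}\,\}$. The crucial closure observation is that $\prede{\W}{R}{b}$ is upward closed whenever $R$ is (by upward compatibility) and downward closed whenever $R$ is (by downward compatibility). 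Hence for an $\omega^2$-\UWSTS the states lie in $\pup{S}$ and, ordered by reverse inclusion, form a deterministic \DWSTS (relation~3), while for an $\omega^2$-\DWSTS the states lie in $\pdown{S}$ and, ordered by inclusion, form a deterministic \UWSTS (relation~4); in both cases the \wqo property is again Lemma~\ref{lem:omega2}.

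I expect the routine automata-theoretic checks (determinism, correctness of the recognized language, and choosing the direction of the order that makes the final collection closed and the successor map compatible) to be mechanical. The real crux, and the reason the hypotheses ``$\omega^2$'' and ``finitely branching'' appear at all, is guaranteeing that the infinite configuration space produced by the powerset construction is a \wqo; this is precisely what Lemmas~\ref{c:pfin} and~\ref{lem:omega2} provide, since finite branching confines the reachable sets to $\pfindown{S}$ whereas the $\omega^2$ hypothesis is exactly what makes the full powersets $\pdown{S}$ and $\pup{S}$ well quasi ordered. A secondary subtlety lies in the language-preservation argument for all \DWSTS $\subseteq$ finitely-branching \DWSTS, where one must verify that restricting to minimal initial and minimal successor configurations never drops an accepted word.
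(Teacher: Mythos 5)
Your proposal is correct and follows essentially the same route as the paper: powerset constructions over $\pfindown{S}$ (via Lemma~\ref{c:pfin}) for the finitely-branching case and over $\pdown{S}$/$\pup{S}$ (via Lemma~\ref{lem:omega2}) for the $\omega^2$ cases, pruning to minimal initial configurations and minimal successors for all-\DWSTS-to-finitely-branching, and a backward (pre-image) powerset construction started from $F$ for the two reverse-equivalence statements. The only cosmetic difference is that for finitely-branching \UWSTS you keep finite sets ordered by domination where the paper keeps their downward closures ordered by inclusion, which is an isomorphic presentation.
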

\noindent
In short, $\omega^2$-\UWSTSes and $\omega^2$-\DWSTSes determinize and reverse-determinize;
finitely-branching \UWSTSes determinize too; and (unrestricted) \DWSTSes are equivalent to finitely-branching \DWSTSes.
In Appendix~\ref{appendix:expressibility}, we formulate and prove a series of lemmata which jointly prove Theorem~\ref{thm:summary-expr}.

% !TEX root = ../main.tex

\section{Regular Separability}

\label{sec:separability}

We now show our first main results: 
Under mild assumptions, disjoint {\DWSTS} resp.~disjoint {\UWSTS} are regular separable. 
Both theorems follow from a technical result that establishes a surprising link between verification and formal language theory:
Every inductive invariant (of a suitable product \WSTS) that has a finite representation can be turned into a regular separator.
With this, the proofs of regular separability are invariant constructions. 

\subparagraph*{Main Results.}

We say that two languages $\L$ and $\K$ over the same alphabet are \emph{regular separable}
if there is a regular language $\R$ that satisfies $\L \subseteq \R$ and $\R\cap \K = \emptyset$. 
For two {\WSTS} $\W$ and $\W'$, we say that they are regular separable if so are their languages.
Disjointness is clearly necessary for regular separability. 
Our first main results show that for most {\WSTSes} disjointness is also sufficient: 

\begin{restatable}{rtheorem}{restateTheoremSepDWSTS}
\label{thm:sepDWSTS}
    Every two disjoint {\DWSTS}, one deterministic, are regular separable.
\end{restatable}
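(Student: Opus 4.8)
The plan is to reduce the claim to the construction of a finitely representable inductive invariant of the synchronised product, and then to turn that invariant into a regular separator. Let $\W=(S,T,\preceq,I,F)$ and $\W'=(S',T',\preceq',I',F')$ be the two \DWSTS, with $\lang{\W}\cap\lang{\W'}=\emptyset$, and assume w.l.o.g.\ that $\W'$ is deterministic. First I would form the synchronised product $\W\times\W'$, which is again downward compatible with downward-closed final set $F\times F'$ and satisfies $\lang{\W\times\W'}=\lang{\W}\cap\lang{\W'}=\emptyset$. As a canonical invariant I take $J=\idcompl{\revreachall{\W\times\W'}}$, the complement of the set of product configurations from which a final configuration is reachable. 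Using that $F\times F'$ is downward closed and that downward compatibility lifts to words, $\revreachall{\W\times\W'}$ is downward closed, so $J$ is upward closed for the product order $\prodpreceq$. A short check shows $J$ is an inductive invariant: it contains $I\times I'$ (otherwise some word would lie in the empty product language), it is closed under transitions (a successor lying in $\revreachall{\W\times\W'}$ would force its predecessor into $\revreachall{\W\times\W'}$ as well), and it is disjoint from $F\times F'$ (every final configuration reaches itself along $\varepsilon$).

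The second step is the finite representation. Since the product order is a \wqo (the product of two {\wqo}s), the upward-closed set $J$ is the upward closure of its finite antichain of minimal elements $\min(J)=\{(m_k,m'_k)\}_{k\le K}$. In contrast to the dual, upward-compatible setting -- where the invariant is downward closed and one must pass to ideal completions to represent it finitely -- here finiteness of the representation is immediate from the \wqo. This finite basis is the data from which the separating automaton will be built.

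The third and central step converts $J$ into a regular separator, and this is where determinism of $\W'$ is essential. Because $\W'$ is deterministic, $\reach{\W'}{w}$ is a single configuration $d(w)$, and the set of product configurations reachable along $w$ factors as $\reach{\W}{w}\times\{d(w)\}$; since $J$ contains all reachable configurations, $\reach{\W}{w}$ is contained in the single slice $J^{d(w)}=\{r\in S\mid (r,d(w))\in J\}$. Each slice is upward closed and is determined by the finite set $\{k\mid m'_k\preceq' d(w)\}$, so there are only finitely many slice-types. I would build a finite automaton whose states are these finitely many slice-types (abstract states read off the basis $\min(J)$), with acceptance triggered when the abstraction witnesses a configuration of $F$. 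Correctness then rests on the two invariant properties: $\lang{\W}\subseteq\R$ because a genuine $\W$-run into $F$ is tracked by the abstraction, and $\R\cap\lang{\W'}=\emptyset$ because $w\in\lang{\W'}$ means $d(w)\in F'$, whence $J^{d(w)}\cap F=\emptyset$ by disjointness of $J$ from $F\times F'$, so no $F$-configuration is abstractly reachable.

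The main obstacle is exactly the well-definedness and finiteness of this abstract automaton: the map sending a configuration to its slice-type is not a priori a congruence for the transition relation, so the abstract transitions cannot simply be read off the basis. To repair this I would lift the systems to the (dual) ideal completion, where the abstraction evolves deterministically and the finitely many invariant slices become genuine automaton states with well-defined transitions; compatibility guarantees that the abstract steps are sound over-approximations, and the finite representation of $J$ guarantees that only finitely many abstract states occur. Verifying these two points -- soundness of the abstract transitions, and that the resulting automaton is both finite and tight enough to realise both inclusions -- is the technical heart of the proof, and is precisely the promised link between inductive invariants in verification and regular separators in language theory.
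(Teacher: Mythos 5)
Your first two steps are sound and in fact run parallel to the paper's own argument: the product of the two \DWSTS has empty language, the complement of the backward-reachable set is an inductive invariant that is upward closed for $\prodpreceq$, and since $\prodpreceq$ is a \wqo this set is the upward closure of a finite antichain of minimal elements --- so, after inverting the orders (turning both \DWSTS into upward-compatible \ULTS), you have a finitely-represented downward-closed invariant. (The paper uses the \emph{least} invariant $\upclosure{\reachall{\prodW}}$ rather than your greatest one, but both are finitely based for the same reason, namely that an upward-closed subset of a \wqo is the upward closure of its finitely many minimal elements; no ideal completion is needed in the downward-compatible case, as you yourself observe before reintroducing it.)

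The gap is in your third step, and you flag it yourself: converting the finite basis into a finite automaton. Your ``slice-type'' states force you into the congruence problem, which you then propose to escape by lifting to a dual ideal completion --- a detour that is both undeveloped and unnecessary. The resolution is to give up on making the abstraction functional. Take as states the basis elements $(m_k,m'_k)$ themselves; make such a state initial if it dominates (in the inverted order) an initial product configuration, final if $m_k\in F$; and define transitions \emph{existentially as an over-approximation}: there is an $a$-transition from one basis element to another whenever some genuine $a$-successor of the first in $\prodW$ is dominated by the second. One must then prove two inductions: (i) $\lang{\W}\subseteq\lang{\A}$, by showing that every configuration reachable in $\prodW$ along $w$ (which exists for every run of $\W$ on $w$, by determinism of $\W'$) is dominated by some automaton state reachable along $w$ --- this uses compatibility and closure of the invariant under successors; and (ii) $\lang{\A}\cap\lang{\W'}=\emptyset$, by showing that every automaton state $(s,s')$ reachable along $w$ dominates the unique configuration $\reach{\W'}{w}$ in its second component --- this is where determinism of $\W'$ is actually consumed --- so that $w\in\lang{\W'}$ would force a final product configuration into the invariant, contradicting its disjointness from $F\times F'$. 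Without these two arguments, the assertion that ``compatibility guarantees that the abstract steps are sound over-approximations'' is precisely the claim that remains to be proved, so the proposal as it stands is incomplete.
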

\begin{restatable}{rtheorem}{restateTheoremSepUWSTS}
\label{thm:sepUWSTS}
    Every two disjoint {\UWSTS}, one finitely branching, are regular separable.
\end{restatable}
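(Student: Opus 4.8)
The plan is to reduce Theorem~\ref{thm:sepUWSTS} to Theorem~\ref{thm:sepDWSTS} using the determinization and reverse-determinization results collected in Theorem~\ref{thm:summary-expr}, so that the real work is done once, on the downward-compatible side. First I would recall the abstract strategy announced in the introduction: given two disjoint \WSTS, form their synchronized product, whose language is the intersection of the two languages and is therefore empty. An \emph{inductive invariant} of this product~--~a set of configurations containing the initial ones, closed under transitions, and disjoint from the final ones~--~always exists (for instance the set of all reachable configurations, which is disjoint from the final ones precisely because the product language is empty). The two-step program of the paper is that (i) any \emph{finitely representable} inductive invariant of a product of upward-compatible systems yields a regular separator, and (ii) in the \WSTS setting such a finite representation always exists via ideal decompositions. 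The present theorem is the upward-compatible instance of this program.

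The key steps, in order, are as follows. Let $\W$ be a \UWSTS and $\W'$ a finitely-branching \UWSTS with $\lang{\W}\cap\lang{\W'}=\emptyset$. Since $\W'$ is finitely branching, by the determinization part of Theorem~\ref{thm:summary-expr} (finitely-branching \UWSTS $=$ deterministic \UWSTS) I may assume without loss of generality that $\W'$ is a deterministic \UWSTS recognizing $\lang{\W'}$. Next I would build the synchronized product $\W\times\W'$; by the remarks preceding the theorem this is again a \UWSTS, with language $\lang{\W}\cap\lang{\W'}=\emptyset$, and it inherits determinism in the second component. The core construction is then to take as inductive invariant the complement of the set of configurations backward reachable from the final configurations, namely $S_\times \setminus \revreachall{\W\times\W'}$; this set contains all initial configurations (because no initial configuration can reach a final one, as the language is empty), it is closed under the transition relation, and it is disjoint from $\prodF$. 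The crucial point is that $\revreachall{\W\times\W'}$ is upward closed by upward compatibility (Lemma~\ref{lem:higher-better} applied backwards), so its complement is downward closed, and because the order is a \wqo this invariant is \emph{finitely representable} by finitely many minimal elements of the backward-reachable upward-closed set. Feeding this finite representation into step~(i)~--~reading off a finite automaton whose states are the finitely many ideals/generators and whose transitions track the $\succe{}{}{a}$ maps~--~produces a regular language $\R$ with $\lang{\W}\subseteq\R$ and $\R\cap\lang{\W'}=\emptyset$.

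I expect the main obstacle to be making step~(i) precise: turning a finitely-represented inductive invariant into an honest regular separator and verifying both inclusions $\lang{\W}\subseteq\R$ and $\R\cap\lang{\W'}=\emptyset$. The delicate issue is that the invariant lives in the product, so I must argue that a run of $\W$ on a word $w\in\lang{\W}$ projects to a path staying inside the invariant (giving $w\in\R$), while any word in $\lang{\W'}$ drives the product into a final configuration and hence out of the invariant (giving $w\notin\R$); the asymmetry between the two systems~--~one arbitrary, one deterministic~--~is exactly what is needed to keep the automaton finite and to make the disjointness direction go through. A secondary subtlety is that a pure \wqo need not make the downward-closed invariant finitely representable as a set; this is precisely where ideal completions enter, representing the downward-closed complement as a finite union of ideals so that the first step applies to states of the ideal completion rather than of $\W\times\W'$ itself. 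Once those two ingredients are in place the theorem follows by instantiating them at the deterministic-second-component product constructed above.
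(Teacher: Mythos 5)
Your proposal is correct and follows essentially the same route as the paper: determinize the finitely-branching factor via Theorem~\ref{thm:summary-expr}, take an inductive invariant of the empty-language product, pass to the ideal completion to obtain a finite representation (Proposition~\ref{prop:inducedinvariant} together with Lemma~\ref{lem:isoprod}), and apply Theorem~\ref{thm:core}. Two internal slips are worth flagging, though you repair both yourself: the opening sentence announces a reduction to Theorem~\ref{thm:sepDWSTS} that the rest of the argument never performs (and which would need the $\omega^2$ hypothesis to reverse-determinize, so it is not available for a general finitely-branching \UWSTS); and the claim that the downward-closed invariant is ``finitely representable by finitely many minimal elements of the backward-reachable upward-closed set'' is false as stated---those minimal elements describe the upward-closed set $\revreachall{\W\times\W'}$, not its complement in the required form $\downclosure{Q}$ with $Q$ a finite set of configurations---which is exactly the gap your final paragraph correctly closes by moving to ideal decompositions and taking the ideals themselves as the states of the separating automaton.
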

\noindent
The results imply that the complement of a non-regular WSTS language cannot be a WSTS language.
They also show that there is no subclass of WSTS languages beyond the regular languages that is closed under complement.
More formally, for a class of languages $\mathcal{C}$, we call a language \emph{doubly $\mathcal{C}$}, if the language as well as its complement are in $\mathcal{C}$.
We obtain the following corollary, generalizing earlier results for Petri net coverability languages~\cite{DBLP:conf/asian/MukundKRS98,DBLP:conf/icalp/MukundKRS98}.
\begin{corollary}
\label{Corollary:Separability}
    (1) Every doubly deterministic \DWSTS language resp.~every doubly finitely-branching \UWSTS language is regular.
    %(1) Every doubly $\omega^2$-\UWSTS resp.~doubly $\omega^2$-\DWSTS resp. doubly finitely branching \UWSTS language is regular.
    (2) No subclass of finitely-branching \UWSTS languages resp.~deterministic \DWSTS languages beyond REG is closed under complement.
\end{corollary}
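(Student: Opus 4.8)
The plan is to derive both parts directly from the main separability theorems, Theorem~\ref{thm:sepDWSTS} and Theorem~\ref{thm:sepUWSTS}, using the elementary observation that a language and its complement are disjoint.

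For Part~(1), I would argue as follows. Let $\L$ be a doubly deterministic \DWSTS language, so that both $\L$ and its complement $\complementof{\L} = \Sigma^* \setminus \L$ are recognized by deterministic {\DWSTSes}. Since $\L \cap \complementof{\L} = \emptyset$ and both are deterministic (so in particular at least one of them is), Theorem~\ref{thm:sepDWSTS} applies with the pair $(\L, \complementof{\L})$: there is a regular language $\R$ with $\L \subseteq \R$ and $\R \cap \complementof{\L} = \emptyset$. The second condition rewrites as $\R \subseteq \L$, so together with $\L \subseteq \R$ we obtain $\R = \L$. Hence $\L$ is regular. The argument for a doubly finitely-branching \UWSTS language is identical, invoking Theorem~\ref{thm:sepUWSTS} in place of Theorem~\ref{thm:sepDWSTS}.

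For Part~(2), I would reduce to Part~(1). Suppose $\mathcal{D}$ is a subclass of finitely-branching \UWSTS languages that is closed under complement, and let $\L \in \mathcal{D}$. Closure under complement gives $\complementof{\L} \in \mathcal{D}$, and since $\mathcal{D}$ consists of finitely-branching \UWSTS languages, $\complementof{\L}$ is itself a finitely-branching \UWSTS language. Thus $\L$ is doubly finitely-branching \UWSTS, and Part~(1) yields that $\L$ is regular. As $\L \in \mathcal{D}$ was arbitrary, $\mathcal{D} \subseteq \mathrm{REG}$, so $\mathcal{D}$ cannot extend beyond the regular languages. The case of a complement-closed subclass of deterministic \DWSTS languages is symmetric, using the \DWSTS half of Part~(1).

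The corollary is a short consequence of the main theorems, so there is no substantial obstacle; the only point requiring care is the containment bookkeeping, namely that $\R \cap \complementof{\L} = \emptyset$ is equivalent to $\R \subseteq \L$, which forces the separator to coincide with $\L$. It is precisely the disjointness of $\L$ and $\complementof{\L}$, combined with the fact that both lie in a class to which the separability theorems apply, that collapses the separator to the language itself and thereby proves regularity.
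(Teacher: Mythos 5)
Your proof is correct and matches the paper's intended argument: the corollary is stated there as an immediate consequence of Theorems~\ref{thm:sepDWSTS} and~\ref{thm:sepUWSTS}, obtained exactly by separating $\L$ from its complement and observing that the separator must then equal $\L$, with Part~(2) reducing to Part~(1) as you describe.
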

The rest of the section is devoted to the proofs.
We will use that the product of two disjoint \WSTS is again a \WSTS with the empty language. 
Whenever the language of a \WSTS is empty, we can find an inductive invariant, a downward-closed set of configurations separating the reachability set from the final configurations. 
Given a finite representation for such an invariant, we show how to turn it into a regular separator, provided one of the \WSTS is deterministic. 
This is our key technical insight, formulated as Theorem~\ref{thm:core} below.

The proof of Theorem~\ref{thm:sepDWSTS} follows directly from this result.
For Theorem~\ref{thm:sepUWSTS}, we consider the ideal completion of an \UWSTS, an extended system in which every downward-closed set has a finite representation.
This in particular applies to inductive invariants, as we show in the form of Proposition~\ref{prop:inducedinvariant}: Any inductive invariant in the original \UWSTS induces an inductive invariant in the ideal completion that has a finite representation.
Combining this result with Theorem~\ref{thm:core} yields the desired proof.

\subparagraph*{Turning Inductive Invariants into Regular Separators.}

Inductive invariants are a standard tool in the safety verification of programs~\cite{MP1995}. 
Technically, an inductive invariant (of a program for a safety property) is a set of program configurations that includes the initial ones, is closed under the transition relation, and is disjoint from the set of undesirable states. 
The following definition lifts the notion to \WSTSes (actually to the more general \ULTSes), where it is natural to require inductive invariants to be downward-closed.

\begin{definition} 
An \emph{inductive invariant} for a \ULTS $\W$ with configurations $S$ is a downward-closed set $X\subseteq S$ with the following three properties:
\begin{align}
    \label{eq:iiI}
    &I \subseteq X
    \ ,
    \\
    \label{eq:iiF}
    &F \cap X = \emptyset
    \ ,
    \\
    \label{eq:iiSucc}
    &\succe{\W}{X}{a} \subseteq X \ \text{for all } a\in\Sigma
    \ .
\end{align}
An inductive invariant $X$ is \emph{finitely-represented} if $X = \downclosure{Q}$ for a finite set $Q \subseteq S$.
\end{definition}
\noindent
By~\eqref{eq:iiI} and~\eqref{eq:iiSucc}, the invariant has to contain the whole reachability set. 
By~\eqref{eq:iiF} and~\eqref{eq:iiSucc}, it has to be disjoint from the predecessors of the final configurations: 
\begin{align*}
%    \label{eq:iiReach}
    &\reachall{\W} \subseteq X
    \ ,
    &
%    \label{eq:iiReveach}
    &\revreachall{\W} \cap X = \emptyset
    \ .
\end{align*}
This means every inductive invariant shows language emptiness. 
Even more, inductive invariants are complete for proving emptiness, like inductive invariants for programs are (relatively) complete for proving safety~\cite{Cook1978}.

\begin{lemma}
\label{Lemma:Invariants}
    Consider \ULTS $\W$.
    Then $\lang{\W}=\emptyset$ iff there is an inductive invariant for~$\W$.
\end{lemma}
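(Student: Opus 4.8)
The plan is to prove both directions of the biconditional in Lemma~\ref{Lemma:Invariants}. The easy direction is ``$\Leftarrow$'': if an inductive invariant $X$ exists, then $\lang{\W}=\emptyset$. Here I would simply combine the three defining properties. By~\eqref{eq:iiI} we have $I\subseteq X$, and by induction on the length of a word $w$ using~\eqref{eq:iiSucc} together with the reachability identity~\eqref{eq:reachsucc}, I would show that $\reach{\W}{w}\subseteq X$ for every $w\in\Sigma^*$; hence the whole reachability set $\reachall{\W}$ lies in $X$. Since $F\cap X=\emptyset$ by~\eqref{eq:iiF}, no reachable configuration is final, so no word can be accepted and $\lang{\W}=\emptyset$. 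This is essentially the computation already sketched in the paragraph preceding the lemma.

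The substantive direction is ``$\Rightarrow$'': assuming $\lang{\W}=\emptyset$, I must construct an inductive invariant. The natural candidate is the downward closure of the reachability set, namely $X \eqdef \downclosure{\reachall{\W}}$. I would verify the three properties in turn. Property~\eqref{eq:iiI} holds because $I\subseteq\reachall{\W}\subseteq X$ (taking $w=\varepsilon$). For property~\eqref{eq:iiF}, suppose toward a contradiction that some $f\in F\cap X$; then $f\preceq s$ for some reachable $s$, and since $F$ is upward closed in a \ULTS, $s\in F$ as well, so $s$ is a reachable final configuration, contradicting $\lang{\W}=\emptyset$. This is the step where upward-closedness of $F$ and the assumption of emptiness interact, and it is the crux of the argument.

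For the closure property~\eqref{eq:iiSucc}, I would take any $x\in X$ with $x\preceq s$ for some $s\in\reachall{\W}$, and any $a\in\Sigma$ with $x\trans{a}r$; I must show $r\in X$. Here upward compatibility is the essential tool: since $x\preceq s$ and $x\trans{a}r$, the \ULTS axiom (or Lemma~\ref{lem:higher-better} applied with the one-letter word $a$) yields $s\trans{a}r'$ for some $r'\succeq r$. As $s$ is reachable, $r'$ is reachable too, so $r'\in\reachall{\W}$, and since $r\preceq r'$ we conclude $r\in\downclosure{\reachall{\W}}=X$. This confirms $\succe{\W}{X}{a}\subseteq X$ and completes the construction.

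The main obstacle is not any single calculation but rather ensuring the downward-closedness of $X$ is compatible with the transition relation, which is exactly what upward compatibility buys us; without it, a successor of a below-reachability configuration need not stay inside the downward closure. I would therefore emphasize that the argument uses only the \ULTS structure (upward closedness of $F$ and upward compatibility), not well-quasi-orderedness, matching the remark in the introduction that this step ``does not even need \WSTS, but only upward compatibility.'' Note also that the constructed invariant $X$ is downward-closed by definition, so it is a legitimate inductive invariant even though it need not be finitely-represented; finiteness of the representation is a separate concern addressed later via ideal completions.
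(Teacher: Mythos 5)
Your proof is correct and follows the same route as the paper: the forward direction uses the candidate $X = \downclosure{\reachall{\W}}$ (exactly the invariant the paper exhibits as the least one), and the backward direction combines Properties~\eqref{eq:iiI}--\eqref{eq:iiSucc} to show $\reachall{\W} \subseteq X$ and hence emptiness. The paper only sketches these verifications; your write-up supplies the details (in particular the use of upward closedness of $F$ and upward compatibility) correctly.
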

\noindent
For completeness, observe that $X = \downclosure{\reachall{\W}}$ is an inductive invariant.
It is the least one wrt.~inclusion. 
There is also a greatest inductive invariant, namely the complement of $\revreachall{\W}$.  
Note that, due to upward compatibility, $\revreachall{\W}$ is always upward-closed.
  
%The advantage of inductive invariants over these two sets is the possibility of approximation. 
%Also invariants larger than the reachability set and smaller than the predecessors of the final states will prove language emptiness.
Other invariants may have the advantage of being easier to represent. 
We will be particularly interested in invariants that are finitely-represented in the sense that they form the downward closure of a finite set.

Here is the core result.
Consider two disjoint \ULTSes. 
Any finitely-represented inductive invariant for the product can be turned into a regular separator.
We will comment on the assumed determinism in a moment.

\begin{theorem}
\label{thm:core}
    Let $\W$ and $\W'$ be disjoint \ULTSes, one of them deterministic, such that $\W \times \W'$ admits a finitely-represented inductive invariant $\downclosure{Q}$. 
    Then $\W$ and $\W'$ are regular separable by the language of a finite automaton with states $Q$.
\end{theorem}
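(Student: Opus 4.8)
The plan is to build, directly from the finite set $Q$, a nondeterministic finite automaton $A$ whose state set is exactly $Q$, and to prove that $\lang{A}$ is the sought separator without ever complementing (so the state set stays $Q$). I first assume that $\W'$ is the deterministic one and treat the other case by the symmetry of regular separability at the end. Write $\W=(S,T,\preceq,I,F)$ and $\W'=(S',T',\preceq',I',F')$, so $Q\subseteq S\times S'$ and the invariant is $\downclosure{Q}$ with respect to the product order $\prodpreceq$. As initial states of $A$ I take those $q\in Q$ that dominate some initial product configuration — these exist since $I\times I'\subseteq\downclosure{Q}$ by~\eqref{eq:iiI}. As transitions $q\trans{a}p$ I put exactly the pairs such that $p$ dominates some $a$-successor of $q$ in $\W\times\W'$, i.e. $q_1\trans{a}t_1$ in $\W$ with $t_1\preceq p_1$ and $q_2\trans{a}t_2$ in $\W'$ with $t_2\preceq' p_2$. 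As accepting states I take those $(q_1,q_2)\in Q$ whose deterministic component is non-final, $q_2\notin F'$.

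The structural fact that makes the two inclusions fit together is that, by~\eqref{eq:iiF}, $\downclosure{Q}$ is disjoint from $F\times F'$, so no element of $Q$ has both components final: every $(q_1,q_2)\in Q$ satisfies $q_1\notin F$ or $q_2\notin F'$. Hence on $Q$, finality of the first component forces non-finality of the second.

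To show $\lang{\W}\subseteq\lang{A}$, I take $w=a_1\cdots a_n\in\lang{\W}$ with a path $i\trans{w}f$ in $\W$, $f\in F$. As $\W'$ is deterministic it has a run on $w$, so there is a product path $(s_0,s_0')\trans{a_1}\cdots\trans{a_n}(s_n,s_n')$ starting in $I\times I'$ with $s_n=f$. I follow it while maintaining a dominating $Q$-state: start from $q^0\in Q$ with $(s_0,s_0')\prodpreceq q^0$ (an initial state of $A$), and at each step use upward compatibility (Lemma~\ref{lem:higher-better}) to lift the move to an $a_{k+1}$-successor $d$ of $q^k$ with $(s_{k+1},s_{k+1}')\prodpreceq d$; by successor-closure~\eqref{eq:iiSucc}, $d\in\downclosure{Q}$, so some $q^{k+1}\in Q$ dominates $d$, giving a valid transition $q^k\trans{a_{k+1}}q^{k+1}$ and preserving domination. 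At the end $(s_n,s_n')\prodpreceq q^n$ with $s_n=f\in F$, so $q^n_1\succeq f$ and hence $q^n_1\in F$ since $F$ is upward closed; the structural fact then gives $q^n_2\notin F'$, so $q^n$ is accepting and $w\in\lang{A}$.

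For $\lang{A}\cap\lang{\W'}=\emptyset$, I observe that along any $A$-run on $w$ the deterministic component only grows: using determinism of $\W'$ together with its upward compatibility, an easy induction gives $\reach{\W'}{w}\preceq' q^n_2$ for the last state $q^n$ of the run. If $w\in\lang{\W'}$ then $\reach{\W'}{w}\in F'$, and since $F'$ is upward closed this forces $q^n_2\in F'$, i.e. $q^n$ is \emph{not} accepting, so $w\notin\lang{A}$. The case where $\W$ rather than $\W'$ is deterministic is handled by running the same construction on the swapped product $\W'\times\W$ (whose invariant is represented by $Q$ with coordinates exchanged), yielding an automaton with state set in bijection with $Q$ that contains $\lang{\W'}$ and excludes $\lang{\W}$ — still a regular separator of the pair. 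I expect the main obstacle to be pinning down the acceptance condition and recognizing the inherent asymmetry: the deterministic system must be the one that is \emph{excluded}, because determinism is exactly what lets a single $Q$-run track its state tightly (the exclusion direction) and guarantees that a product path exists for every word of the other system (the containment direction). The reason the two inclusions do not clash is precisely the invariant's disjointness from $F\times F'$; getting this interplay right, rather than any single computation, is the crux.
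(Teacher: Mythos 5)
Your construction is the paper's own separating automaton (Definition~\ref{def:sepautomaton}): same state set $Q$, same initial states, same over-approximated transition relation, and your two inclusions are proved exactly as in Lemma~\ref{lem:sim}/Proposition~\ref{prop:SepW} and Lemma~\ref{lem:invsimWprime}/Proposition~\ref{prop:SepWprime}. The only deviation is the acceptance condition ($q_2\notin F'$ instead of the paper's $q_1\in F$); since $\downclosure{Q}$ is disjoint from $F\times F'$, your set of accepting states contains the paper's, and both choices yield a valid separator by the same two arguments, so this is essentially the same proof.
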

\noindent
For the definition of the separator, let $\W = (S, T, \preceq, I, F)$ be an arbitrary \ULTS and let $\W' = (S', T', \preceq', I', F')$
be a deterministic one such that their languages are disjoint.
Let
\[
    \prodW = \W \times \W' = (\prodS, \prodT, \prodpreceq, \prodI, \prodF)
\]
be their synchronized product.
By the disjointness of $\W$ and $\W'$ we know that $\lang{\prodW} = \emptyset$.
Let $Q \subseteq \prodS$ be a finite set such that $\downclosure{Q}$ is an inductive invariant.

We  define a finite automaton $\A$ with states $Q$ whose language will contain $L(\W)$ while being disjoint from $L(\W')$. 
The idea is to over-approximate the configurations of $\prodW$ by the elements available in $Q$.
The fact that $\reachall{\prodW} \subseteq \downclosure{Q}$ guarantees that every configuration $(s, s')\in \prodS$
has such a representation. 
Since we seek to approximate the language of $\W$, the final states only refer to the $\W$-component. 
Transitions are approximated existentially. 

\begin{definition}
\label{def:sepautomaton}
    We define the \emph{separating automaton induced by $Q$} to be $\A = (Q,\to,Q_I,Q_F)$.
    A state is initial if it dominates some initial configuration of $\prodW$,
    \(
        Q_I = \setof{(s, s') \in Q}{(i, i') \prodpreceq (s, s') \text{ for some } (i, i') \in \prodI}
        \ .
    \)
    As final states we take pairs whose $\W$-component is final,
    \(
        Q_F = \setof{(s, s') \in Q}{ s \in F}
        \ .
    \)
    Finally, the transition relation in $\A$ is an over-approximation of the transition relation in $\prodW$:
    \[
        (s, s') \trans{a} (r, r') \text{ in } \A \quad \text{ if } (s, s') \trans{a} (t, t') \text{ in } \prodW \text{ for some } (t, t') \prodpreceq (r, r')\ .
    \]
\end{definition}
Figure~\ref{Figure:SepTransRel} illustrates the construction.

\begin{figure}[b]
    \begin{align*}
    \xymatrix
    {
        && (r,r') \in Q \\
        Q \ni (s,s') \ar@/^1pc/@{.>}[rru]^a_{\text{ in } \A} \ar[rr]^a_{\text{ in } \prodW} && 
        (t,t') \in \prodS \ar@{}[u]|{\rot{\prodpreceq} \qquad}
    }
    \end{align*}
    \caption{The transition relation of $\A$.}
    \label{Figure:SepTransRel}
\end{figure}

To show separation, we need to prove $\lang{\W} \subseteq \lang{\A}$ and \mbox{$\lang{\A}\cap \lang{\W'} = \emptyset$}. 
We begin with the former. 
As $\W'$ is deterministic, $\prodW$ contains all computations of $\W$. 
Due to upward compatibility, $\A$ over-approximates the computations in $\prodW$.
Combining these two insights, which are summarized in the next lemma, yields the result.

\begin{restatable}{rlemma}{restateLemmaSim}
\label{lem:sim} 
    (1) For every $s \in \reach{\W}{w}$ there is some \mbox{$(s, s') \in \reach{\prodW}{w}$}.
    (2) For every $(s, s') \in \reach{\prodW}{w}$ there is some $(r, r') \in \reach{\A}{w}$ with $(s, s') \prodpreceq (r, r')$.
\end{restatable}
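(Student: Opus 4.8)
The plan is to prove the two parts separately by induction on the length of $w$, since they rely on different features of the construction: part~(1) on the determinism of $\W'$, and part~(2) on upward compatibility together with the inductive invariant.

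For part~(1), the key observation is that a deterministic $\W'$ can passively accompany any computation of $\W$ without ever blocking. Let $i'$ be the unique initial configuration of $\W'$. Given $s \in \reach{\W}{w}$, fix a path $i \trans{w} s$ in $\W$ with $i \in I$. Since $\W'$ is deterministic, there is a unique configuration $s' = \reach{\W'}{w}$ reached by reading $w$ from $i'$, and at every letter $\W'$ has exactly one outgoing move, so the two paths can be run in lock-step. By the definition of the synchronized product this yields a path $(i, i') \trans{w} (s, s')$ in $\prodW$ with $(i, i') \in \prodI$, whence $(s, s') \in \reach{\prodW}{w}$. Formally this is an induction on $|w|$ using~\eqref{eq:reachsucc}, but the content is exactly that a deterministic second component never prevents a joint move.

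For part~(2) I would prove by induction on $|w|$ a ``simulation up to domination''. The base case $w = \varepsilon$ is immediate: every $(s, s') \in \prodI = \reach{\prodW}{\varepsilon}$ lies in $\reachall{\prodW} \subseteq \downclosure{Q}$ by~\eqref{eq:iiI} and~\eqref{eq:iiSucc}, so it is dominated by some $(r, r') \in Q$, and since $(s, s') \in \prodI$ this $(r, r')$ satisfies the defining condition of $Q_I = \reach{\A}{\varepsilon}$. For the step, write $w = v.a$ and take $(t, t') \in \reach{\prodW}{v.a}$. By~\eqref{eq:reachsucc} there is $(s, s') \in \reach{\prodW}{v}$ with $(s, s') \trans{a} (t, t')$ in $\prodW$, and the induction hypothesis provides $(r, r') \in \reach{\A}{v}$ with $(s, s') \prodpreceq (r, r')$. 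Applying upward compatibility of $\prodW$ (Lemma~\ref{lem:higher-better}) to $(s, s') \prodpreceq (r, r')$ and $(s, s') \trans{a} (t, t')$ yields $(u, u')$ with $(r, r') \trans{a} (u, u')$ in $\prodW$ and $(t, t') \prodpreceq (u, u')$. Now $(u, u') \in \reachall{\prodW} \subseteq \downclosure{Q}$, so there is $(\bar r, \bar r') \in Q$ with $(u, u') \prodpreceq (\bar r, \bar r')$, hence $(t, t') \prodpreceq (\bar r, \bar r')$ by transitivity. Finally, the step $(r, r') \trans{a} (u, u')$ in $\prodW$ together with $(u, u') \prodpreceq (\bar r, \bar r')$ is precisely the premise of the over-approximating transition in Definition~\ref{def:sepautomaton}, so $(r, r') \trans{a} (\bar r, \bar r')$ in $\A$; since $(r, r') \in \reach{\A}{v}$ we obtain $(\bar r, \bar r') \in \reach{\A}{v.a}$, closing the induction.

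The main obstacle is the inductive step of part~(2): a single step in $\prodW$ must be matched by a single step in $\A$, yet the three relevant configurations~--~the true successor $(t, t')$, the successor $(u, u')$ obtained from upward compatibility, and the $Q$-element $(\bar r, \bar r')$ naming it in $\A$~--~are in general all distinct and only linked by the chain $(t, t') \prodpreceq (u, u') \prodpreceq (\bar r, \bar r')$. Keeping these three roles apart, and checking that the existentially-quantified definition of $\A$'s transition relation is exactly what licenses replacing $(u, u')$ by its $Q$-representative, is the delicate point. The repeated appeal to $\reachall{\prodW} \subseteq \downclosure{Q}$ and the transitivity of $\prodpreceq$ are both essential here; the remainder is routine bookkeeping with~\eqref{eq:reachsucc}.
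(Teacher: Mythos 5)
Your overall structure matches the paper's proof: part~(1) uses determinism of $\W'$ so that the product tracks every computation of $\W$, and part~(2) is an induction with the same three-configuration bookkeeping (true successor, successor obtained from upward compatibility, and its $Q$-representative). However, one step in the inductive case of part~(2) fails as justified. You claim $(u,u') \in \reachall{\prodW}$, where $(u,u')$ is the $a$-successor in $\prodW$ of the configuration $(r,r') \in Q$ provided by the induction hypothesis. But $(r,r')$ is merely an element of $Q$ that \emph{dominates} the reachable configuration $(s,s')$; nothing forces $(r,r')$ itself to be reachable in $\prodW$, so its successors need not lie in $\reachall{\prodW}$.

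The conclusion you actually need, $(u,u') \in \downclosure{Q}$, is nevertheless true, but for a different reason: since $(r,r') \in Q \subseteq \downclosure{Q}$, Property~\eqref{eq:iiSucc} gives $\succe{\prodW}{\downclosure{Q}}{a} \subseteq \downclosure{Q}$, hence $(u,u') \in \downclosure{Q}$ and is dominated by some element of $Q$. This is exactly how the paper argues. Your closing remark that the ``repeated appeal to $\reachall{\prodW} \subseteq \downclosure{Q}$'' is essential suggests the slip is conceptual rather than typographical: the inductive step genuinely relies on closure of the invariant under successors, not on containment of the reachability set. With that substitution the argument is correct and coincides with the paper's proof.
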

\begin{proposition}
\label{prop:SepW}
    $\lang{\W} \subseteq \lang{\A}$\ .
\end{proposition}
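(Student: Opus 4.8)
The plan is to read the inclusion off directly from the simulation chain recorded in Lemma~\ref{lem:sim}. First I would fix a word $w \in \lang{\W}$ and unpack acceptance: there are $i \in I$ and $f \in F$ with $i \trans{w} f$ in $\W$, so in particular $f \in \reach{\W}{w}$. The goal is then to exhibit a corresponding accepting run of $\A$ on $w$, i.e.~a state in $\reach{\A}{w}$ that lies in $Q_F$.

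Next I would push this witness through the two parts of Lemma~\ref{lem:sim}. Part~(1) lifts $f$ into the product: since $\W'$ is deterministic, $\prodW$ retains all computations of $\W$, so there is some $(f, s') \in \reach{\prodW}{w}$. Part~(2) then over-approximates this product configuration inside the automaton: there is $(r, r') \in \reach{\A}{w}$ with $(f, s') \prodpreceq (r, r')$. Unfolding the product order, this gives in particular $f \preceq r$; the $\W'$-component plays no role in acceptance.

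It remains to conclude $w \in \lang{\A}$. That $(r, r')$ is reached from an initial state of $\A$ along $w$ is immediate, since $\reach{\A}{w}$ is by definition the set of configurations reachable from $Q_I$ over $w$. For finality I would invoke the defining property of an \ULTS that $F$ is upward closed: from $f \in F$ and $f \preceq r$ we obtain $r \in F$, hence $(r, r') \in Q_F$. Thus $(r, r')$ witnesses an accepting run of $\A$ on $w$, so $w \in \lang{\A}$, as desired.

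The argument is essentially bookkeeping once Lemma~\ref{lem:sim} is available, so I expect no real obstacle here: the content of the proposition has been absorbed into that lemma. The one genuinely load-bearing step is transporting finality up the quasi order, which is exactly where the upward-compatible structure (upward-closed $F$) is used; everything else is unfolding definitions. For completeness I would note that the difficulty hidden in Lemma~\ref{lem:sim} itself comes from Part~(2), which relies on upward compatibility extended to words (Lemma~\ref{lem:higher-better}) together with the invariant inclusion $\reachall{\prodW} \subseteq \downclosure{Q}$ ensuring that every reachable product configuration is dominated by an element of $Q$.
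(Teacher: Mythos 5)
Your proof is correct and follows exactly the route the paper intends: combine Lemma~\ref{lem:sim}(1) and~(2) to lift a final configuration $f \in F \cap \reach{\W}{w}$ to some $(r,r') \in \reach{\A}{w}$ dominating $(f,s')$, then use upward-closedness of $F$ to conclude $r \in F$ and hence $(r,r') \in Q_F$. No gaps; this matches the paper's argument.
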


It remains to prove disjointness of $\lang{\A}$ and $\lang{\W'}$. 
The key observation is that, due to determinism, $\W'$ simulates 
the computations of $\A$ --- in the following sense:
If upon reading a word $\A$ reaches a state $(s, s')$, then the unique computation of $\W'$ will reach a configuration dominated by $s'$.

\begin{restatable}{rlemma}{restateLemmaInvsimWprime}
\label{lem:invsimWprime}
    For every $w \in \Sigma^*$ and every $(s, s') \in \reach{\A}{w}$ we have $\reach{\W'}{w} \preceq' s'$.
\end{restatable}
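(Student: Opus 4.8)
The plan is to prove the statement by induction on the length of $w$, exploiting the determinism of $\W'$ both to make sense of the single configuration $\reach{\W'}{w}$ and, more importantly, to pin down the relevant successor in the inductive step. For the base case $w = \varepsilon$, I would observe that $\reach{\A}{\varepsilon} = Q_I$. If $(s, s') \in Q_I$, then by the definition of $Q_I$ there is some $(i, i') \in \prodI = I \times I'$ with $(i, i') \prodpreceq (s, s')$, and in particular $i' \preceq' s'$. Since $\W'$ is deterministic, $I'$ is a singleton, so $\reach{\W'}{\varepsilon} = i'$, and hence $\reach{\W'}{\varepsilon} \preceq' s'$ as required.

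For the inductive step I would write $w = v.a$ and take $(s, s') \in \reach{\A}{v.a}$. There is a predecessor $(p, p') \in \reach{\A}{v}$ with $(p, p') \trans{a} (s, s')$ in $\A$. Unfolding the definition of the transition relation of $\A$, this means $(p, p') \trans{a} (t, t')$ in $\prodW$ for some $(t, t') \prodpreceq (s, s')$; by the definition of the synchronized product, the $\W'$-projection gives $p' \trans{a} t'$ in $\W'$, where $t' \preceq' s'$. The induction hypothesis yields $\reach{\W'}{v} \preceq' p'$. Since $\W'$ is deterministic, $\reach{\W'}{v} \trans{a} \reach{\W'}{v.a}$ is the unique $a$-step out of $\reach{\W'}{v}$. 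Applying upward compatibility of $\W'$ to $\reach{\W'}{v} \preceq' p'$ together with this step produces some $q'$ with $p' \trans{a} q'$ and $\reach{\W'}{v.a} \preceq' q'$. Determinism of $\W'$ now forces the $a$-successor of $p'$ to be unique, so $q' = t'$, whence $\reach{\W'}{v.a} \preceq' t' \preceq' s'$, and transitivity of $\preceq'$ closes the step.

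The main obstacle — or rather the single point that must not be glossed over — is exactly this use of determinism. Upward compatibility of $\W'$ is only a simulation condition: it guarantees \emph{some} successor $q'$ of $p'$ dominating $\reach{\W'}{v.a}$, with no control over \emph{which} successor. Without determinism of $\W'$ we could not identify $q'$ with the particular successor $t'$ that the $\A$-transition forces us to reason about, and the chain $\reach{\W'}{v.a} \preceq' t' \preceq' s'$ would break. This is precisely why the determinism hypothesis on $\W'$ is essential here, mirroring the way determinism was used to set up the automaton $\A$ in the first place. Everything else is routine unfolding of the definitions of $\A$, of the synchronized product, and of the reachability sets.
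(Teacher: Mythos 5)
Your proof is correct and follows essentially the same route as the paper's: induction on $w$, unfolding the definition of $\A$'s transition relation to obtain the intermediate pair $(t,t')$ in $\prodW$, and then using upward compatibility together with determinism of $\W'$ to identify the simulating successor of $p'$ with $t'$. Your explicit remark on why determinism is the load-bearing hypothesis matches exactly how the paper uses it.
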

\noindent
With this lemma we can show disjointness.
Towards a contradiction, suppose some word $w$ satisfies $w\in \lang{\A} \cap \lang{\W'}$.
As $w\in \lang{\A}$, there is a configuration $(s, s') \in \reach{\A}{w}$ with $s \in F$.
As $w\in \lang{\W'}$, the unique configuration $\reach{\W'}{w}$ belongs to $F'$. 
With the previous lemma and the fact that $F'$ is upward-closed, we conclude $s'\in F'$.
Together, $(s, s')\in \prodF$, which contradicts the fact that $\downclosure{Q}$ is an inductive invariant, Property~\eqref{eq:iiF}.

\begin{proposition}
\label{prop:SepWprime}
    $\lang{\A} \cap \lang{\W'} = \emptyset$\ .
\end{proposition}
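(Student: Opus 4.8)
The plan is to argue by contradiction, leaning entirely on the simulation property recorded in Lemma~\ref{lem:invsimWprime} together with the defining properties of an inductive invariant. Suppose, for contradiction, that some word $w$ lies in both languages, $w \in \lang{\A} \cap \lang{\W'}$. I would first unpack each membership into a concrete configuration, and then show that these two configurations combine into a final configuration of $\prodW$ sitting inside the invariant $\downclosure{Q}$ --- contradicting its disjointness from $\prodF$.

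Concretely, from $w \in \lang{\A}$ I obtain a run of $\A$ over $w$ from an initial state to a final state; by the definition of $Q_F$ its endpoint is a pair $(s, s') \in \reach{\A}{w}$ whose first component satisfies $s \in F$. From $w \in \lang{\W'}$, and since $\W'$ is deterministic, there is the unique configuration $\reach{\W'}{w}$, which belongs to $F'$ because $w$ is accepted. Lemma~\ref{lem:invsimWprime}, applied to the state $(s, s') \in \reach{\A}{w}$, yields $\reach{\W'}{w} \preceq' s'$. As $\W'$ is a \ULTS, its set of final configurations $F'$ is upward closed with respect to $\preceq'$, so $\reach{\W'}{w} \in F'$ together with $\reach{\W'}{w} \preceq' s'$ forces $s' \in F'$. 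Combining $s \in F$ with $s' \in F'$ gives $(s, s') \in F \times F' = \prodF$. But $(s, s') \in Q \subseteq \downclosure{Q}$, so $(s,s')$ witnesses $\prodF \cap \downclosure{Q} \neq \emptyset$, contradicting Property~\eqref{eq:iiF} of the inductive invariant. This contradiction establishes $\lang{\A} \cap \lang{\W'} = \emptyset$.

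I do not expect a genuine obstacle here: once Lemma~\ref{lem:invsimWprime} is available, the proposition is a short unfolding of definitions. The only points needing care are the two appeals to structure of $\W'$. Determinism is what guarantees the single configuration $\reach{\W'}{w}$ to which the lemma refers, and upward-closedness of $F'$ is exactly what lets the $\preceq'$-domination push $\reach{\W'}{w} \in F'$ up to $s' \in F'$. The substantive content has been moved into Lemma~\ref{lem:invsimWprime}, whose proof --- the real simulation of $\A$ by the deterministic $\W'$ --- is where I would expect the work to lie; for the proposition itself it suffices to invoke it.
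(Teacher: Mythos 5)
Your proof is correct and follows essentially the same route as the paper: contradiction via a word $w$ in both languages, extracting $(s,s') \in \reach{\A}{w}$ with $s \in F$, applying Lemma~\ref{lem:invsimWprime} and the upward-closedness of $F'$ to get $s' \in F'$, and deriving $(s,s') \in \prodF \cap \downclosure{Q}$, contradicting Property~\eqref{eq:iiF}. No differences worth noting.
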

\noindent
Together, Proposition~\ref{prop:SepW} and~\ref{prop:SepWprime} show Theorem~\ref{thm:core}.
With Theorem~\ref{thm:core} at hand, the proof of regular separability for \DWSTSes follows easily.

\begin{proof}[Proof of Theorem~\ref{thm:sepDWSTS}]
%    Our goal is to apply Theorem~\ref{thm:core}.
    Consider an arbitrary \DWSTS $\W = (S, T, \preceq, I, F)$ and a deterministic one $\W' = (S', T', \preceq', I', F')$.
    We start with the observation that the inversed versions of $\W$ and $\W'$, namely with the orders $\preceq^{-1}$ and 
    $(\preceq')^{-1}$ and denoted by $\W^{-1}$ and $(\W')^{-1}$, are \ULTSes.
    We claim that these \ULTSes satisfy the assumptions of Theorem~\ref{thm:core}.
    The language of \mbox{$\prodW^{-1}=\W^{-1} \times (\W')^{-1}$} is empty since the language of $\prodW=\W \times \W'$ is empty and inversion does not change the language, $\lang{\W} = \lang{\W^{-1}}$ and similar for $\W'$. 
    Inversion also does not influence determinism. 
    
    It remains to find an inductive invariant of $\prodW^{-1}$ that is finitely represented.
    We claim that   
    \(
    X\ =\ \downclosureinv{\reachall{\prodW^{-1}}}
    \)
    is a suitable choice.
    The subscript indicates that the downward closure is computed relative to the quasi order of $\prodW^{-1}$.
    As the language of $\prodW^{-1}$ is empty, $X$ is an inductive invariant by Lemma~\ref{Lemma:Invariants}.
    For the finite representation, note that inversion does not change the transition relation. 
    Hence, $\prodW$ and $\prodW^{-1}$ reach the same configurations, 
    \(
    \reachall{\prodW^{-1}}=\reachall{\prodW}=Z\ .
    \)   
    With the definition of inversion, $X=\downclosureinv{Z} = \upclosure{Z}$ holds. 
    %\begin{align*}
    %X=\downclosureinv{\reachall{\prodW^{-1}}}=\upclosure{\reachall{\prodW}}\ .
    %\end{align*}
    Moreover, $\upclosure{Z} = \upclosure{\min(Z)}$, with minimum and upward closure computed relative to $\prodW$. 
    Since the configurations of $\prodW$ are well quasi ordered, $\min(Z)$ is finite. 
    Another application of inversion yields $X=\upclosure{\min(Z)}=\downclosureinv{\min(Z)}$.  
    Hence, $X$ is a finitely-represented downward-closed subset of $\prodW^{-1}$. 
    
    By Theorem~\ref{thm:core}, the languages of $\W^{-1}$ and $(\W')^{-1}$ are regular separable and so are the languages of $\W$ and $\W'$. 
\end{proof}

\subparagraph*{Ideal Completions of \UWSTS.}
The proof of regular separability for \UWSTS is more involved. 
Here, we need the notion of ideal completions~\cite{DBLP:conf/icalp/BlondinFM14,FGII}.
We show that any invariant for a \WSTS yields a 
finitely-represented invariant for the corresponding ideal completion. 
Theorem~\ref{thm:sepUWSTS} follows from this. 

An \emph{ideal} in a  \wqo $(X, \preceq)$ is a non-empty downward-closed subset $Z\subseteq X$ which is directed:
For every $z, z'\in Z$ there is a $z'' \in Z$ with $z  \preceq z''$ and $z' \preceq z''$.  
Every downward-closed set decomposes into finitely many ideals. 
In fact, the finite antichain property is sufficient and necessary for this. 

\begin{lemma}[\cite{KP92,FGII,demystifying}]
\label{lem:idec}
    In a \wqo, every downward-closed set is a finite union of ideals.
\end{lemma}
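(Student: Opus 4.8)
The plan is to argue by Noetherian (well-founded) induction on the downward-closed sets themselves, ordered by inclusion. Concretely, I would first show that $(\pdown{X}, \subseteq)$ contains no infinite strictly descending chain, and then prove the statement by induction along this well-founded order: a downward-closed set that happens to be directed is already a single ideal, while a non-directed one can be split into two strictly smaller downward-closed sets, to which the induction hypothesis applies.

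For the well-foundedness, suppose towards a contradiction that $D_1 \supsetneq D_2 \supsetneq \cdots$ is an infinite strictly descending chain of downward-closed sets, and pick $x_i \in D_i \setminus D_{i+1}$ for each $i$. For $i < j$ we have $x_j \in D_j \subseteq D_{i+1}$ whereas $x_i \notin D_{i+1}$; since $D_{i+1}$ is downward-closed, $x_i \preceq x_j$ would force $x_i \in D_{i+1}$, a contradiction. Hence $x_i \not\preceq x_j$ for all $i < j$, so $(x_i)_i$ is a bad sequence, contradicting the assumption that $(X, \preceq)$ is a \wqo. This establishes that $(\pdown{X}, \subseteq)$ is well-founded, which is what licenses the induction.

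For the inductive step, fix a downward-closed set $D$ and assume every downward-closed proper subset of $D$ is a finite union of ideals. If $D$ is empty it is the empty union, and if $D$ is nonempty and directed it is itself an ideal; in either case we are done. Otherwise $D$ is not directed, so there are $a, b \in D$ with no common upper bound inside $D$. I would then set
\[
  D_a = \setof{x \in D}{a \not\preceq x} \quad\text{and}\quad D_b = \setof{x \in D}{b \not\preceq x}.
\]
Both sets are downward-closed (if $a \preceq y \preceq x$ then $a \preceq x$, so the defining condition is inherited downwards), and both are proper subsets of $D$, since $a \notin D_a$ and $b \notin D_b$. Moreover $D = D_a \cup D_b$: any $x$ lying in neither set would satisfy $a \preceq x$ and $b \preceq x$, making $x$ a common upper bound of $a$ and $b$ in $D$, contrary to the choice of $a, b$. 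Applying the induction hypothesis to $D_a$ and $D_b$ expresses each as a finite union of ideals, and hence so is $D = D_a \cup D_b$.

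The step I expect to be delicate is exactly the finiteness, i.e.\ ruling out decompositions into infinitely many ideals; this is where the full strength of the \wqo\ hypothesis enters, first through the bad-sequence argument for well-foundedness and then through the split of a non-directed set into two \emph{strictly smaller} downward-closed pieces. Note that only the forward direction (assuming $(X,\preceq)$ is a \wqo) is needed here, and that a single bad-sequence contradiction implicitly exploits both the well-foundedness and the finite-antichain property of $(X,\preceq)$.
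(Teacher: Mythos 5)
The paper does not prove this lemma at all---it is imported from the cited literature (Kabil--Pouzet, Finkel--Goubault-Larrecq, and the ``demystifying'' paper)---so there is no in-paper argument to compare against; I can only assess your proof on its own terms, and it is correct. Your argument is in fact the standard one: downward-closed sets are the closed sets of the Alexandrov topology, ideals are exactly the irreducible ones, and you prove the space Noetherian (descending chain condition on closed sets, via the bad-sequence argument) and then run well-founded induction, splitting a non-directed $D$ as $D_a \cup D_b$ with $D_a = \setof{x \in D}{a \not\preceq x}$. All the individual steps check out: $D_a$ is downward-closed because $a \preceq y \preceq x$ gives $a \preceq x$, the sets are proper since $a \notin D_a$, the union is all of $D$ precisely because $a,b$ have no common upper bound in $D$, and the induction is licensed by well-foundedness of $(\pdown{X},\subseteq)$. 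One remark for context rather than a correction: the paper's surrounding prose observes that the finite antichain property alone is already sufficient (and necessary) for the decomposition; your proof does not yield that sharper fact, since well-foundedness of $(\pdown{X},\subseteq)$ genuinely uses the full \wqo\ hypothesis (e.g.\ $\Z$ with its usual order has the finite antichain property but admits an infinite strictly descending chain of downward-closed sets). As the lemma is stated only for {\wqo}s, this is not a gap.
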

\noindent
We use $\idec{X}{Z}$ to denote the set of inclusion-maximal ideals in $Z$.
By the above lemma, $\idec{X}{Z}$ is always finite and 
\begin{align}
\label{eq:union}
    Z = \bigcup \idec{X}{Z}\ .
\end{align}
\noindent
We will also make use of the fact that ideals are irreducible in the following sense.

\begin{lemma}[\cite{KP92,FGII,demystifying}]
\label{lem:idealsirred}
    Let $(X, \preceq)$ be a \wqo. 
    If $Z \subseteq X$ is downward-closed and $I \subseteq Z$ is an ideal, then
    $I \subseteq J$ for some $J \in \idec{X}{Z}$.
\end{lemma}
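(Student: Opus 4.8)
The plan is to derive the statement directly from the decomposition of $Z$ together with the defining properties of an ideal, namely non-emptiness, downward-closure, and directedness. First I would invoke Lemma~\ref{lem:idec} to write the downward-closed set $Z$ as a finite union of its inclusion-maximal ideals, say $\idec{X}{Z} = \{J_1, \ldots, J_n\}$, so that by~\eqref{eq:union} we have $Z = J_1 \cup \cdots \cup J_n$. Since $I$ is an ideal, it is non-empty, and as $I \subseteq Z$ this forces $n \geq 1$. The goal then becomes: an ideal contained in a finite union of ideals must already be contained in one of them.

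The argument I would run is by contradiction. Suppose $I \not\subseteq J_k$ for every $k \in \{1, \ldots, n\}$. Then for each $k$ I may choose a witness $x_k \in I \setminus J_k$. This gives a finite subset $\{x_1, \ldots, x_n\} \subseteq I$. Here I would use directedness of $I$: the definition provides an upper bound in $I$ for any two elements, and a straightforward induction on $n$ extends this to the finite collection, yielding a single element $x \in I$ with $x_k \preceq x$ for all $k$.

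Now the contradiction is immediate. Since $x \in I \subseteq Z = J_1 \cup \cdots \cup J_n$, there is some index $j$ with $x \in J_j$. As $J_j$ is downward-closed and $x_j \preceq x$, we conclude $x_j \in J_j$, contradicting the choice $x_j \in I \setminus J_j$. Hence $I \subseteq J_k$ for some $k$, and that $J_k$ lies in $\idec{X}{Z}$ by construction.

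The only real content, and thus the step I would treat most carefully, is the passage from pairwise upper bounds to a common upper bound for the finitely many witnesses $x_1, \ldots, x_n$; everything else is bookkeeping with downward-closure and the decomposition from Lemma~\ref{lem:idec}. Since this step is a finite induction using directedness, no appeal to the wqo property beyond what Lemma~\ref{lem:idec} already provides is needed, so I expect no genuine obstacle — the proof is short once the decomposition is in place.
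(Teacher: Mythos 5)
Your proof is correct. The paper does not actually prove this lemma --- it is quoted from the literature \cite{KP92,FGII,demystifying} --- but your argument is the standard one for the irreducibility of ideals: pick a witness $x_k \in I \setminus J_k$ for each maximal ideal in the decomposition, use directedness of $I$ (extended to finitely many elements by induction, which is the only step with real content and which you handle correctly) to get a common upper bound $x \in I$, and then use downward-closure of whichever $J_j$ contains $x$ to pull the witness $x_j$ back into $J_j$, a contradiction. The only caveat worth mentioning is that you rely on the identity $Z = \bigcup \idec{X}{Z}$ from \eqref{eq:union}; the paper asserts this prior to the lemma, so you are entitled to it, but in a fully self-contained development one would run your argument against an arbitrary finite ideal decomposition of $Z$ supplied by Lemma~\ref{lem:idec} (it works verbatim, since maximality of the $J_k$ is never used) and then observe separately that every ideal of $Z$ sits below a maximal one.
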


The \emph{ideal completion} $(\barX, \subseteq)$ of $(X, \preceq)$ has as elements all ideals
in $X$.
The order is inclusion. 
The ideal completion $\barX$ can be seen as extension of $X$; indeed, every element $x\in X$ is represented
by $\downclosure{\{x\}} \in \barX$, and inclusion among such representations
coincides with the original quasi order~$\preceq$.
Later, we will also need general ideals that may not be the downward closure of a single element.

%Intuitively, the goal is to invoke Lemma~\ref{lem:idec}.  
In \cite{FGII,DBLP:conf/icalp/BlondinFM14}, the notion has been lifted to \WSTS\ $\W=(S, T, \preceq, I, F)$.
The ideal completion of $\W$ is the \ULTS $\idcompl{\W}$, where the given $\wqo$ is replaced by its ideal completion. 
%So configurations of $\idcompl{W}$ are ideals.
The initial configurations are the ideals in the decomposition of $\downclosure{I}$.  
The transition relation is defined similarly, by decomposing $\downclosure{\succe{\W}{X}{a}}$, with $X$ an ideal. 
The final configurations are the ideals that intersect $F$. 

\begin{definition}[\cite{FGII,DBLP:conf/icalp/BlondinFM14}]
    For an \UWSTS $\W = (S, T, \preceq, I, F)$, we define its \emph{ideal completion} 
    $\idcompl{\W} = (\barS, \barT, \subseteq, \barI, \barF)$, where 
    $(\barS, \subseteq)$ is the ideal completion of $(S, \preceq)$, 
    the transition relation is defined by
    \(
        \succe{\idcompl{\W}}{X}{a} \ = \ \idec{S}{\downclosure{\succe{\W}{X}{a}}}\ ,
    \)
    %the initial configurations 
    $\barI = \idec{S}{\downclosure{I}}$, and % contains ideals in the ideal decomposition of $\downclosure{I}$,
    %the final configurations 
    $\barF = \setof{X\in\barS}{X\cap F \neq \emptyset}$. % are those ideals which contain some final configuration of $\W$,
    %and the transition relation is defined essentially as the downward closure of the direct image:
\end{definition}
\noindent
Using upward compatibility in $\W$, language equivalence holds and determinism is preserved.

\begin{restatable}{rlemma}{restateLemmaIdcompl}
\label{lem:idcompl}
  The ideal completion $\idcompl{\W}$ of an \UWSTS $\W$ is a \ULTS.
  We have $\lang{\idcompl{\W}}=\lang{\W}$. 
  If $\W$ is deterministic, then so is $\idcompl{\W}$.
\end{restatable}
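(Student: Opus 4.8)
The plan is to establish the three assertions of Lemma~\ref{lem:idcompl} separately, reusing the defining properties of the ideal completion together with upward compatibility in the original \UWSTS $\W$.

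First I would verify that $\idcompl{\W}$ is a \ULTS. By construction, its order is inclusion on ideals, which is a quasi order, so it remains to check that $\barF$ is upward closed and that upward compatibility holds. For $\barF$: if $X \in \barF$, so $X \cap F \neq \emptyset$, and $X \subseteq Y$, then $Y \cap F \supseteq X \cap F \neq \emptyset$, hence $Y \in \barF$ (note that upward closure of $F$ in $\W$ itself is not even needed here, only the definition of $\barF$). For upward compatibility, suppose $X \subseteq Y$ and $X \trans{a} X'$ in $\idcompl{\W}$, meaning $X' \in \idec{S}{\downclosure{\succe{\W}{X}{a}}}$. Since $X \subseteq Y$, monotonicity of $\succe{\cdot}{\cdot}{a}$ under upward compatibility gives $\succe{\W}{X}{a} \subseteq \downclosure{\succe{\W}{Y}{a}}$, hence $\downclosure{\succe{\W}{X}{a}} \subseteq \downclosure{\succe{\W}{Y}{a}}$. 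Thus $X'$ is an ideal contained in the downward-closed set $\downclosure{\succe{\W}{Y}{a}}$, so by irreducibility (Lemma~\ref{lem:idealsirred}) there is $Y' \in \idec{S}{\downclosure{\succe{\W}{Y}{a}}}$ with $X' \subseteq Y'$. This $Y'$ satisfies $Y \trans{a} Y'$ in $\idcompl{\W}$ and $X' \subseteq Y'$, which is exactly upward compatibility for inclusion.

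Next I would prove the language equivalence $\lang{\idcompl{\W}} = \lang{\W}$, which I expect to be the main obstacle. The natural approach is to relate paths in $\W$ to paths in $\idcompl{\W}$ by a simulation in each direction, tracking the fact that every element $s \in S$ is represented by $\downclosure{\{s\}}$ and that reachable ideals over-approximate reachable configurations. For the inclusion $\lang{\W} \subseteq \lang{\idcompl{\W}}$, given an accepting path in $\W$ over $w$, I would show by induction on the length of $w$ that each reached configuration is contained in some reached ideal of $\idcompl{\W}$; the successor clause follows because $\succe{\W}{X}{a}$ lands inside $\downclosure{\succe{\W}{X}{a}}$, whose maximal ideals are exactly the $a$-successors of $X$, and irreducibility selects a dominating ideal. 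Acceptance transfers since an ideal containing a final configuration intersects $F$ and hence lies in $\barF$. For the converse $\lang{\idcompl{\W}} \subseteq \lang{\W}$, I would argue that any configuration witnessing membership in a reachable final ideal can be pulled back to an actual path in $\W$; here the key point is that ideals are built as downward closures of successor sets in $\W$, so an element of a reached ideal is dominated by a genuine $\W$-successor, and upward compatibility (Lemma~\ref{lem:higher-better}) lets us lift this domination along the whole word to obtain an accepting $\W$-path. The care needed in matching the ideal decomposition steps with the downward closures, and in ensuring the final-configuration witness in the ideal is realized by an honest computation, is what makes this direction delicate.

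Finally, for preservation of determinism, assume $\W$ is deterministic. Then $I$ is a singleton, so $\downclosure{I}$ is a single ideal and $\barI = \idec{S}{\downclosure{I}}$ is a singleton, giving a unique initial state. For the successor condition, fix an ideal $X$ and a letter $a$; I would show that $\downclosure{\succe{\W}{X}{a}}$ is itself directed, hence a single ideal, so that $\idec{S}{\downclosure{\succe{\W}{X}{a}}}$ is a singleton and $X$ has a unique $a$-successor in $\idcompl{\W}$. Directedness follows because, for $s,t \in X$, directedness of the ideal $X$ yields an upper bound $u \in X$ with $s,t \preceq u$; applying upward compatibility to the unique transitions $s \trans{a} \succe{\W}{s}{a}$ and $t \trans{a} \succe{\W}{t}{a}$ shows both successors are dominated by $\succe{\W}{u}{a}$, providing a common upper bound inside $\downclosure{\succe{\W}{X}{a}}$. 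This establishes all three claims.
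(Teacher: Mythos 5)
Your proposal is correct and follows essentially the same route as the paper: the ULTS property via Lemma~\ref{lem:idealsirred}, language equivalence via an induction on $w$ relating reached ideals to reached configurations (the paper packages your two simulation directions into the single invariant $\bigcup \reach{\idcompl{\W}}{w} = \downclosure{\reach{\W}{w}}$), and determinism via directedness of $\succe{\W}{X}{a}$ using upward compatibility exactly as you describe.
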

\noindent
As a matter of fact, $\idcompl{\W}$ is even finitely branching, but we do not need this property.

The purpose of using ideal completions is to make it easier to find inductive invariants that are finitely represented. 
%The argument is as follows. 
Assume the given \UWSTS $\W$ has an inductive invariant $X$, not necessarily finitely represented. 
By definition, $X$ is downward-closed.
Thus, by Lemma~\ref{lem:idec}, $X$ is a \emph{finite} union of ideals.
These ideals are configurations of the ideal completion $\idcompl{\W}$. 
To turn $\idec{S}{X}$ into an inductive invariant of $\idcompl{\W}$, it remains to take the downward closure of the set. 
As the order among ideals is inclusion, this does not add configurations.
In short, an inductive invariant for $\W$ induces a finitely-represented inductive invariant for $\idcompl{\W}$.

\begin{restatable}{rproposition}{restatePropositionInducedinvariant}
\label{prop:inducedinvariant}
    If $X \subseteq S$ is an inductive invariant of $\W$, 
    $\downclosure{\idec{S}{X}}$ is a finitely-represented inductive invariant of $\idcompl{\W}$.
\end{restatable}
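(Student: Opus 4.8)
The plan is to check, directly and in turn, the three defining properties of an inductive invariant for the set $\barX \eqdef \downclosure{\idec{S}{X}}$ inside $\idcompl{\W}$, after first recording that it is finitely represented. For the latter, since $X$ is downward-closed, Lemma~\ref{lem:idec} guarantees that $\idec{S}{X}$ is a \emph{finite} set of ideals, i.e.~a finite set of configurations of $\idcompl{\W}$. Its downward closure in the inclusion order of $\idcompl{\W}$ is, by definition, both downward-closed and finitely represented, so no further work is needed here.

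The recurring tool in all three checks will be the ideal-irreducibility Lemma~\ref{lem:idealsirred}: any ideal $J$ contained in the downward-closed set $X$ satisfies $J \subseteq J'$ for some maximal ideal $J' \in \idec{S}{X}$, and hence $J \in \barX$. With this in hand, the initial and final conditions are short. For $\barI \subseteq \barX$, I would recall $\barI = \idec{S}{\downclosure{I}}$ and combine $I \subseteq X$ (Property~\eqref{eq:iiI} for $X$) with downward-closedness of $X$ to obtain $\downclosure{I} \subseteq X$; every initial ideal is then an ideal inside $X$, so it lies in $\barX$ by irreducibility. For $\barF \cap \barX = \emptyset$, I would argue by contradiction: a configuration $J \in \barX$ satisfies $J \subseteq J' \subseteq X$ for some $J' \in \idec{S}{X}$, so $J \subseteq X$; if moreover $J \in \barF$, i.e.~$J \cap F \neq \emptyset$, this contradicts $X \cap F = \emptyset$ (Property~\eqref{eq:iiF} for $X$).

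The step I expect to need the most care is successor-closure, as it intertwines the definition of transitions in $\idcompl{\W}$ with monotonicity and irreducibility. Fixing $J \in \barX$ and $a \in \Sigma$, I would first note $J \subseteq X$ as above, so monotonicity of successors gives $\succe{\W}{J}{a} \subseteq \succe{\W}{X}{a} \subseteq X$ by Property~\eqref{eq:iiSucc}; downward-closedness of $X$ then yields $\downclosure{\succe{\W}{J}{a}} \subseteq X$. By definition, the $a$-successors of $J$ in $\idcompl{\W}$ are precisely the maximal ideals $\idec{S}{\downclosure{\succe{\W}{J}{a}}}$, each of which is an ideal contained in $X$ and hence belongs to $\barX$ by Lemma~\ref{lem:idealsirred}. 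Taking the union over all $J \in \barX$ gives $\succe{\idcompl{\W}}{\barX}{a} \subseteq \barX$. The one point to keep straight is that the order of $\idcompl{\W}$ is inclusion, so domination of ideals means set inclusion, and that successors in $\idcompl{\W}$ are defined one ideal at a time, which is exactly why it suffices to treat a single $J$ at each step.
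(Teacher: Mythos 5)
Your proposal is correct and follows essentially the same route as the paper's proof: establish finiteness via Lemma~\ref{lem:idec}, observe that every ideal contained in $X$ lies in $\downclosure{\idec{S}{X}}$ (you invoke Lemma~\ref{lem:idealsirred} explicitly where the paper uses the same fact implicitly), and then verify Properties~\eqref{eq:iiI}, \eqref{eq:iiF}, and~\eqref{eq:iiSucc} one ideal at a time. The successor-closure argument, reducing $\succe{\idcompl{\W}}{Y}{a} \subseteq \downclosure{\idec{S}{X}}$ to $\downclosure{\succe{\W}{Y}{a}} \subseteq X$, is exactly the paper's.
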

\begin{proof}
    %[Proof of Proposition~\ref{prop:inducedinvariant}]
    Define $Q = \idec{S}{X}$.
    Since $Q$ contains all ideals $Y \subseteq X$ that are maximal wrt.~inclusion, $\downclosure{Q}$ contains all ideals $Y \subseteq X$.
    We observe that
    \[
    X \stackrel{\eqref{eq:union}}{=} \bigcup Q = \bigcup \downclosure{Q}
    \ . 
    \]
    By Lemma~\ref{lem:idec}, $Q$ is finite and thus $\downclosure{Q}$ is finitely-represented.
    It remains to check that $\downclosure{Q}$ satisfies the Properties~\eqref{eq:iiI}, \eqref{eq:iiF}, and~\eqref{eq:iiSucc}.
    
%    We need to show that any maximal ideal contained in $\downclosure{I}$ belongs to $\downclosure{Q}$.
    To show Property~\eqref{eq:iiI}, we need to prove $\idec{S}{\downclosure{I}} \subseteq \downclosure{Q}$.
    We have $I \subseteq X$ by Property~\eqref{eq:iiI}, and since $X$ is downward-closed, we obtain $\downclosure{I} \subseteq X$.
    Consequently, any ideal that is a subset of $\downclosure{I}$ is also a subset of $X$, and $\downclosure{Q}$ contains all such ideals.
    
    For Property~\eqref{eq:iiF}, assume towards a contradiction that $\downclosure{Q}$ contains an ideal $Y$ that is final in $\idcompl{\W}$.
    By definition, this means $Y$ contains a final configuration.
    Since $Y \subseteq X$, we obtain a contradiction to $X \cap F = \emptyset$, Property~\eqref{eq:iiF}.

    To check the inclusion $\succe{\idcompl{\W}}{\downclosure{Q}}{a} \subseteq \downclosure{Q}$, we pick an ideal $Y \in \downclosure{Q}$ and show $\succe{\idcompl{\W}}{Y}{a} \subseteq \downclosure{Q}$.
    Recall the definition $\succe{\idcompl{\W}}{Y}{a} = \idec{S}{\downclosure{\succe{\W}{Y}{a}}}$.
    Thus, any element of $\succe{\idcompl{\W}}{Y}{a}$ is an ideal that is a subset of $\downclosure{\succe{\W}{Y}{a}}$.
    We have $\succe{\W}{X}{a} \subseteq X$ by Property~\eqref{eq:iiSucc}.
    This implies $\succe{\W}{Y}{a} \subseteq X$ as $Y \subseteq X$, and even \mbox{$\downclosure{\succe{\W}{Y}{a}} \subseteq X$} as $X$ is downward-closed.
    Hence, any ideal that is a subset of $\downclosure{\succe{\W}{Y}{a}}$ is also subset of $X$, and thus an element of
    $\downclosure{Q}$.
\end{proof}
Theorem~\ref{thm:core} expects invariants for \UWSTS of a particular shape, namely products $\W\times \W'$. 
We now show that the operation of ideal completion commutes with taking products of \UWSTS, a fact that will be key to the proof of Theorem~\ref{thm:sepUWSTS}. 
We start by recalling that the ideals in a product \wqo $X\times Y$ are precisely the products of the ideals in $X$ and in $Y$. 

\begin{lemma}[\cite{KP92,FGII,demystifying}]
\label{lem:prodideals}
    A set $Z\subseteq X\times Y$ is an ideal iff $Z = I \times J$, where $I \subseteq X$ 
    and $J\subseteq Y$ are ideals.
\end{lemma}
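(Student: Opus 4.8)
The plan is to prove the two directions of the biconditional separately, in each case appealing directly to the three defining properties of an ideal: non-emptiness, downward-closure, and directedness. Throughout I write $\preceq$ for the order on $X$, on $Y$, and for the induced product order on $X\times Y$, where $(x,y)\preceq (x',y')$ iff $x\preceq x'$ and $y\preceq y'$.

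For the easy direction, I would assume $I\subseteq X$ and $J\subseteq Y$ are ideals and verify that $Z=I\times J$ is an ideal of $X\times Y$. Non-emptiness is immediate since $I$ and $J$ are non-empty. Downward-closure follows coordinatewise: if $(x',y')\preceq (x,y)\in I\times J$ then $x'\preceq x$ and $y'\preceq y$, so $x'\in I$ and $y'\in J$ by downward-closure of $I$ and $J$. Directedness is obtained by combining upper bounds in each coordinate: given $(x_1,y_1),(x_2,y_2)\in I\times J$, pick $x_3\in I$ above $x_1,x_2$ and $y_3\in J$ above $y_1,y_2$; then $(x_3,y_3)\in I\times J$ dominates both pairs.

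For the converse, let $Z$ be an ideal of $X\times Y$ and define $I$ and $J$ to be the projections of $Z$ onto $X$ and onto $Y$, respectively. I would first check that $I$ and $J$ are themselves ideals. Non-emptiness is clear. Downward-closure of $I$ holds because, if $x\in I$ is witnessed by $(x,y)\in Z$ and $x'\preceq x$, then $(x',y)\preceq (x,y)$ lies in $Z$ by downward-closure of $Z$, so $x'\in I$. Directedness of $I$ follows from directedness of $Z$: given $x_1,x_2\in I$ with witnesses $(x_1,y_1),(x_2,y_2)\in Z$, an upper bound of these in $Z$ projects to an upper bound of $x_1,x_2$ in $I$. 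The symmetric arguments apply to $J$.

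It then remains to show $Z=I\times J$. The inclusion $Z\subseteq I\times J$ is immediate from the definition of the projections. The reverse inclusion $I\times J\subseteq Z$ is the only nontrivial point, and the step I expect to be the \emph{main obstacle}, since it requires recovering a single element of $Z$ witnessing an arbitrary pair $(x,y)$ with $x\in I$ and $y\in J$. Here directedness is used crucially: by definition of the projections there are $(x,y_1)\in Z$ and $(x_1,y)\in Z$, and directedness of $Z$ yields some $(x_2,y_2)\in Z$ above both. This gives $x\preceq x_2$ and $y\preceq y_2$, hence $(x,y)\preceq (x_2,y_2)$, and downward-closure of $Z$ finally yields $(x,y)\in Z$. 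This completes the argument.
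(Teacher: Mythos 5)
Your proof is correct and complete: both directions are verified against the three defining properties of an ideal, and the one genuinely nontrivial step --- recovering $(x,y)\in Z$ from separate witnesses $(x,y_1)$ and $(x_1,y)$ via directedness followed by downward-closure --- is handled exactly right. The paper itself gives no proof of this lemma (it is cited from the literature), but your argument is the standard one found in those references, so there is nothing to add.
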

\noindent
Lemma~\ref{lem:prodideals} yields the mentioned commutativity.

\begin{restatable}{rlemma}{restateLemmaIsoprod}
\label{lem:isoprod}
    For two {\UWSTS}es $\W$ and $\W'$,
%    the weak {\UWSTS} 
    $\idcompl{\W} \times \idcompl{\W'}$ and $\idcompl{\W \times \W'}$ are isomorphic.
\end{restatable}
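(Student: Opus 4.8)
The plan is to establish the isomorphism $\idcompl{\W} \times \idcompl{\W'} \cong \idcompl{\W \times \W'}$ by exhibiting an explicit bijection on configurations that respects the orders, the initial and final sets, and the transition relations. The natural candidate map sends a pair of ideals $(I, J)$, with $I$ an ideal of $S$ and $J$ an ideal of $S'$, to the product $I \times J$. By Lemma~\ref{lem:prodideals}, this product is exactly an ideal of the product order $\prodpreceq$, and conversely every ideal of $S \times S'$ arises uniquely in this way. Hence the map is a bijection between the configuration set $\barS \times \barS'$ of $\idcompl{\W} \times \idcompl{\W'}$ and the configuration set $\overline{\prodS}$ of $\idcompl{\W \times \W'}$. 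That it is an order isomorphism is immediate, since $I \times J \subseteq I' \times J'$ holds iff $I \subseteq I'$ and $J \subseteq J'$ (using non-emptiness of ideals), which is precisely the product order on $\barS \times \barS'$.

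Next I would check that this bijection matches up the initial and final configurations. For the initial sets, I must show that $\idec{S}{\downclosure{I}} \times \idec{S'}{\downclosure{I'}}$ corresponds under the map to $\idec{\prodS}{\downclosure{(I \times I')}}$; this reduces to the observation that $\downclosure{(I \times I')} = \downclosure{I} \times \downclosure{I'}$ and that, by Lemma~\ref{lem:prodideals}, the maximal ideals of a product downward-closed set are exactly the products of the maximal ideals of the factors. For the final sets, an ideal $I \times J$ intersects $\prodF = F \times F'$ iff $I$ intersects $F$ and $J$ intersects $F'$, which says exactly that $I \times J$ is final in $\idcompl{\W \times \W'}$ iff $(I,J)$ is final in the product of the completions.

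The step that requires the most care is matching the transition relations, because the transitions of an ideal completion are defined by decomposing the downward closure of the successor set into maximal ideals. Concretely, I would verify that
\[
    \succe{\idcompl{\W \times \W'}}{I \times J}{a} = \idec{\prodS}{\downclosure{\succe{\W \times \W'}{I \times J}{a}}}
\]
corresponds, under the product-of-ideals bijection, to the componentwise successors $\succe{\idcompl{\W}}{I}{a} \times \succe{\idcompl{\W'}}{J}{a}$. The key identity to establish here is
\[
    \downclosure{\succe{\W \times \W'}{I \times J}{a}} = \downclosure{\succe{\W}{I}{a}} \times \downclosure{\succe{\W'}{J}{a}},
\]
after which Lemma~\ref{lem:prodideals} again gives that the maximal ideals of the product on the left are exactly the products of the maximal ideals of the two factors on the right, matching the definition of a transition in the product completion. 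I expect this successor-commutation identity to be the main obstacle: the forward inclusion is routine from the definition of the synchronized product, but the reverse inclusion needs the fact that a synchronized $a$-step of $\W \times \W'$ from a point of $I \times J$ exists precisely when both components admit an $a$-step, together with upward compatibility to absorb the downward closures. Once this identity is in hand, combining the order, initial, final, and transition correspondences yields the claimed isomorphism.
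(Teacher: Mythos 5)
Your proposal is correct and takes essentially the same route as the paper's own proof: the same bijection $(I,J) \mapsto I \times J$ justified by Lemma~\ref{lem:prodideals}, with the correspondence of transitions reduced to the fact that taking successors (direct images) and downward closures commutes with products. The paper states this in two sentences; your write-up merely spells out the routine verifications (order, initial, final, and the successor-commutation identity) that the paper leaves implicit.
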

\noindent
We are now prepared to apply Theorem~\ref{thm:core} once more to establish our second main result. 

\begin{proof}[Proof of Theorem~\ref{thm:sepUWSTS}]
    Let $\W = (S, T, \preceq, I, F)$ and  $\W' = (S', T', \preceq', I', F')$ be disjoint {\UWSTS} and $\W'$ finitely branching.
    By Theorem~\ref{thm:summary-expr} %(resp.~Lemma~\ref{lem:UWSTSfin2det}) 
    we can assume $\W'$ is deterministic.
    
    We would like to construct a finitely-represented inductive invariant in the synchronized product of the ideal completions $\idcompl{\W} \times \idcompl{\W'}$ and then apply Theorem~\ref{thm:core}. 
    Indeed, by Lemma~\ref{lem:idcompl} we know that the ideal completions are disjoint \ULTSes, and that the latter one is still deterministic, so they satisfy the assumptions. 

    Relying on Lemma~\ref{lem:isoprod} we prefer to show the existence of a finitely-represented inductive invariant in 
    \(
        \idcompl{\W\times \W'}.
    \)
    Using Proposition~\ref{prop:inducedinvariant}, it is sufficient to find any inductive invariant in $\W\times \W'$, it does not have to be finitely-represented. 
    We know that such an inductive invariant exists by Lemma~\ref{Lemma:Invariants}, since we assume  $\lang{\W\times \W'} = \lang{\W} \cap \lang{\W'} = \emptyset$.
\end{proof}

\subparagraph*{Effective Representation.}
The states of the separating automaton in the proof of Theorem~\ref{thm:sepUWSTS} are ideals in the product systems.
With Lemma~\ref{lem:prodideals}, these are tuples of ideals in the original systems.
For most types of \UWSTSes, it is known how ideals can be effectively represented, i.e.~how to obtain finite representations on which the successors can be computed.
We briefly mention such a construction for Petri nets in Lemma~\ref{lem:natideals}, see e.g.~\cite{DBLP:conf/icalp/BlondinFM14} for more examples.
In general, one may exploit the fact that ideals are downward-closed sets, which in turn are complements of upward-closed sets that can be represented by finitely many minimal elements -- an idea first proposed in \cite{GRVB06}.
Note that in the proof of Theorem~\ref{thm:sepUWSTS}, we invoke Theorem~\ref{thm:summary-expr} to determinize the given finitely-branching \UWSTS.
The states of the resulting \UWSTS are finitary downward-closed sets of states of the original one.
For most types of \UWSTSes, this construction can be avoided.
We demonstrate this for the case of Petri nets in the proof of Proposition~\ref{Proposition:GeneralUpper}.

% !TEX root = ../main.tex

\section{Separator Size: The Case of Petri Nets}
\label{sec:bound}

The \UWSTS associated to Petri nets are finitely branching. 
Hence, Theorem~\ref{thm:sepUWSTS} applies:
Whenever the coverability languages of two Petri nets are disjoint, they are regular separable.
We now show how to construct a triply-exponential non-deterministic finite automaton (NFA) separating two such languages, provided they are disjoint.
Moreover, for deterministic finite automata (DFA), we show that this size cannot be avoided.

\begin{theorem}\label{Theorem:UpperPN}
    Let $\lang{N_1}$, $\lang{N_2}$ be disjoint Petri net coverability languages.
    There is an NFA $\A$ of size triply exponential in $\card{N_1}+\card{N_2}$ such that $\lang{A}$ separates $\lang{N_1}$ and $\lang{N_2}$.
\end{theorem}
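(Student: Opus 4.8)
The plan is to instantiate the machinery of Section~\ref{sec:separability} with the \emph{greatest} inductive invariant of the product net and to bound the size of its ideal decomposition. Write $n = \card{N_1}+\card{N_2}$ and form the synchronized product $\prodN = N_1 \times N_2$, a Petri net whose number of places is $d \le n$ and whose description size is polynomial in $n$. Since $\lang{N_1}$ and $\lang{N_2}$ are disjoint, $\lang{\prodN}=\lang{N_1}\cap\lang{N_2}=\emptyset$. By Lemma~\ref{Lemma:Invariants} and the remark following it, the complement $X = \complementof{(\revreachall{\prodN})}$ of the backward-coverability set is the greatest inductive invariant of $\prodN$; by upward compatibility $\revreachall{\prodN}$ is upward-closed, so $X$ is downward-closed. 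Following the proof of Theorem~\ref{thm:sepUWSTS}, I would pass to the ideal completion $\idcompl{\prodN}$ and use Proposition~\ref{prop:inducedinvariant} to turn $X$ into the finitely-represented invariant $\downclosure{\idec{\N^d}{X}}$, whose ideals are $\omega$-markings in $(\N\cup\{\omega\})^d$. Theorem~\ref{thm:core} then produces a separating automaton whose state set $Q$ is exactly $\idec{\N^d}{X}$, so the whole problem reduces to bounding $\card{\idec{\N^d}{X}}$ and making the transition relation effective.

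The size estimate proceeds in two steps. First, $U = \revreachall{\prodN}$ is the backward-coverability set of $\prodN$; by the analysis of Abdulla's backward algorithm in~\cite{DBLP:conf/rp/LazicS15} it has a finite basis $M$ of minimal elements whose norms $\norm{m}$ are bounded by a doubly-exponential function $B$ of $n$, and consequently $\card{M}$ is at most doubly exponential as well, being an antichain in the box $\{0,\dots,B\}^d$. Second, I would complement: from $U = \bigcup_{m\in M}\upclosure{m}$ one obtains
\[
    X = \complementof{U} = \bigcap_{m\in M}\complementof{(\upclosure{m})}, \qquad \complementof{(\upclosure{m})} = \bigcup_{i=1}^{d}\setof{x}{x_i < m_i},
\]
and distributing the intersection over the unions expresses $X$ as a union of at most $d^{\card{M}}$ boxes, each an ideal. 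Since $\card{M}$ is doubly exponential and $d$ is at most $n$, this yields a triply-exponential bound on $\card{\idec{\N^d}{X}}=\card{Q}$.

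Finally I would assemble the NFA. By Lemmas~\ref{lem:prodideals} and~\ref{lem:isoprod} the ideals of the product are pairs of $\omega$-markings of $N_1$ and $N_2$, so each state of $\A$ is a pair of $\omega$-markings with entries bounded by $B$; the successor relation on $\omega$-markings of a Petri net is computable, which makes the transition relation of Definition~\ref{def:sepautomaton} effective, and the initial and final state sets are read off by the dominance and membership tests of that definition. The determinism hypothesis of Theorem~\ref{thm:core} is met by determinizing $N_2$ through Theorem~\ref{thm:summary-expr}; for Petri nets this step is carried out directly on $\omega$-markings rather than on abstract downward-closed sets (the construction underlying Proposition~\ref{Proposition:GeneralUpper}), so that it does not raise the tower beyond the count above. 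Collecting the bounds gives an NFA of triply-exponential size.

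The main obstacle is the second step: controlling the blow-up incurred by complementing the upward-closed set $U$. The doubly-exponential basis from~\cite{DBLP:conf/rp/LazicS15} is only the starting point, and the naive distribution of intersections over unions is what pushes the count to a third exponential; care is needed to argue that this is the dominant contribution and that neither the determinization of $N_2$ nor the product construction adds further towers. A secondary technical point is to confirm that the greatest invariant $\complementof{U}$, rather than the reachability-based least invariant, is the right object to feed into Theorem~\ref{thm:core}, and that its ideal representation can genuinely be computed within the stated bound.
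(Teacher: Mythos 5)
Your architecture for the deterministic case matches the paper's Proposition~\ref{prop:upperbounddet}: take the greatest invariant $\Nat^d \setminus \revreachall{\prodN}$, bound its basis via the backward-coverability analysis, complement using Lemma~\ref{lem:natideals}, and feed the resulting ideal decomposition into Theorem~\ref{thm:core} through Proposition~\ref{prop:inducedinvariant}. However, your count of the ideal decomposition stops at the naive $d^{\card{M}}$ obtained by distributing intersections over unions, which is triply exponential. The paper improves this to \emph{doubly} exponential by observing that every ideal representative produced by the distribution has all of its non-$\omega$ components bounded by the basis norm $g$, so at most $(g+2)^d$ distinct representatives can occur. This improvement is not optional bookkeeping: it is what keeps the final bound at three exponentials once non-determinism is handled correctly.

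The more serious gap is your treatment of non-determinism. If you determinize $N_2$ via Theorem~\ref{thm:summary-expr}, the resulting system is a powerset construction over downward-closed sets of markings, not a Petri net, so Theorem~\ref{thm:BozelliGanty} no longer applies to the product and your size estimate has no support. What the paper actually does in Proposition~\ref{Proposition:GeneralUpper} is a \emph{relabeling}, not a determinization: $N_2$ is relabeled by the identity on its own transition set $T_2$ (making it deterministic over the alphabet $T_2$), $N_1$ is relabeled compatibly, and the deterministic-case result yields a separator $\A$ over $T_2$. Crucially, $\lambda(\A)$ is \emph{not} in general a separator for the original nets --- the paper notes explicitly that there may be $u \in \lang{\A} \setminus \lang{\NBfree}$ with $\lambda(u) \in \lang{N_2}$ --- so one must first complement $\A$ and only then apply $\lambda$. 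That NFA complementation is an exponential blow-up and is the true source of the third exponential. Your proposal omits this complementation entirely; without it the relabeled automaton need not separate, and with it your looser triply-exponential count for the deterministic case would land you at four exponentials rather than three.
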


\begin{theorem}\label{Theorem:LowerPN}
    In general, Petri net coverability languages cannot be separated by DFA of less than triply-exponential size.
\end{theorem}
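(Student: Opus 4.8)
The statement to prove is Theorem~\ref{Theorem:LowerPN}: Petri net coverability languages cannot in general be separated by DFA of subtriply-exponential size. This is a lower bound, so the plan is to exhibit, for each parameter $n$, an explicit pair of Petri nets $N_1, N_2$ whose coverability languages are disjoint but such that any separating DFA must have at least triply-exponentially many states in $n$, while $\card{N_1}+\card{N_2}$ stays polynomial in $n$. The natural engine for producing triply-exponential numbers from polynomial-size Petri nets is a Lipton-style construction: a Petri net of size $\bigO(n)$ can, using a weak (zero-test-free) mechanism, force a designated counter to reach a value doubly exponential in $n$, and the notation in the preamble ($\llipton$, $\lphase$, $\phaltinc$, $\phaltdec$, $\pcount$, $\pstart$) strongly suggests exactly such a gadget is in play.

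First I would design the two languages so that the obligation to separate them encodes counting up to a triply-exponential bound. The cleanest template is a promise problem of the form ``read a block, then read a second block, and the separator must decide whether the two blocks agree,'' where agreement is witnessed only by a value of triply-exponential magnitude. Concretely, I would let $N_1$ accept words of the shape $u \cdot v$ where $v$ faithfully replays information stored in $u$ up to a counter bounded by a tower of height three, and let $N_2$ accept words where the replay is faulty in a way that only a machine tracking that same triply-exponential value can detect. Disjointness of $\lang{N_1}$ and $\lang{N_2}$ should be immediate from the construction (a word cannot be both a faithful and a faulty replay), while a DFA separating them is forced to remember, across the boundary between $u$ and $v$, a residue that ranges over triply-exponentially many distinct values.

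The lower-bound argument itself I would run via a fooling-set / Myhill--Nerode style pigeonhole. I would construct a family of prefixes $\{x_i\}$, indexed by the triply-exponentially many distinguishable counter states, together with matching suffixes $\{y_i\}$, such that $x_i y_i \in \lang{N_1}$ for every $i$, whereas $x_i y_j \in \lang{N_2}$ for every $i \neq j$ (or at least is excluded from $\lang{N_1}$ while belonging to the language the separator must reject). Because any separator $\R$ satisfies $\lang{N_1}\subseteq\R$ and $\R\cap\lang{N_2}=\emptyset$, a DFA for $\R$ must send $x_i$ and $x_j$ to distinct states whenever $i\neq j$: if $x_i$ and $x_j$ collided, then $x_j y_j\in\lang{N_1}\subseteq\R$ would force $x_i y_j\in\R$, yet $x_i y_j\in\lang{N_2}$ must be rejected, a contradiction. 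This yields at least as many DFA states as there are indices $i$, i.e.\ triply exponentially many in $n$.

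The main obstacle, and the part deserving the most care, is the Petri-net engineering that simultaneously (i) keeps $N_1$ and $N_2$ of size polynomial in $n$, (ii) makes their languages genuinely disjoint, and (iii) makes the triply-exponential counter value \emph{detectable by coverability only}, since coverability (upward-closed final markings) is strictly weaker than reachability and cannot directly test a counter for equality. The Lipton trick gives a doubly-exponential bound from a single level; obtaining the third exponential requires nesting the construction, and the delicate point is encoding an equality-style check for two huge values using only coverability semantics -- typically by running two copies in tandem and arranging that a final token is coverable precisely when the intended relation holds. I would therefore spend the bulk of the proof verifying that the fooling pairs $x_i y_i$ and $x_i y_j$ land on the correct sides of the coverability acceptance condition, as this is where the weakness of coverability most threatens the separation between the two language families.
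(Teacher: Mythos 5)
Your high-level plan --- polynomial-size nets built from Lipton gadgets, plus a fooling-set/pigeonhole argument showing that colliding prefixes force a separator violation --- matches the paper's strategy, and your observation that a separating DFA must distinguish prefixes $x_i, x_j$ because $x_jy_j\in\lang{N_1}\subseteq\R$ while $x_iy_j\in\lang{N_2}$ is exactly the argument used in Proposition~\ref{prop:lxat22n}. However, there is a genuine gap in where you propose to get the third exponential from. You write that Lipton gives only a doubly-exponential value and that ``obtaining the third exponential requires nesting the construction'' so that a counter reaches a triply-exponential magnitude, to be compared across two blocks by an equality-style check. This route would fail: by Rackoff's bound, coverability in a polynomial-size net is always witnessed by a run of at most doubly-exponential length, so a polynomial-size net cannot meaningfully force triply-exponential token counts under coverability semantics, and (as you yourself note) an equality test between two huge values is not expressible by an upward-closed acceptance condition.

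The missing idea is that the third exponential should come from the \emph{DFA}, not from the Petri net. The paper uses the language $\mathcal{L}_{x@k}$ of binary words whose $k$-th-last letter is $x$: the minimal DFA for this language needs $2^{k}$ states, because a DFA cannot guess where the word ends and must remember the last $k$ bits. Taking $k=2^{2^{n}}$, which a single (non-nested) Lipton gadget of polynomial size produces as a token count on a place $\pout$, the nets $N_0(n)$ and $N_1(n)$ accept (up to bookkeeping phases) $\mathcal{L}_{0@2^{2^n}}$ and $\mathcal{L}_{1@2^{2^n}}$ respectively; these are trivially disjoint, the net only ever counts to a doubly-exponential value (consuming one token of $\pout$ per letter after the marked position and verifying at the end, via the decrementing Lipton gadget, that at least $2^{2^n}$ tokens were transferred), and any separating DFA inherits the $2^{2^{2^n}}$ lower bound by the pigeonhole argument you sketched. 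So your lower-bound machinery is sound, but you should replace the ``replay/agreement'' language and the nested counter by the $k$-th-last-letter language with $k$ merely doubly exponential.
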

\noindent
Instead of invoking Theorem~\ref{thm:sepUWSTS}, which uses Theorem~\ref{thm:summary-expr} to determinize, we directly show how to construct an equivalent instance of the separability problem in which one of the nets is deterministic.
In this setting, we prove an upper bound that combines Theorem~\ref{thm:core} with a size estimation for an ideal decomposition. 
We then show how to handle non-determinism.
%
%In this setting, the upper bound is proven as Proposition~\ref{prop:upperbounddet}.
%% in the case where one of the Petri nets is deterministic..
%The construction of the automaton combines Theorem~\ref{thm:core} with a size estimation for an ideal decomposition. 
%Proposition~\ref{Proposition:GeneralUpper} then shows how to handle non-determinism and proves Theorem~\ref{Theorem:UpperPN}.
%
%One might wonder why we do not directly invoke  which just requires finite branching.
%The reason is that instead of using to determinize the \WSTS, but the resulting \UWSTS may not be representable by a Petri net anymore.
%Therefore, we show how to directly construct a deterministic Petri net equivalent to the original one.
%\sm{This paragraph needs work...}
%
The lower bound combines a classical result from automata theory, showing that minimal DFA may have exponentially many states~\cite{Kozen1997}, with a Petri net construction due to Lipton~\cite{Lipton}.
%The result is a family of pairs of languages that are essentially complements of each other and cannot be separated by DFA of less than triply-exponential size.

\subparagraph*{Petri Nets.}
%
%We introduce Petri nets the way we need them for our development.  
A \emph{Petri net} over the alphabet~$\Sigma$ is a tuple \mbox{$N = (P,T,F, \lambda, M_0, M_f)$} where $P$ is a finite set of places, $T$ is a finite set of transitions with $P \cap T = \emptyset$, $F \colon (P \cup T) \times (P \cup T) \to \N$ is a flow function, and $\lambda \colon T \to \Sigma$ is a labeling of the transitions. 
The runtime behavior of Petri nets is defined in terms of so-called \emph{markings} from  $M\in \N^d$ with $d=\card{P}$.
If $M(p)=k>0$, we say place $p$ carries $k$ tokens.
We assume to be given an initial and a final marking, $M_0, M_f\in \N^d$. 
Markings are changed by firing transitions:
A transition $t\in T$ is \emph{enabled} in marking $M\in\N^d$, if $M(p) \geq F(p,t)$ for all places $p$. 
An enabled transition can be \emph{fired} leading to the marking $M'$ with 
$M'(p) = M(p) - F(p,t) + F(t,p)$,  denoted $M \fire{t} M'$.
Note that enabledness and firing are upward compatible with the componentwise ordering $\leq$ on markings, in the following sense. 
If $M_1\leq M_2$ and $M_1\fire{t}M_1'$, then $M_2\fire{t}M_2'$ with $M_1'\leq M_2'$.

Relying on this compatibility, we can define the \emph{\UWSTS induced by $N$} to be 
$\WSTSN = (\Nat^P, T', \leq, \set{M_0}, \upclosure{M_f})$.
%Here, $\leq$ is the componentwise ordering on $\Nat^P$.  
The transition relation is defined by $(M,a,M') \in T'$ if there is a transition $t\in T$  such that $M \fire{t} M'$ and $\lambda(t) = a$. 
The language of $\WSTSN$ is also called the \emph{(coverability) language of $N$}\footnote{We consider covering the final marking as acceptance condition, i.e.~a sequence of transitions is accepting if it reaches some marking $M'$ with $M'(p) \geq M_f(p)$ for all $p \in P$.}, and denoted by $\lang{N}$.  
%, by separating two Petri nets $N_1, N_2$, we mean separating $\lang{\wstsof{N_1}}$ and $\lang{\wstsof{N_2}}$. 
%If the language of a finite automaton $\lang{A}$ separates $\lang{\wstsof{N_1}}$ and $\lang{\wstsof{N_2}}$, we also say that \emph{$A$ separates $N_1$ and $N_2$}. 
%Similarly, we say $N_1$ and $N_2$ are \emph{disjoint} if so are their languages.  
We call $N$ \emph{deterministic} if $\WSTSN$ is. % deterministic. 

We use a \emph{product operation} on Petri nets $N_i = (P_i,T_i,F_i, \lambda_i, M_{0, i}, M_{f, i})$, $i=1, 2$.
The product Petri net is obtained by putting the places of $N_1$ and $N_2$ side by side and creating a new transition for all pairs of transitions in $T_1\times T_2$ that carry the same label. 
Formally, $N_1\times N_2 = (P, T, F, \lambda, M_0, M_f)$ with $P = P_1 \dotcup P_2$,
$T = \setof{ (t_1, t_2) \in T_1\times T_2}{ \lambda(t_1) = \lambda(t_2) }$.
We have $\lambda(t_1, t_2)= \lambda(t_1) = \lambda(t_2)$. 
The flow function is defined by the flow functions of the component Petri nets,
$F ( p, (t_1,t_2) ) = F_x( p, t_x)$  and 
$F ( (t_1,t_2), p ) = F_x( t_x, p)$, where $x = i$ if $p \in P_i$.
We have $M_0(p) = M_{0, i} (p)$ for $p \in P_i$, and similar for $M_f$.
The product operation on Petri nets coincides with the product on \UWSTS. 
%This allows us to identify a Petri net with its induced \UWSTS and carry out product constructions as required.  

\begin{lemma}\label{Lemma:ProductNet}
    $\wstsof{N_1\times N_2}$ is isomorphic to $\wstsof{N_1}\times \wstsof{N_2}$.
\end{lemma}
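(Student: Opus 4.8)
The plan is to exhibit an explicit isomorphism between the two \UWSTS and to verify that it respects configurations, the quasi order, the transition relation, and the distinguished sets. Since the places of the product net form the disjoint union $P = P_1 \dotcup P_2$, a marking $M \in \Nat^P$ splits canonically into its two restrictions. I would therefore define $\phi \colon \Nat^P \to \Nat^{P_1} \times \Nat^{P_2}$ by $\phi(M) = (M|_{P_1}, M|_{P_2})$. This is a bijection, the inverse gluing a pair $(M_1, M_2)$ into the marking agreeing with $M_1$ on $P_1$ and with $M_2$ on $P_2$. It then remains to check that $\phi$ is an isomorphism of quasi-ordered labeled transition systems.

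The routine checks concern the order and the distinguished configurations. The componentwise order on $\Nat^P$ factors along $P = P_1 \dotcup P_2$, so $M \leq M'$ holds iff $M|_{P_1} \leq M'|_{P_1}$ and $M|_{P_2} \leq M'|_{P_2}$; hence $\phi$ preserves and reflects $\leq$, matching $\prodpreceq$. For the initial marking, $M_0(p) = M_{0,i}(p)$ for $p \in P_i$ gives $\phi(M_0) = (M_{0,1}, M_{0,2})$, the unique initial configuration of $\wstsof{N_1}\times\wstsof{N_2}$. For the final set, the same defining equation for $M_f$ shows that $M \geq M_f$ iff $M|_{P_1} \geq M_{f,1}$ and $M|_{P_2} \geq M_{f,2}$, so $\phi$ carries $\upclosure{M_f}$ onto $\upclosure{M_{f,1}} \times \upclosure{M_{f,2}}$, which is precisely $\prodF$.

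The substantive step, which I expect to be the main obstacle, is the transition relation. Here I would first argue that firing a product transition decomposes blockwise: for $(t_1, t_2) \in T$ and markings $M, M' \in \Nat^P$, one has $M \fire{(t_1,t_2)} M'$ in $N_1 \times N_2$ iff $M|_{P_1} \fire{t_1} M'|_{P_1}$ in $N_1$ and $M|_{P_2} \fire{t_2} M'|_{P_2}$ in $N_2$. This is immediate from the definition of the product flow function, since on a place $p \in P_i$ both enabledness and the token update of $(t_1,t_2)$ refer only to $F_i(p, t_i)$ and $F_i(t_i, p)$, i.e.~to the behaviour of $t_i$ in $N_i$. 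The remaining point is to reconcile the existential quantifiers. An $a$-labeled step of $\wstsof{N_1 \times N_2}$ from $M$ to $M'$ exists iff there is a pair $(t_1, t_2) \in T$ with $\lambda_1(t_1) = \lambda_2(t_2) = a$ realizing the two blockwise firings; because the choices of $t_1$ and $t_2$ are independent once both are required to carry the label $a$, this is equivalent to the conjunction of ``there is $t_1 \in T_1$ with $\lambda_1(t_1)=a$ and $M|_{P_1} \fire{t_1} M'|_{P_1}$'' and ``there is $t_2 \in T_2$ with $\lambda_2(t_2)=a$ and $M|_{P_2} \fire{t_2} M'|_{P_2}$''. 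This is exactly the condition $\phi(M) \trans{a} \phi(M')$ in $\wstsof{N_1}\times\wstsof{N_2}$, where the synchronized product forces both components to take the same-labeled step. The only subtlety is that the label restriction built into $T$ is precisely what matches the synchronization requirement of the \UWSTS product, so the two existentials decouple without losing or gaining any transition. Hence $\phi$ preserves and reflects transitions, completing the verification that it is an isomorphism.
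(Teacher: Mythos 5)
Your proof is correct. The paper in fact states this lemma without proof, treating it as immediate from the definitions, and your explicit verification via the splitting bijection $M \mapsto (M|_{P_1}, M|_{P_2})$ is exactly the argument being taken for granted; in particular you correctly identify and resolve the one point of substance, namely that the label-matching condition built into the product net's transition set $T$ lets the existential quantifier over pairs $(t_1,t_2)$ decouple into the two independent existentials required by the synchronized product of the induced \UWSTS.
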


We will need the size of a Petri net. 
It is defined using a binary encoding of the values in the range of the flow function and in the markings. 
Define the \emph{infinity norm} of a vector $M \in \N^d$ to be 
$\norm{M} = \max_{p\in P} {M(p)}$. 
We extend this notion to matrices, sets of vectors, and functions by taking the maximum over all entries, elements, and elements in the range, respectively. 
The \emph{size} of the Petri net $N$ is now
\(
    \card{N}= \card{P} \card{T} (1+ \lceil\log_2(1+\norm{F})\rceil ) + \card{M_0}+|M_f|\ .
\)
The size of a marking $M$ is $\card{M}=  |P| ( 1+ \lceil\log_2(1+\norm{M})\rceil )$.

%We start by proving the upper bound in a special case.
%In the rest of this section we establish a bound on the size of a DFA representing the separator by proving the following theorem.

\subparagraph*{An Upper Bound Assuming Determinism.}
Theorem~\ref{thm:core} assumes that one of the \UWSTS is deterministic. 
We now show that for Petri nets, in this case, the regular separator is (an NFA of size) at most doubly exponential in the size of the input Petri nets. 
%From this NFA, a triply exponential DFA can be obtained by the powerset construction.
%

To prove the result, we show how a size estimation for the basis of $\revreachall{\W}$ with $\W=\wstsof{N_1\times N_2}$ can be turned into a size estimation for the ideal decomposition of the complement.
The size estimation of the basis is the following result.
It is obtained by inspecting Abdulla's backward search~\cite{ACJT96}. 

\begin{theorem}[Bozzelli \& Ganty~\cite{DBLP:conf/rp/BozzelliG11}]
    \label{thm:BozelliGanty}
    Consider a Petri net $N$ with final marking $M_f$. 
    Then
    $\revreachall{\WSTSN}=\upclosure{\set{v_1, \ldots, v_k}}$, where $k$ as well as $\norm{\set{v_1, \ldots, v_k}}$ are bounded from above by
    \[
    g =
    \left(
    \card{T} \cdot
    \left(
    \norm{F}
    +
    \norm{M_0}
    +
    \norm{M_f}
    +
    2
    \right)
    \right)^{2^{\bigO{\card{P} \cdot \log \card{P}}}}
    \ .
    \]
\end{theorem}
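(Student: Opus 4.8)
The plan is to recover the bound by instrumenting Abdulla's backward search~\cite{ACJT96} with explicit size estimates. Since the transition relation of $\WSTSN$ is upward compatible, the set $\revreachall{\WSTSN}$ is upward-closed and hence determined by its finite antichain of minimal elements. Backward search computes it as the limit of the increasing chain $U_0 \subseteq U_1 \subseteq \cdots$ of upward-closed sets, where $U_0 = \upclosure{M_f}$ and $U_{i+1}$ adjoins to $U_i$ all one-step predecessors of $U_i$. By the well quasi order on $\N^P$ this chain stabilizes; at the first index with $U_n = U_{n+1}$ we have $\revreachall{\WSTSN} = U_n$, and the minimal elements of $U_n$ are the desired $v_1, \ldots, v_k$. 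Two quantities must be controlled: the norms of the markings generated en route, and the total number of minimal elements ever discovered, which dominates both $n$ and $k$.

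First I would pin down the norm growth per step. For Petri nets the minimal predecessor of a marking $v$ under a transition $t$ is given in closed form: it is the marking $M$ with $M(p) = \max\bigl(v(p) + F(p,t) - F(t,p),\, F(p,t)\bigr)$ for each place $p$, namely the least marking that enables $t$ and whose $t$-successor covers $v$. Consequently one predecessor step increases the infinity norm by at most $\norm{F}$, so every minimal element appearing in $U_i$ has norm at most $\norm{M_f} + i \cdot \norm{F}$. This bounds the norms as soon as the number of steps is bounded, and it also bounds $k$: once all markings involved have norm below some $g'$, the final antichain lies inside the $\card{P}$-fold product $\set{0,\ldots,g'}^{P}$, so $k \le (g'+1)^{\card{P}}$.

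It remains to bound the number of iterations, which is the crux. Listing the minimal elements in the order the search discovers them yields a \emph{bad sequence} in $(\N^P, \le)$: a later-discovered minimal element can never dominate an earlier one, for otherwise it would already be covered and hence not fresh. By the per-step estimate this sequence is \emph{controlled}, its $i$-th entry having norm at most $\norm{M_f} + i\,\norm{F}$, an affine control of slope $\norm{F}$. A sufficiently sharp bound on the length of controlled bad sequences in $(\N^d,\le)$ under linear control — of the flavour of the length-function analyses underlying fast-growing complexity~\cite{SchmitzSchnoebelen2011}, with the $2^{\bigO{\card{P}\log\card{P}}}$-type exponent familiar from Rackoff's analysis of coverability — bounds this length $L$ in terms of the dimension $\card{P}$ and the control data $\norm{F}$, $\norm{M_f}$ and the branching $\card{T}$. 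With $L$ in hand every minimal element has norm at most $\norm{M_f}+L\,\norm{F}$ and, as noted, $k \le (\norm{M_f}+L\,\norm{F}+1)^{\card{P}}$; absorbing the $\card{P}$-th power into the exponent gives the common bound
\[
g = \bigl(\card{T}\cdot(\norm{F}+\norm{M_0}+\norm{M_f}+2)\bigr)^{2^{\bigO{\card{P}\cdot\log\card{P}}}}
\]
for $k$ and for $\norm{\set{v_1,\ldots,v_k}}$, the summand $\norm{M_0}$ being a harmless over-estimate (reverse reachability does not depend on $M_0$) that makes $g$ uniform in $\card{N}$. The main obstacle is exactly this length estimate: extracting from the combinatorics of $(\N^{\card{P}},\le)$ the precise displayed base $\card{T}\cdot(\norm{F}+\cdots+2)$ and doubly-exponential exponent, which hinges on exploiting the linearity of the control to keep the bound elementary, together with careful bookkeeping of how $\card{T}$, $\norm{F}$ and $\norm{M_f}$ enter.
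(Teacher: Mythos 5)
First, note that the paper does not prove this statement at all: it is imported verbatim from Bozzelli and Ganty~\cite{DBLP:conf/rp/BozzelliG11}, with only the remark that it ``is obtained by inspecting Abdulla's backward search~\cite{ACJT96}''. So the comparison can only be against the strategy of the cited proof. Your setup matches that strategy and is correct as far as it goes: the closed form $M(p)=\max\bigl(v(p)+F(p,t)-F(t,p),\,F(p,t)\bigr)$ for minimal $t$-predecessors is right, the per-step norm growth of at most $\norm{F}$ is right, the observation that the freshly discovered minimal elements form a controlled bad sequence in $(\N^{\card{P}},\le)$ is right (a fresh minimal element of $U_{i+1}$ cannot lie above anything already in $U_i$), and the bookkeeping that converts a bound $L$ on the length of that sequence into bounds on both $k$ and $\norm{\set{v_1,\ldots,v_k}}$ is routine. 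The remark that $\norm{M_0}$ is a harmless over-estimate is also correct.

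The genuine gap is exactly at the step you flag as the crux, and it is worse than an omitted computation: the lemma you appeal to does not exist in the generality you invoke it. The maximal length of $(g,n)$-controlled bad sequences over $(\N^d,\le)$ with affine control $g$ is \emph{not} doubly exponential in $d$; the length-function theorems behind~\cite{SchmitzSchnoebelen2011} give bounds that grow Ackermann-like in the dimension, and even for fixed $d$ they are towers rather than the $2^{\bigO{\card{P}\log\card{P}}}$-type exponent in the statement. Linearity of the control alone cannot rescue this. To obtain the stated $g$ one must exploit structure specific to the sequence produced by the backward algorithm: Bozzelli and Ganty run a Rackoff-style induction on the number of places (case split on whether some coordinate exceeds a threshold, then drop that coordinate and recurse), and the later ideal-based reformulation~\cite{DBLP:conf/rp/LazicS15} needs an $\omega$-monotonicity property of the sequence, not mere badness. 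Without one of these dimension-induction arguments, your derivation bounds $L$ only by a non-elementary function of $\card{P}$ and therefore does not establish the displayed value of $g$.
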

\noindent
By Lemma~\ref{Lemma:Invariants}, $\Nat^d \setminus \revreachall{\W}$ is an inductive invariant of $\W$ (provided the language is empty).
We can now apply Lemma~\ref{lem:idec} to finitely represent this set by its ideal decomposition.
To represent this ideal decomposition in turn, we have to explicitly represent  ideals in $\N^d$. 
The following lemma gives such a representation.

Let $\Nat_\omega$ denote $\Nat$ extended by a new top element~$\omega$. 
Every ideal in $\Nat^d$ is the downward closure $\downclosure{u}$ of a single vector $u\in \Nat_\omega^d$.
The lemma moreover shows how to compute the intersection of two ideals and how to obtain the ideal decomposition of the complement $\Nat^d \setminus \upclosure{v}$ of the upward closure of a vector $v \in \Nat^d$.
%
% of $\upclosure{v}$ is the union of the ideals $\downclosure{u_{<v(j)}}$, $j\in[1, d]$, where $\downclosure{u_{<v(j)}}$ is smaller than $v$ in component $j$ and $\omega$ everywhere else.   
%
\begin{lemma}[see e.g.~\cite{DBLP:conf/rp/LazicS15}]
\label{lem:natideals}
    (1)~The ideals in $\Nat^d$ have the shape $\downclosure{u}$ for $u \in \Nat_\omega^d$.
    (2)~For two ideals $\downclosure{u_1}, \downclosure{u_2}$ of $\Nat^d$, the intersection is  
    $\downclosure{u_1}\cap\downclosure{u_2}=\downclosure{u}$ with $u(i) = \min \set{ u_1(i), u_2(i)}$. 
    (3)~For $v \in \Nat^d$, we have 
    \(
        \idec{}{\Nat^d \setminus \upclosure{v} } = \setof{ \downclosure{u_{<v(j)}} }{ j \in \interval{d} }
        \ , 
    \)
    where $u_{<v(j)}(j)= v(j)-1$ and $u_{<v(j)}(i)=\omega$ for $i\neq j$.
\end{lemma}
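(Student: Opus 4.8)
The plan is to prove the three parts separately, each reducing to a direct computation about downward-closed and upward-closed subsets of $\Nat^d$ under the componentwise order $\leq$.

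For Part~(1), I would first recall that an ideal is a nonempty, downward-closed, directed subset. Given an ideal $Z \subseteq \Nat^d$, I define $u \in \Nat_\omega^d$ coordinatewise by $u(i) = \sup \setof{ z(i) }{ z \in Z }$, interpreting an unbounded supremum as $\omega$. The claim is $Z = \downclosure{u}$. The inclusion $Z \subseteq \downclosure{u}$ is immediate since each $z \in Z$ satisfies $z(i) \leq u(i)$ for all $i$. For the reverse inclusion, I take any $x \in \Nat_\omega^d$ with $x \leq u$ and $x \in \Nat^d$, and must produce an element of $Z$ dominating $x$; here directedness is the workhorse. For each coordinate $i$ with $x(i) < \omega$, the definition of $u(i)$ as a supremum yields some $z_i \in Z$ with $z_i(i) \geq x(i)$. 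Applying directedness finitely many times to $z_1, \ldots, z_d$ produces a single $z^* \in Z$ with $z^* \geq z_i$ for all $i$, hence $z^* \geq x$; since $Z$ is downward-closed, $x \in Z$. Conversely, each $\downclosure{u}$ with $u \in \Nat_\omega^d$ is easily checked to be a nonempty downward-closed directed set, so these are exactly the ideals.

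Part~(2) is the routine verification that $\downclosure{u_1} \cap \downclosure{u_2} = \downclosure{u}$ where $u(i) = \min\set{u_1(i), u_2(i)}$: a vector lies in the intersection iff it is dominated by both $u_1$ and $u_2$, iff it is dominated coordinatewise by their pointwise minimum. This is a one-line argument from the definition of $\downclosure{\cdot}$ and does not require directedness.

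For Part~(3), I would show that a vector $x \in \Nat^d$ fails to be in $\upclosure{v}$, i.e.~$x \not\geq v$, precisely when there exists a coordinate $j$ with $x(j) < v(j)$, which is equivalent to $x(j) \leq v(j) - 1$ and no constraint on the other coordinates, i.e.~$x \leq u_{<v(j)}$. Taking the union over $j \in \interval{d}$ gives $\Nat^d \setminus \upclosure{v} = \bigcup_{j} \downclosure{u_{<v(j)}}$. It then remains to argue that these $d$ ideals are exactly the inclusion-maximal ones in this decomposition, which follows because $\downclosure{u_{<v(j)}} \subseteq \downclosure{u_{<v(j')}}$ would force $u_{<v(j)} \leq u_{<v(j')}$, impossible for $j \neq j'$ since the $j$-th coordinate of the left vector is finite while that of the right is $\omega$. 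I expect the main subtlety to be Part~(1), specifically the careful handling of the $\omega$-valued coordinates of $u$ together with the finitely-many applications of directedness needed to witness membership; the remaining two parts are essentially bookkeeping about the componentwise order.
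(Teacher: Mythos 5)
Your proof is correct in substance; note that the paper does not prove this lemma at all but cites it as a known fact from the literature (Lazi\'c--Schmitz and the ideal-completion papers), so there is no ``paper proof'' to compare against. Your argument for Part~(1) via the coordinatewise supremum and finitely many applications of directedness is the standard one, and Part~(2) is indeed immediate.

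Two small points in Part~(3) deserve attention. First, your incomparability argument looks at the wrong coordinate: to refute $\downclosure{u_{<v(j)}} \subseteq \downclosure{u_{<v(j')}}$ for $j \neq j'$ you should inspect coordinate $j'$, where the left vector has value $\omega$ and the right has the finite value $v(j')-1$; at coordinate $j$ the inequality $v(j)-1 \leq \omega$ actually holds, so it yields no contradiction. Second, pairwise incomparability of the $d$ ideals is not by itself enough to identify them with $\idec{}{\Nat^d \setminus \upclosure{v}}$: you must also rule out some \emph{other} ideal contained in the union but in none of its members. This is exactly what the irreducibility of ideals (Lemma~\ref{lem:idealsirred} in the paper) provides: every ideal contained in a finite union of ideals is contained in one of them, so the inclusion-maximal ideals of the union are precisely the maximal elements among $\downclosure{u_{<v(1)}}, \ldots, \downclosure{u_{<v(d)}}$, which by incomparability are all of them. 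Both fixes are one-liners. (As a side remark, the statement itself silently assumes $v(j) \geq 1$ for each $j$ appearing in the decomposition, since otherwise $v(j)-1 \notin \Nat$ and coordinate $j$ contributes nothing; this is an issue with the cited statement, not with your argument.)
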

We can now combine Theorem~\ref{thm:BozelliGanty} and Lemma~\ref{lem:natideals} to obtained our upper bound.

\begin{restatable}{rproposition}{restatePropositionUpperbounddet}
    \label{prop:upperbounddet}
    Let $N_1$ be an arbitrary Petri net and let $N_2$ be deterministic.
    If $N_1$ and $N_2$ are disjoint, they can be separated by an NFA of size doubly exponential in $\card{N_1}+\card{N_2}$.
\end{restatable}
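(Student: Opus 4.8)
The plan is to combine the size bound for the backward reachability basis (Theorem~\ref{thm:BozelliGanty}) with the explicit ideal representation of complements (Lemma~\ref{lem:natideals}), and then feed the resulting finitely-represented inductive invariant into Theorem~\ref{thm:core}. First I would form the product net $N_\times = N_1 \times N_2$, which by Lemma~\ref{Lemma:ProductNet} induces the \UWSTS $\W = \wstsof{N_\times}$ isomorphic to $\wstsof{N_1} \times \wstsof{N_2}$. Disjointness of $\lang{N_1}$ and $\lang{N_2}$ gives $\lang{\W} = \emptyset$. Since $N_2$ is deterministic and the product (hence ideal completion) preserves determinism, the hypotheses of Theorem~\ref{thm:core} will be in reach once I exhibit a finitely-represented inductive invariant for $\W$. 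The natural candidate is the greatest invariant $X = \Nat^d \setminus \revreachall{\W}$, which is an inductive invariant by the discussion following Lemma~\ref{Lemma:Invariants}.

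Next I would make $X$ finitely representable as a union of ideals. By Theorem~\ref{thm:BozelliGanty} applied to $N_\times$, we have $\revreachall{\W} = \upclosure{\set{v_1, \ldots, v_k}}$ where both $k$ and $\norm{\set{v_1, \ldots, v_k}}$ are bounded by the quantity $g$; crucially, $\card{N_\times}$ is linear in $\card{N_1} + \card{N_2}$ and $d = \card{P_1} + \card{P_2}$, so $g$ is doubly exponential in the input size. Then
\[
    X = \Nat^d \setminus \bigcup_{i=1}^{k} \upclosure{v_i} = \bigcap_{i=1}^{k} \left(\Nat^d \setminus \upclosure{v_i}\right).
\]
Using Lemma~\ref{lem:natideals}(3), each $\Nat^d \setminus \upclosure{v_i}$ decomposes into $d$ ideals $\downclosure{u}$ with $u \in \Nat_\omega^d$ and entries bounded by $\norm{v_i} \leq g$. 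Distributing the intersection over these unions and applying Lemma~\ref{lem:natideals}(2) to compute pairwise intersections componentwise, I obtain $X$ as a union of ideals $\downclosure{u}$, each $u \in \Nat_\omega^d$ with finite entries still bounded by $g$. This yields a finite set $Q$ of ideals with $X = \downclosure{Q}$ (in the ideal completion $\idcompl{\W}$), giving the finitely-represented inductive invariant required by Theorem~\ref{thm:core}.

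The size accounting is where care is needed, and I expect it to be the main obstacle. Naively distributing an intersection of $k$ unions, each of size $d$, produces $d^k$ ideals, which is a tower one level too high. The fix is to observe that each resulting ideal $\downclosure{u}$ has every coordinate of $u$ lying in $\set{0, 1, \ldots, g-1} \cup \set{\omega}$, because the componentwise minima in Lemma~\ref{lem:natideals}(2) never increase entries beyond $g$. Hence the total number of distinct ideals appearing in the decomposition of $X$ is at most $(g+1)^d$, which is again doubly exponential in $\card{N_1} + \card{N_2}$. I would prune the intermediate unions after each pairwise intersection, keeping only inclusion-maximal ideals, so the running collection never exceeds $(g+1)^d$. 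Thus $\card{Q}$ is doubly exponential and each ideal is represented by a vector in $\Nat_\omega^d$ of doubly-exponential magnitude.

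Finally I would invoke Theorem~\ref{thm:core} with the disjoint deterministic-on-one-side \ULTSes $\idcompl{\wstsof{N_1}}$ and $\idcompl{\wstsof{N_2}}$ (disjoint and one deterministic by Lemma~\ref{lem:idcompl}), whose product is isomorphic to $\idcompl{\W}$ by Lemma~\ref{lem:isoprod}. The finitely-represented invariant $\downclosure{Q}$ built above then produces a separating finite automaton whose state set is $Q$. Since $\card{Q}$ is doubly exponential, the separator is an NFA of doubly-exponential size, as claimed. The only remaining subtlety is to confirm that the separating automaton's transition relation (Definition~\ref{def:sepautomaton}), which compares ideals via intersection and successor computations, is computable on these $\Nat_\omega^d$ representations using Lemma~\ref{lem:natideals}; this is routine given that Petri net ideal successors are computable, but I would state it explicitly to justify that the construction is effective rather than merely existential.
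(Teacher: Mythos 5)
Your proposal follows essentially the same route as the paper's proof: take the greatest invariant $\Nat^d \setminus \revreachall{\W}$ of the product, bound its basis via Theorem~\ref{thm:BozelliGanty}, decompose the complement into ideals via Lemma~\ref{lem:natideals}, and crucially replace the naive $d^k$ count after distributing the intersection by the observation that every resulting ideal representative has all finite coordinates bounded by $g$, giving at most $(g+O(1))^d$ distinct ideals before invoking Theorem~\ref{thm:core} through the ideal completion and Lemma~\ref{lem:isoprod}. The only nitpick is that $\card{N_1 \times N_2}$ is quadratic rather than linear in $\card{N_1}+\card{N_2}$ (the product has up to $\card{T_1}\cdot\card{T_2}$ transitions), but this is still polynomial and does not affect the doubly-exponential bound.
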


\subparagraph*{A General Upper Bound.}

The previous result yields a doubly-exponential separator in the case where $N_2$ is deterministic. 
We now show how to get rid of this assumption and construct a separator in the general case. 

\begin{proposition}
\label{Proposition:GeneralUpper}
    Let $N_1$ and $N_2$ be disjoint Petri nets.
    Then they are separable by an NFA of size triply exponential in $\card{N_1}+\card{N_2}$.
\end{proposition}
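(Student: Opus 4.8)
The claim is that two disjoint Petri nets $N_1, N_2$ can be separated by an NFA of triply exponential size, dropping the determinism assumption on $N_2$ that Proposition~\ref{prop:upperbounddet} required (there we got only a doubly exponential separator). The plan is to reduce the general case to the deterministic case already handled, paying one extra exponential in the reduction, and then invoke Proposition~\ref{prop:upperbounddet}.

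Let me think about how to determinize a Petri net's coverability behavior while controlling the size blowup.

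**The determinization step.** The natural route is to determinize $N_2$ — more precisely, to determinize the finitely-branching \UWSTS $\WSTSNB$ — using the construction underlying Theorem~\ref{thm:summary-expr}. That construction replaces configurations (markings) by finitary downward-closed sets of markings, i.e. performs a subset-style powerset construction in the space of downward-closed sets. For Petri nets, a finitary downward-closed set of markings is a finite union of ideals, and by Lemma~\ref{lem:natideals}(1) each ideal in $\N^d$ is $\downclosure{u}$ for a single $u \in \N_\omega^d$. So a deterministic state is a finite antichain of vectors in $\N_\omega^d$. The key quantitative question is: how large can the vectors and the antichains in the reachable part of this determinized system be? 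Successors are computed by firing transitions and taking downward closures (via the $\downclosure{\succe{\W}{X}{a}}$ construction from the ideal-completion/determinization machinery), then reducing to maximal ideals. Since firing a transition changes a coordinate by at most $\norm{F_2}$, and since acceleration to $\omega$ only occurs where values grow unboundedly, the finite coordinates appearing are bounded, and the number of ideals in an antichain over a bounded domain in dimension $\card{P_2}$ is at most exponential in $\card{P_2}$ times a factor for the coordinate bound. The upshot I expect is that the determinized net $N_2'$ — or rather the deterministic \UWSTS obtained — has description size singly exponential in $\card{N_2}$.

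**Assembling the bound.** Once I have a deterministic system $N_2'$ of size exponential in $\card{N_2}$ that is language-equivalent to $N_2$, I want to apply Proposition~\ref{prop:upperbounddet} to the pair $N_1$ and $N_2'$. That proposition yields a separator of size doubly exponential in $\card{N_1} + \card{N_2'}$. Substituting the singly-exponential bound for $\card{N_2'}$, a double exponential of an exponential is a triple exponential in $\card{N_1} + \card{N_2}$, which is exactly the claimed bound. One subtlety is that Proposition~\ref{prop:upperbounddet} is stated for Petri nets, whereas the determinization produces a deterministic \UWSTS whose states are antichains of $\N_\omega$-vectors rather than a literal Petri net. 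The honest way to handle this is either to re-encode $N_2'$ as a genuine (deterministic) Petri net of the same asymptotic size, or — more robustly — to redo the estimation of Proposition~\ref{prop:upperbounddet} at the level of \UWSTS, feeding in the size bounds for the basis of $\revreachall{\cdot}$ (Theorem~\ref{thm:BozelliGanty}) and the ideal-decomposition representation (Lemma~\ref{lem:natideals}) for the product $\WSTSNA \times \WSTSNB'$. The separator's states are ideals of the product, which by Lemma~\ref{lem:prodideals} are pairs of ideals, so their representation size is controlled coordinatewise.

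**The main obstacle.** The crux is the size analysis of the determinization: bounding both the magnitude of the finite coordinates and the number of incomparable ideals appearing in a reachable deterministic state by a single exponential in $\card{N_2}$. This is where the Petri-net-specific structure (Lemma~\ref{lem:natideals}, giving the concrete $\N_\omega^d$ representation and the complement/intersection formulas) must be combined with a termination/acceleration argument for the ideal-based powerset construction to certify that no second exponential sneaks in at this stage. The earlier sentence in the paper — ``For most types of \UWSTSes, this construction can be avoided. We demonstrate this for the case of Petri nets in the proof of Proposition~\ref{Proposition:GeneralUpper}'' — signals that the intended argument sidesteps the generic powerset determinization of Theorem~\ref{thm:summary-expr} in favor of a direct, size-aware construction tailored to nets, precisely to keep this step at a single exponential. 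Everything downstream is then the routine composition of exponentials described above.
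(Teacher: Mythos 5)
There is a genuine gap in your route, and it sits exactly where you flagged the ``main obstacle.'' First, the powerset determinization of $\WSTSNB$ over $\pfindown{\N^{\card{P_2}}}$ does not admit the singly-exponential size bound you hope for: that construction performs no acceleration (it is not Karp--Miller), so a reachable deterministic state is the downward closure of the \emph{exact} set of markings reachable along a word, and the antichain of its maximal elements grows without bound in the word length (e.g.\ a single $a$-labeled pair of transitions incrementing either of two places yields, after $n$ steps, an antichain of size $n+1$). There is no finite ``description size'' of the determinized system to be exponential in. Second, and independently, even granting some bounded representation, the determinized object is not a Petri net -- its state space is $\pfindown{\N^d}$ and its successor map is not a firing rule -- so Proposition~\ref{prop:upperbounddet} cannot be invoked: its proof rests on Theorem~\ref{thm:BozelliGanty}, a bound on the backward coverability basis that is specific to Petri nets and has no counterpart for the product of a net with such a powerset system. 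Neither of your two escape hatches (re-encoding as a net, or redoing the estimate at the \UWSTS level) is available.

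The paper's actual proof avoids determinization altogether. It relabels each transition of $\NB$ by its own name, yielding $\NBfree$ over the alphabet $T_2$, which is deterministic \emph{as a Petri net} of essentially the same size because each letter now identifies a unique transition; correspondingly it refines $\NA$ into a $T_2$-labeled net $\NAlambda$ so that $\lang{\NA\times\NB}=\lambda(\lang{\NAlambda\times\NBfree})$. Proposition~\ref{prop:upperbounddet} then gives a doubly-exponential NFA $\A$ separating $\NAlambda$ from $\NBfree$. The remaining subtlety -- which your proposal does not have to confront but which is where the third exponential actually arises -- is that $\lambda(\A)$ need not separate $\NA$ from $\NB$ (distinct $T_2$-words with the same $\lambda$-image can land on opposite sides); instead one complements $\A$, paying one exponential, and takes $\B=\lambda(\complA)$, which does separate by Lemma~\ref{lem:BisSep}. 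So the correct fix for your argument is not a size-aware powerset construction but the relabeling-plus-complementation scheme.
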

\noindent
The proof transforms $\NA$ and $\NB$ into $\NAlambda$ and $\NBfree$ so that $\NBfree$ is deterministic, invokes Proposition~\ref{prop:upperbounddet}, and then turns the resulting separator for $\NAlambda$ and $\NBfree$ into a separator for $\NA$ and $\NB$. 
The approach is inspired by~\cite{DBLP:journals/corr/ClementeCLP16a}.

Let $\NB$ be non-deterministic with labeling function $\lambda \colon \TB \to \Sigma$. %, where $T$ are the transitions of $\NB$.  
We define $\NBfree$ to be a variant of $\NB$ that is labeled by the identity function, i.e.~$\NBfree$ is a Petri net over the alphabet $\TB$. 
We have $\lang{\NB} = \lambda(\lang{\NBfree})$, where we see $\lambda$ as a homomorphism on words. 
We furthermore define $\NAlambda$ to be the $\TB$-labeled Petri net obtained from $\NA$ as follows. 
For each $a$-labeled transition $t_1$ of $\NA$ and each $a$-labeled transition $t$ of $\NB$, $\NAlambda$ contains a $t$-labeled copy $t_{1}^{t}$ of $t_1$ with the same input-output behavior.
%The copy has the same input-output-behavior as $t_1$. 
Transition $t_1$ itself is removed. 

%Intuitively, $\NAlambda$ determines all transitions in $\NB$ that $t_1$ can synchronize with (over $\Sigma$). 
%With this observation, we can synchronize $\NAlambda$ and $\NBfree$ (over $\TB$), 
%and then apply $\lambda$ --- and still obtain the product language.
%The interplay of these constructions with $\lambda$ is as expected. 
%It is immediate to show that the language of $\NA\times \NB$ is the $\lambda$-image of the language of $\NAlambda \times \NBfree$.
%\sm{Fix phrasing, maybe make this more formal.}

\begin{restatable}{rlemma}{restateLemmaPNDeterminization}
\label{lem:PNDeterminization}
    $\lang{\NA \times \NB} = \lambda ( \lang{ \NAlambda \times \NBfree })$.
\end{restatable}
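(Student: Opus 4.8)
The plan is to establish the language identity $\lang{\NA \times \NB} = \lambda(\lang{\NAlambda \times \NBfree})$ by relating the accepting runs of the two product nets through the relabeling $\lambda \colon \TB \to \Sigma$, viewed as a word homomorphism. Recall that by Lemma~\ref{Lemma:ProductNet}, the product net $\NA \times \NB$ is isomorphic to the synchronized product of the induced \UWSTS, so its language is $\lang{\NA} \cap \lang{\NB}$, and similarly $\lang{\NAlambda \times \NBfree} = \lang{\NAlambda} \cap \lang{\NBfree}$. The natural strategy is to argue the inclusion in both directions at the level of runs. The key structural fact, by construction, is that the $\TB$-labeled transitions $t_1^t$ of $\NAlambda$ are in bijection with pairs $(t_1,t)$ where $t_1$ is an $a$-labeled transition of $\NA$ and $t$ is an $a$-labeled transition of $\NB$ (so $\lambda(t)=a$), and each such $t_1^t$ has exactly the input-output behavior of $t_1$ on the places of $\NA$. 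Since $\NBfree$ is just $\NB$ relabeled by the identity on $\TB$, it has the same reachable markings and firing sequences as $\NB$; the only change is the label carried by each firing.

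First I would fix the direction $\lambda(\lang{\NAlambda \times \NBfree}) \subseteq \lang{\NA \times \NB}$. Take a word $u \in \lang{\NAlambda \times \NBfree} \subseteq \TB^*$, witnessed by a synchronized run over $u$, i.e.\ a sequence of $\TB$-letters $t^{(0)} \cdots t^{(k-1)}$ fired simultaneously in $\NAlambda$ and $\NBfree$. In $\NBfree$ this is literally a firing of the transitions $t^{(0)},\dots,t^{(k-1)}$ of $\NB$ covering the final marking; in $\NAlambda$, reading $t^{(j)}$ means firing some copy $t_1^{t^{(j)}}$, whose $\NA$-behavior is that of the original $a$-labeled transition $t_1$ with $\lambda(t^{(j)})=a$. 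Applying $\lambda$ letterwise gives a word $w = \lambda(u) \in \Sigma^*$; the underlying $\NA$-transitions then form an accepting $\NA$-run over $w$, and the $\NB$-transitions $t^{(0)}\cdots t^{(k-1)}$ form an accepting $\NB$-run over $\lambda(t^{(0)})\cdots\lambda(t^{(k-1)}) = w$. Hence $w \in \lang{\NA} \cap \lang{\NB} = \lang{\NA \times \NB}$, giving the inclusion.

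For the converse $\lang{\NA \times \NB} \subseteq \lambda(\lang{\NAlambda \times \NBfree})$, I would start from $w \in \lang{\NA} \cap \lang{\NB}$ and lift an accepting synchronized run of $\NA \times \NB$ over $w$ into one of $\NAlambda \times \NBfree$ over a suitable $u \in \lambda^{-1}(w)$. At each step the run fires an $\NA$-transition $t_1$ and an $\NB$-transition $t$ with the common label $\lambda(t_1)=\lambda(t)=a$; I use exactly this $t$ as the $\TB$-letter of $u$ and fire the copy $t_1^{t}$ in $\NAlambda$ together with $t$ in $\NBfree$. By construction these have matching labels (both carry $t$ under the identity and $\TB$-labelings), the $\NA$-place behavior is preserved, and the $\NB$-place behavior is literally the original $\NB$-firing, so enabledness and covering of the final markings are maintained. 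The resulting $u$ satisfies $\lambda(u)=w$ and $u \in \lang{\NAlambda \times \NBfree}$, completing the inclusion.

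The main obstacle — really the only delicate point — is bookkeeping the correspondence between transitions and labels so that synchronization is respected in both products simultaneously. One must be careful that a single $\Sigma$-step of $\NA \times \NB$ may correspond to several choices of $\NB$-transition $t$ with $\lambda(t)=a$, and conversely that the copy $t_1^t$ is well-defined precisely because such a pair exists; the $\TB$-labeling of $\NAlambda$ is what records this choice and forces the two components of $\NAlambda \times \NBfree$ to fire the \emph{same} $\NB$-transition $t$. Once this bijection between synchronized steps of the two products is made explicit, both inclusions are routine inductions on run length, and the homomorphism $\lambda$ transports accepting runs in the intended way.
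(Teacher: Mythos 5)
Your proof is correct and takes essentially the same route as the paper's: both inclusions are established by translating accepting synchronized runs between the two product nets via the correspondence between pairs $(t_1,t)$ with $\lambdaA(t_1)=\lambda(t)$ and the copies $t_1^{t}$ of $\NAlambda$, then applying $\lambda$ letterwise. Your explicit remark that the $\TB$-labeling is what forces both components to agree on the \emph{same} $\NB$-transition $t$ is exactly the point the paper's argument relies on.
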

\noindent
With this lemma, and since $\NA$ and $\NB$ are disjoint, $\NAlambda$ and $\NBfree$ have to be disjoint.
As $\NBfree$ is deterministic, we can apply Proposition~\ref{prop:upperbounddet} 
and obtain a separator for $\NAlambda$ and $\NBfree$. 
Let $\A$ be the doubly-exponential NFA over the alphabet $\TB$ with $\lang{\NAlambda} \subseteq \lang{\A}$ and $\lang{\NBfree} \cap \lang{\A} = \emptyset$.
We show how to turn $\A$ into a separator for $\NA$ and $\NB$.
The first step is to determine the complement automaton $\complA$, which satisfies
$\lang{\NBfree} \subseteq \lang{\complA}$ and $\lang{\NAlambda} \cap \lang{\complA} = \emptyset$. 
The second step is to apply $\lambda$ to $\complA$.
Let $\B=\lambda(\complA)$ be the automaton obtained from $\complA$ by relabeling each $t$-labeled transition to $\lambda(t)$.
The following lemma shows that $\B$ is a separator for the original nets.
The observation that the size of $\complA$ and hence the size of $\B$ is at most  exponential in the size of $\A$ concludes the proof of Proposition~\ref{Proposition:GeneralUpper}

\begin{restatable}{rlemma}{restateLemmaBisSep}
\label{lem:BisSep}
    $\lang{\NB} \subseteq \lang{\B}$ and $\lang{\NA} \cap \lang{\B} = \emptyset$.
\end{restatable}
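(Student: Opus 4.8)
The final statement to prove is Lemma~\ref{lem:BisSep}, asserting that $\B=\lambda(\complA)$ separates the original nets: $\lang{\NB}\subseteq\lang{\B}$ and $\lang{\NA}\cap\lang{\B}=\emptyset$. The plan is to push the two established facts about $\A$ — namely $\lang{\NAlambda}\subseteq\lang{\A}$ and $\lang{\NBfree}\cap\lang{\A}=\emptyset$, equivalently $\lang{\NBfree}\subseteq\lang{\complA}$ and $\lang{\NAlambda}\cap\lang{\complA}=\emptyset$ — through the relabeling homomorphism $\lambda$, and to relate everything back to the original languages via the identities $\lang{\NB}=\lambda(\lang{\NBfree})$ and $\lang{\NA}=\lambda(\lang{\NAlambda})$ (the latter holds since every $a$-labeled transition of $\NA$ becomes some $t$-labeled copy that $\lambda$ maps back to $a$). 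I would also record the basic fact that relabeling commutes with language: since $\B$ is obtained from $\complA$ by replacing each $t$-labeled edge with a $\lambda(t)$-labeled edge, we have $\lang{\B}=\lambda(\lang{\complA})$.

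For the first inclusion, I would argue $\lang{\NB}=\lambda(\lang{\NBfree})\subseteq\lambda(\lang{\complA})=\lang{\B}$, where the middle step uses monotonicity of $\lambda$ applied to $\lang{\NBfree}\subseteq\lang{\complA}$. This part is straightforward and direct.

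For the disjointness $\lang{\NA}\cap\lang{\B}=\emptyset$, the argument is more delicate because images of disjoint sets under a homomorphism need not stay disjoint. I would proceed by contradiction: suppose $w\in\lang{\NA}\cap\lang{\B}$. From $w\in\lang{\B}=\lambda(\lang{\complA})$ we obtain a word $u\in\lang{\complA}$ over $\TB$ with $\lambda(u)=w$. The key step is to show that this \emph{same} $u$ also lies in $\lang{\NAlambda}$, which would contradict $\lang{\NAlambda}\cap\lang{\complA}=\emptyset$. To establish $u\in\lang{\NAlambda}$, I would use the structure of $\NAlambda$: its $t$-labeled transitions are exactly copies of the $\lambda(t)$-labeled transitions of $\NA$ with identical input-output behavior, so an accepting run of $\NA$ on $w=\lambda(u)$ can be lifted to an accepting run of $\NAlambda$ reading $u$, by choosing at each step the copy whose label matches the corresponding letter of $u$. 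Here I expect to invoke Lemma~\ref{lem:PNDeterminization}, $\lang{\NA\times\NB}=\lambda(\lang{\NAlambda\times\NBfree})$, or its underlying run-lifting correspondence, to make this transfer rigorous rather than reconstructing the bijection between runs by hand.

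The main obstacle is precisely this last lifting: ensuring that from $w\in\lang{\NA}$ together with the specific preimage $u\in\lang{\complA}$ we can certify $u\in\lang{\NAlambda}$, rather than merely $\lambda(u)\in\lang{\NA}$. The subtlety is that $\NAlambda$ records \emph{which} $\NB$-transition label was used at each position, so I must verify that the preimage $u$ extracted from the run of $\complA$ is consistent with an actual firing sequence of $\NAlambda$. Since the copies $t_1^{t}$ preserve the input-output behavior of the originals $t_1$, any marking sequence realizable in $\NA$ is realizable in $\NAlambda$ along any compatible relabeling, so the consistency is automatic once the letter-by-letter correspondence is fixed; the cleanest route is to appeal directly to the correspondence packaged in Lemma~\ref{lem:PNDeterminization} and the identity $\lang{\NAlambda}=\lambda^{-1}(\lang{\NA})\cap\TB^*$ on the relevant alphabet, closing the contradiction.
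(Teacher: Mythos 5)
Your proposal is correct and follows essentially the same route as the paper: the inclusion $\lang{\NB}\subseteq\lang{\B}$ via monotonicity of $\lambda$, and disjointness by contradiction, lifting an accepting run of $\NA$ on $w$ to a $u$-labeled accepting run of $\NAlambda$ using the copies $t_1^{t}$, contradicting $\lang{\NAlambda}\cap\lang{\complA}=\emptyset$. The only small quibble is that Lemma~\ref{lem:PNDeterminization} (which concerns the products $\NA\times\NB$ and $\NAlambda\times\NBfree$) is not quite the right tool to cite for this lifting; the direct letter-by-letter construction you describe --- which is exactly what the paper does --- is the argument that closes the gap.
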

\noindent
Note that $\lambda(\A)$ is not necessarily a separator:
%as it may contain words from $\lang{\NB}$.
There might be $u \in \lang{A}$, $u \not\in \lang{\NBfree}$ such that there is $u' \in \lang{\NBfree}$ with $\lambda(u) = \lambda(u')$.
Thus, $\lambda(u) \in \lambda(\lang{\A}) \cap \lang{\NB}$.

\subparagraph{A Lower Bound.}

We now consider separation by \emph{deterministic} finite automata (DFA). 
In this case, we can show a triply-exponential lower bound on the size of the separator.
% by proving the following proposition.

\begin{restatable}{rproposition}{restatePropositionLowerbound}
\label{prop:lowerbound}
    For all $n \in \N$, there are disjoint Petri~nets $N_0(n)$ and $N_1(n)$ of size polynomial in $n$ such that any separating DFA has size at least triply exponential in $n$.
\end{restatable}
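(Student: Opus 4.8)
The plan is to exhibit, for each $n$, two disjoint Petri nets whose separation forces a DFA to track a counter ranging over a triply-exponential number of values, so that the Myhill--Nerode index of any separator is triply exponential. The natural vehicle is Lipton's construction~\cite{Lipton}, which produces, from a parameter $n$, a Petri net of size polynomial in $n$ that can count up to a value doubly exponential in $n$, namely $2^{2^n}$, using only polynomially many places and transitions. First I would use Lipton's net as a subroutine to build a device that, on an input word, can force a specific doubly-exponential value to be loaded into a designated counting place, and then test whether a second phase consumes exactly that many tokens. The key idea is to combine this counting power with the classical observation~\cite{Kozen1997} that recognizing the language $\{a^m b^m : m < M\}$ (or a suitable variant) requires a DFA with roughly $M$ states, so that if $M$ is triply exponential the DFA blows up accordingly.

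The construction I would aim for uses covering rather than reachability, so some care is needed: covering languages are upward-insensitive, which means exact equality tests must be encoded via the disjointness of the two nets rather than within a single net. Concretely, I would let $N_0(n)$ accept words encoding a "counting-up then counting-down" behavior where the down-phase covers the final marking precisely when the two phases are balanced in a way the net can enforce up to its counting capacity, and let $N_1(n)$ accept the complementary imbalanced behavior. Because the nets are disjoint, Theorem~\ref{thm:sepUWSTS} guarantees a regular separator exists; the point of the lower bound is that any such separator, and in particular any DFA, must distinguish all the counter values in the range $[0, M)$ with $M$ triply exponential. To get the extra exponential beyond Lipton's doubly-exponential counter, I would have the words themselves encode, in binary or via a repeated-phase trick, a number up to $2^{2^{2^n}}$, using the doubly-exponential counter of the Lipton net to verify one exponential level of the encoding while the DFA is forced to remember the other levels.

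The Myhill--Nerode lower bound is then extracted as follows. I would identify a family of roughly $M = 2^{2^{2^n}}$ words $w_0, w_1, \ldots$ and, for each pair $i \ne j$, a distinguishing suffix $u_{ij}$ such that $w_i u_{ij}$ lies in one of the two net languages while $w_j u_{ij}$ lies in the other. Any DFA separating $\lang{N_0(n)}$ from $\lang{N_1(n)}$ must then reach distinct states on $w_i$ and $w_j$, forcing at least $M$ states. The crucial check is that the Lipton net can in fact verify the relation that $u_{ij}$ tests, i.e.~that the counting capacity of the polynomial-size net suffices to enforce one level of the triply-exponential distinction while the remaining levels are pushed onto the finite control of the hypothetical separator.

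The main obstacle I anticipate is precisely this amplification step: Lipton's net natively counts only to a doubly-exponential bound, so obtaining a genuine triply-exponential lower bound requires a clever encoding that lets the nets certify a triply-exponential amount of information using only doubly-exponential internal counting. The trick must arrange that the DFA is responsible for storing the "missing" exponential worth of state, since a DFA has no auxiliary counters and must hard-code every distinction in its state set. Getting the two nets to be genuinely disjoint (so that a separator is guaranteed to exist at all) while simultaneously being hard to separate is the delicate balance, and verifying that the covering acceptance condition does not accidentally collapse the distinctions---thereby shrinking the required DFA---is where I expect most of the technical effort to go.
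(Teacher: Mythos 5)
Your high-level ingredients (Lipton's doubly-exponential counter, a classical DFA state lower bound, a Myhill--Nerode fooling-set argument) are the right ones, but the proposal leaves unresolved exactly the step that makes the proof work, and the candidate hard language you name would not deliver it. You propose basing the lower bound on a variant of $\{a^m b^m : m < M\}$; such a language forces roughly $M$ DFA states only if the \emph{net} can certify counts up to $M$, and since Lipton's construction only yields counters up to $2^{2^n}$, this route caps the lower bound at doubly exponential. You then correctly observe that an ``amplification step'' is needed to push one more exponential onto the DFA, but you flag it as the main anticipated obstacle rather than supplying it --- and that step is the entire content of the proof.

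The paper's resolution is to pick a different classical hard language: $\mathcal{L}_{x@k}$, the set of binary words whose $k$-th-last letter is $x$. The crucial asymmetry is that a Petri net can enforce ``the $k$-th-last letter is $x$'' while counting only up to $k$: it loads $k$ tokens via Lipton's $N_{\mathit{inc}}(n)$, consumes one token per letter read after the distinguished position, and uses $N_{\mathit{dec}}(n)$ to verify that at least $k$ tokens were consumed, so that the two one-sided coverability checks pin the distinguished letter to exactly the $k$-th-last position. By contrast, any DFA for $\mathcal{L}_{x@k}$ needs $2^k$ states, since it cannot guess where the word ends and must remember the last $k$ bits. Taking $k = 2^{2^n}$ gives polynomial-size nets $N_0(n)$ and $N_1(n)$ that are trivially disjoint (a letter cannot be both $0$ and $1$, which also sidesteps your worry about encoding exact balance under coverability semantics), and any separating DFA must distinguish $\mathcal{L}_{0@k}$ from $\mathcal{L}_{1@k}$, hence needs $2^{2^{2^n}}$ states. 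Without this specific choice --- or some other language exhibiting an exponential gap between the count the net must verify and the information the DFA must store --- your argument does not close.
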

\noindent
Our proof relies on the classical result that for each $x \in \binary$ and each $k \in \N$, the minimal DFA for the language
\(
    \mathcal{L}_{x@k} = \setof{ w \in \binary^{\geq k}   }{ \text{ the } k\text{-last letter in } w \text{ is } x}
\)
needs at least $2^k$ states~\cite{Kozen1997}. 
To obtain the desired lower bound, we will show how to generate $\mathcal{L}_{x@k}$ for a doubly-exponential number $k$ by a polynomially-sized Petri net.
To this end, we make use of Lipton's proof of $\EXPSPACE$-hardness for coverability~\cite{Lipton}.

%(see also~\cite{Esparza:1996:DCP:647444.727062})

%The language will be essentially $\mathcal{L}_{x@2^{2^n}}$.

% !TEX root = ../main.tex

\section{Conclusion}
\label{sec:remarks}

We have shown that, under mild assumptions, disjointness of \WSTS languages implies their regular separability. 
In particular, we have shown that if one of two disjoint upward-compatible \WSTS is finitely branching, they are regular separable.
Using our expressibility results, it is also sufficient if the underlying order for one of the two is an $\omega^2$-\wqo.
A similar result holds for downward-compatible \WSTS assuming that one of them is deterministic or the underlying order is an $\omega^2$-\wqo.
As \WSTSes are typically $\omega^2$-\WSTSes
%(e.g.~Petri nets and their extensions, LCS)
, our result already implies the decidability of regular separability for almost all \WSTSes of practical relevance.

Our work brings together research on inductive invariants and regular separability.
%Immediately by the definition, the synchronized system of two \WSTSes admits an inductive invariant of the \WSTSes are disjoint, which implies regularly separability.
We show that a finite representation of an inductive invariant for the product system can be transformed into a regular separator.
%(The required determinism is easy to obtain for many types of \WSTSes by employing transformations similar to the one used in Section~\ref{sec:bound}.)
For Petri nets, one may use any representation of the coverability set.
%, e.g.~obtained from the coverability graph~\cite{km69}.
As we show, it is  beneficial in terms of the worst-case size, to use an inductive invariant obtained from the backward coverability algorithm~\cite{ACJT96}.
For lossy channel systems, the coverability set is not computable~\cite{Mayr03}, but one can obtain a finitely-represented inductive invariant e.g.~from the EEC-algorithm~\cite{GEERAERTS2006180}.

%We show that for Petri nets, it is actually more beneficial in terms of the worst-case size to use an inductive invariant obtained from the backwards coverability algorithm~\cite{ACJT96}.
%This gives a triply-exponentially large separator, while the coverability graph may be of non-primitive recursive size.
We leave some questions without answer.
It is not clear whether the assumptions of Theorems~\ref{thm:sepUWSTS} and~\ref{thm:sepDWSTS} 
are necessary; we were neither able to drop the assumptions, nor to provide a counterexample. 
Similarly, we do not know whether the inclusions in Theorem~\ref{thm:summary-expr} are strict.
Finally, in the case of Petri nets, closing the gap between the triply-exponential size of the NFA separator 
and the triply-exponential lower bound for DFA remains an open problem.

As future work, one could consider the \emph{well-behaved transition systems (WBTS)} of~\cite{DBLP:journals/corr/BlondinFM16}, a  generalization of \WSTS where only the finite-antichain property is required.

\newpage
\bibliographystyle{plainurl}
\bibliography{35clmmns}

\newpage
\appendix

\section{Details for Section \ref{sec:wsts}}

\restateLemmapfin*

\begin{proof}
	We first assume that  $\big(\pfindown{X}, {\subseteq}\big)$ is a \wqo and prove that $(X, \preceq)$ is a \wqo.
    Consider any infinite sequence $x_1,x_2 \dots$, where each $x_i \in X$.
    By our assumption, in the infinite sequence
    $\downclosure{\set{x_1}}, \downclosure{\set{x_2}},\dots $
    we find $i < j$ such that $\downclosure{\set{x_i}} \subseteq \downclosure{\set{x_j}}$.
    We conclude $x_i \preceq x_j$ as desired.
    
    For the other direction, we will assume that $(X, \preceq)$ is a \wqo.
    Using Higmann's lemma, also $(X^*,\preceq^*)$ is a \wqo.
    Here $X^*$ is set of all finite sequences over $X$ and $\preceq^*$ is the subsequence order, i.e.~$w \preceq^* v$ if $w$ is obtained from $v$ by deleting symbols and/or replacing them by smaller symbols with respect to $\preceq$.
    Now consider any infinite sequence	$X_1,X_2,\dots$ in $\pfindown{X}$.
    By definition, each $X_i$ can be written as $\downclosure{\set{ u^i_1, \dots ,u^i_{n_i} }}$
    for appropriately chosen $u^i_j$.
    We represent each $X_i$ by $u^i_1 \dots u^i_{n_1} \in X^*$ and consider the sequence
    \[
        u^1_1 \dots u^1_{n_1}, u^2_1 \dots u^2_{n_2},\ldots
    \]
    in $X^*$.
    Using the fact that $(X^*,\preceq^*)$ is a \wqo, we obtain $i < j$ such that
    $u^i_1 \dots u^i_{n_i} \preceq^* u^j_1 \dots u^j_{n_j}$.
    From this, we immediately obtain $X_i \subseteq X_j$.
\end{proof}

\restateLemmaHigherBetter*

\begin{proof}
%[Proof of Lemma~\ref{lem:higher-better}]
	We claim that for all $w \in \Sigma^*$, $s \preceq s'$ and $s \trans{w} r$ implies $s' \trans{w} r'$ for some $r' \in S$ with $r \preceq r'$.
    
    We proceed by induction on $w$ and use upward compatibility.
    In the base case $w = \varepsilon$, there is nothing to prove.

    Let us now consider a word $w.a$ and $s \trans{w} t \trans{a} r$.
    Let $s \in S$ with $s \preceq s'$.
    By induction, there is a $t'$ such that $s' \trans{w} t'$ and $t \preceq t'$.
    By the upward compatibility of \UWSTSes, we get that there is some $r'$ with $t' \trans{a} r'$ with $r \preceq r'$ as required.
\end{proof}

\restateLemmaHigherBetterDownwardClosure*

\begin{proof}
%[Proof of Lemma~\ref{lem:higher-better-downclosure}]
    We claim that for each $w \in \Sigma^*$, $s \preceq s'$ and $s \trans{w} r$ in $\downclosure{\W}$ implies
    $s' \trans{w} r'$ in $\W$ for some $r'$ with $r \preceq r'$.
    	
	We proceed by induction on $w$.
    In the base case, the statement follows from the fact that the initial configurations of $\downclosure{\W}$ are the configurations in $\downclosure{I}$.
    
    Now consider a word $w.a.$ such that $s \trans{w} t \trans{a} r$ in $\downclosure{\W}$.
    By induction, there is $t'$ such that $s' \trans{w} t'$ in $\W$ and $t \preceq t'$.
    From the definition of transition in $\downclosure{\W}$, we have that $t \trans{a} r''$ in $\W$ for some $r''$ with
    $r \preceq r'' $.
    Since $t \preceq t'$, we can apply upward compatibility to obtain $t' \trans{a} r'$ in $\W$ with $r'' \preceq r'$.
    We conclude $s' \trans{w} t' \trans{a} r'$ in $\W$ with $r \preceq r'' \preceq r'$ as desired.
\end{proof}

\section{Details for Section \ref{sec:expressibility}}
\label{appendix:expressibility}

\restateTheoremSummaryExpr*

\noindent
We formulate and prove a series of lemmas which jointly prove Theorem~\ref{thm:summary-expr}.

\begin{restatable}{rlemma}{restateLemmaUWSTSomegatwodet}
    \label{lem:UWSTSomega2det}
    Every $\omega^2$-\UWSTS is equivalent to a deterministic \UWSTS.
\end{restatable}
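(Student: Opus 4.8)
The plan is a powerset construction performed on \emph{downward-closed} subsets of configurations, where the $\omega^2$ hypothesis is used exactly once, to guarantee that the resulting inclusion order is a \wqo.

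First I would pass from $\W = (S, T, \preceq, I, F)$ to its downward closure $\downclosure{\W}$, which recognizes the same language and has the convenient property (noted in the excerpt) that $\succe{\downclosure{\W}}{X}{a}$ is downward-closed for every $X \subseteq S$ and every $a \in \Sigma$. I then define a system $\V = (\pdown{S}, \to, \subseteq, \{\downclosure{I}\}, \barF)$ whose configurations are the downward-closed subsets of $S$, ordered by inclusion. Its unique initial configuration is $\downclosure{I}$; the unique $a$-successor of a downward-closed set $X$ is the downward-closed set $\succe{\downclosure{\W}}{X}{a}$; and $\barF = \setof{X \in \pdown{S}}{X \cap F \neq \emptyset}$. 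By construction $\V$ has exactly one initial configuration and exactly one $a$-successor per configuration and letter, so it is deterministic.

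Next I would check that $\V$ is a \UWSTS. The final set $\barF$ is upward-closed with respect to $\subseteq$, since $X \cap F \neq \emptyset$ and $X \subseteq X'$ give $X' \cap F \neq \emptyset$. Upward compatibility is immediate from monotonicity of successors: if $X \subseteq X'$, then $\succe{\downclosure{\W}}{X}{a} \subseteq \succe{\downclosure{\W}}{X'}{a}$, so the deterministic successors respect inclusion. The one genuine point is well-quasi-orderedness of the configuration order. Here I invoke Lemma~\ref{lem:omega2}: because $(S, \preceq)$ is an $\omega^2$-\wqo, the inclusion order $(\pdown{S}, \subseteq)$ is a \wqo. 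This is precisely the step that fails for a plain \wqo and forces the $\omega^2$ assumption; note that Lemma~\ref{c:pfin} only gives the property for the \emph{finitary} downward-closed sets, which need not be closed under the possibly infinitely-branching successor operation, so restricting to $\pfindown{S}$ is not available.

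Finally, for language equivalence I would argue as in the classical subset construction: by induction on $w$, the unique configuration that $\V$ reaches after reading $w$ is the full reachability set $\reachgen{\downclosure{\W}}{\downclosure{I}}{w}$, using that the successor of a union is the union of successors. This set is final in $\V$ iff it meets $F$, i.e.~iff $w \in \lang{\downclosure{\W}}$; and since $\W$ is equivalent to $\downclosure{\W}$, we obtain $\lang{\V} = \lang{\W}$. The main obstacle is thus conceptual rather than computational: isolating that the only place the construction can break is the \wqo-ness of inclusion on downward-closed sets, and that this is repaired exactly by $\omega^2$-\wqo-ness via Lemma~\ref{lem:omega2}.
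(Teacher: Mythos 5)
Your proposal is correct and follows essentially the same route as the paper: a powerset construction on the downward-closed subsets $\pdown{S}$ of (the downward closure of) $\W$, ordered by inclusion, with Lemma~\ref{lem:omega2} supplying well-quasi-orderedness and an induction showing the unique reached configuration is the reachability set. Your explicit checks of upward compatibility and of why $\pfindown{S}$ would not suffice are sensible elaborations of points the paper leaves implicit.
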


%\restateLemmaUWSTSomegatwodet*

\begin{proof}
    Let $\W = (S, T, \preceq, I, F)$ be an arbitrary $\omega^2$-\UWSTS. Wlog.~we assume that $\W = \downclosure{\W}$, i.e.~$\succe{\W}{X}{a}$ is downward closed for every $X$ and $a$, and $\reach{\W}{w}$ is downward closed for every $w$.
    
    We define a deterministic \UWSTS
    $\barW = (\barS, \barT, \subseteq, \barI, \barF)$ essentially as a powerset construction on $\W$.
    Let the configurations $\bar S = \pdown{S}$ be the downward closed subsets of $S$, ordered by inclusion $\subseteq$;
    this is a \wqo by Lemma~\ref{lem:omega2}.
    Let the only initial configuration of $\barW$ be the set $I$ itself, i.e.~let $\barI = \set{I}$.
    (Recall that $I$ is downward closed in~$\downclosure{\W}$.)
    Let the accepting configurations $\barF = \setof{X \in \barS}{X \cap F \neq \emptyset}$ be those downward closed
    subsets $X\in\barS$  which contain at least one accepting configuration from $F$.
    The transition relation $\barT$ is defined by the direct image,
    \begin{align}
    \label{eq:succdet}
    X \trans{a} \succe{\W}{X}{a}
    \end{align}
    which is well-defined as $\succe{\W}{X}{a} \subseteq S$ is always downward closed.
    The equality of the languages of $\W$ and $\barW$ follows directly from the following claim:
    \begin{claiminproof}
        For every $w\in \Sigma^*$, $\reach{\barW}{w} = \set{\reach{\W}{w}}$
        \ .
    \end{claiminproof}
    The claim is shown by induction on the length of $w$.
    The base case is $\barI = \{I\}$, and the step follows by~\eqref{eq:reachsucc} combined with
    the equality $\succe{\barW}{X}{a} = \{\succe{\W}{X}{a}\}$, a reformulation of~\eqref{eq:succdet}.
\end{proof}

\begin{restatable}{rlemma}{restateLemmaDWSTSomegatwodet}
    \label{lem:DWSTSomega2det}
    Every $\omega^2$-\DWSTS is equivalent to a deterministic \DWSTS.
\end{restatable}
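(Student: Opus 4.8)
The plan is to dualize the powerset construction from the proof of Lemma~\ref{lem:UWSTSomega2det}. The crucial change is that, since final configurations are now downward closed and compatibility is downward, I track \emph{upward-closed} reachability sets and order them by \emph{reverse} inclusion, so that the determinized system comes out as a \DWSTS rather than a \UWSTS.

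Let $\W = (S, T, \preceq, I, F)$ be an arbitrary $\omega^2$-\DWSTS. First I would replace $\W$ by its upward closure $\upclosure{\W}$, which recognizes the same language; from now on I assume $\W = \upclosure{\W}$, so that $I$ is upward closed and $\succe{\W}{X}{a}$ is upward closed for every $X \subseteq S$ and every $a \in \Sigma$. I then define a deterministic system $\barW = (\barS, \barT, \supseteq, \barI, \barF)$ as follows. The configurations $\barS = \pup{S}$ are the upward-closed subsets of $S$, ordered by reverse inclusion $\supseteq$; the unique initial configuration is $\barI = \set{I}$; a configuration is final, $\barF = \setof{X \in \barS}{X \cap F \neq \emptyset}$, exactly when it meets $F$; and the deterministic transition relation is the direct image $X \trans{a} \succe{\W}{X}{a}$, which is well defined because successor sets are upward closed.

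Next I would verify that $\barW$ is genuinely a deterministic \DWSTS and is equivalent to $\W$. The order $\supseteq$ makes $\barF$ downward closed: if $X \cap F \neq \emptyset$ and $Y \supseteq X$, then $Y \cap F \neq \emptyset$. Downward compatibility holds by monotonicity of the successor operation: if $X' \supseteq X$ then $\succe{\W}{X'}{a} \supseteq \succe{\W}{X}{a}$, which is exactly the required inequality in the order $\supseteq$. Language equivalence then follows from the invariant $\reach{\barW}{w} = \set{\reach{\W}{w}}$, proved by induction on $w$ just as in Lemma~\ref{lem:UWSTSomega2det}: the base case is $\barI = \set{I}$, and the step uses the reformulation $\succe{\barW}{X}{a} = \set{\succe{\W}{X}{a}}$. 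Since acceptance of $w$ in $\barW$ amounts to $\reach{\W}{w} \cap F \neq \emptyset$, this coincides with $w \in \lang{\W}$.

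The only place the hypothesis is needed — and the step I expect to carry the actual content — is showing that the configuration space $(\pup{S}, \supseteq)$ is a \wqo, which is what makes $\barW$ a legitimate \DWSTS. This is precisely where the $\omega^2$-\wqo assumption enters: reverse inclusion on upward-closed sets is isomorphic, via complementation, to inclusion on downward-closed sets, and by Lemma~\ref{lem:omega2} the latter is a \wqo exactly when $(S, \preceq)$ is an $\omega^2$-\wqo. Everything else — the closure property of $\barF$, downward compatibility, and the reachability invariant — is routine and mirrors the upward-compatible case.
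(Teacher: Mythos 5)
Your proposal is correct and follows essentially the same route as the paper: pass to $\upclosure{\W}$, take $(\pup{S},\supseteq)$ as configurations (a \wqo by the isomorphism with $(\pdown{S},\subseteq)$ and Lemma~\ref{lem:omega2}), use the direct-image transition relation, and conclude via the invariant $\reach{\barW}{w} = \set{\reach{\W}{w}}$. The only difference is that you spell out the downward-closedness of $\barF$ and the downward compatibility explicitly, which the paper leaves implicit.
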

\noindent
This can be proven similarly to~Lemma~\ref{lem:UWSTSomega2det}.

%\restateLemmaDWSTSomegatwodet*

\begin{proof}
%[Proof of Lemma~\ref{lem:DWSTSomega2det}]	
    Let $\W = (S, T, \preceq, I, F)$ be an $\omega^2$-\DWSTS.
    Wlog.~we assume that $\W = \upclosure{\W}$, i.e., the initial configurations and the transition relation in $\W$ are upward closed.
    A deterministic \DWSTS $\barW = (\barS, \barT, \supseteq, \barI, \barF)$ equivalent to $\W$ can be defined,
    essentially by a powerset construction on $\W$, similar to the proof of Lemma~\ref{lem:UWSTSomega2det}.
    Let the configurations $\barS = \pup{S}$ be the upward closed subsets of $S$, ordered by the superset relation $\supseteq$.
    
    Using the fact that $(\pup{S},\supseteq)$ and $(\pdown{S},\subseteq)$ are isomorphic and Lemma~\ref{lem:omega2}, we obtain that $(\pup{S}, \supseteq)$ is a \wqo. 
    Let the initial configuration of $\barW$ be the set $I$, i.e.~$\barI - \set{ I}$.
    The accepting configurations are those sets containing an accepting configuration in $\W$,
    $\barF = \setof{ X \in \barS }{ X \cap F \neq \emptyset }$.
    
    The transition relation $\barT$ is defined by the direct image:
    \[
    X \trans{a} \succe{\W}{X}{a}
    \]
    which is well-defined as $\succe{\W}{X}{a}$ is upward closed for $X \in \pup{S}$.
    The equality of the languages of $\W$ and $\barW$ follows directly from the following claim:
    \begin{claiminproof}
        For every $w\in \Sigma^*$, $\reach{\barW}{w} = \set{\reach{W}{w}}$.  % $\reach{\barW}{w} = \{\reach{W}{w}\}$.
    \end{claiminproof}
    The claim is shown by induction on the length of $w$.
    The induction base is $\barI = \{I\}$, and the inductive step follows by~\eqref{eq:reachsucc} combined with
    the equality $\succe{\barW}{X}{a} = \set{\succe{\W}{X}{a}}$.
\end{proof}

Also similarly, but using the finitary downward closed subsets, we prove the following result.

\begin{restatable}{rlemma}{restateLemmaUWSTSfintwodet}
    \label{lem:UWSTSfin2det}
    Every finitely-branching \UWSTS is equivalent to a deterministic \UWSTS.
\end{restatable}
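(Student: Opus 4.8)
The plan is to follow the powerset construction of Lemma~\ref{lem:UWSTSomega2det}, but to replace the family $\pdown{S}$ of all downward-closed subsets by the smaller family $\pfindown{S}$ of \emph{finitary} downward-closed subsets. This substitution is the crux: for a mere \wqo (rather than an $\omega^2$-\wqo) the set $\pdown{S}$ need not be a \wqo, whereas $\pfindown{S}$ always is by Lemma~\ref{c:pfin}. As in the template, I would first replace $\W$ by $\downclosure{\W}$, so that $\succe{\W}{X}{a}$ is downward closed for every $X$ and $a$ and $\lang{\W} = \lang{\downclosure{\W}}$; I keep the name $\W = (S, T, \preceq, I, F)$, noting that now $I$ is downward closed and, since finite branching makes the original initial set finite, $I$ is finitary.

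Define $\barW = (\barS, \barT, \subseteq, \barI, \barF)$ with $\barS = \pfindown{S}$ ordered by inclusion, $\barI = \set{I}$, $\barF = \setof{X \in \barS}{X \cap F \neq \emptyset}$, and the deterministic transition relation given by the direct image $X \trans{a} \succe{\W}{X}{a}$. By Lemma~\ref{c:pfin}, $(\barS, \subseteq)$ is a \wqo; the set $\barF$ is upward closed wrt.~$\subseteq$; and upward compatibility holds because the direct image is monotone, so $X \subseteq Y$ gives $\succe{\W}{X}{a} \subseteq \succe{\W}{Y}{a}$. Hence $\barW$ is a deterministic \UWSTS, provided the transition relation is well-defined, i.e.~provided $\succe{\W}{X}{a}$ is finitary whenever $X$ is.

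This well-definedness is the one place where finite branching is genuinely used, and it is the step I would handle most carefully. Let $X = \downclosure{\set{u_1, \ldots, u_n}}$ be finitary. By upward compatibility (Lemma~\ref{lem:higher-better}), every $a$-successor of an element $x \preceq u_i$ is dominated by some $a$-successor of $u_i$, so taking downward closures yields $\succe{\W}{X}{a} = \downclosure{\succe{\W}{\set{u_1, \ldots, u_n}}{a}}$. Since $\W$ is finitely branching, $\succe{\W}{\set{u_1, \ldots, u_n}}{a}$ is finite, hence its downward closure is finitary, as required.

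Finally, language equivalence follows from the invariant $\reach{\barW}{w} = \set{\reach{\W}{w}}$, proved by induction on $w$ exactly as in Lemma~\ref{lem:UWSTSomega2det}: the base case is $\barI = \set{I}$, and the step combines~\eqref{eq:reachsucc} with the reformulation $\succe{\barW}{X}{a} = \set{\succe{\W}{X}{a}}$ of the transition relation; the same induction, together with the previous paragraph, keeps each $\reach{\W}{w}$ finitary so that the invariant stays inside $\barS$. Then $w \in \lang{\barW}$ iff $\reach{\W}{w} \cap F \neq \emptyset$ iff $w \in \lang{\W}$, giving $\lang{\barW} = \lang{\W}$.
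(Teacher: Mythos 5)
Your proposal is correct and follows essentially the same route as the paper's proof: a powerset construction over the finitary downward-closed sets $\pfindown{S}$ (a \wqo by Lemma~\ref{c:pfin}), with finite branching used exactly where you use it, namely to keep successor sets finitary; the paper merely keeps the original $\W$ and builds the downward closure into the transition $X \trans{a} \downclosure{\succe{\W}{X}{a}}$ instead of normalizing to $\downclosure{\W}$ up front. One wording caveat: after you rename $\W := \downclosure{\W}$ the system is no longer finitely branching, so the sentence ``since $\W$ is finitely branching, $\succe{\W}{\set{u_1,\ldots,u_n}}{a}$ is finite'' should refer to the successors taken in the \emph{original} system, whose downward closure is then finitary --- which is clearly what you intend.
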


%\restateLemmaUWSTSfintwodet*

\begin{proof}
%[Proof of Lemma~\ref{lem:UWSTSfin2det}]
    Given a finitely-branching \UWSTS \mbox{$\W = (S, T, \preceq, I, F)$} we define a deterministic one
    \mbox{$\barW = (\barS, \barT, \subseteq, \barI, \barF)$} by a powerset construction.
    We proceed similarly as in the proof of Lemma~\ref{lem:UWSTSomega2det}, but using the finitely-represented downward closed subsets of $S$ instead of all such sets.

    Let configurations be downward closures of finite subsets of $S$, $\barS = \pfindown{S}$.
    They are ordered by inclusion $\subseteq$, which is a \wqo by Lemma~\ref{c:pfin}. 
    Define the unique initial configuration of $\barW$ as $\barI = \set{ \downclosure{I} }$, and note that $I$ is finite.
    The final configurations of $\barW$ are the sets containing a final configuration from $\W$,
    \mbox{$\barF = \setof{X\in\pfindown{S}}{X\cap F \neq \emptyset}$.}
    Let the deterministic transition relation be defined by direct image,
    \[X \trans{a} \downclosure{\succe{\W}{X}{a}}.\]
    Note that $X\in\pfindown{S}$ implies
    $\succe{\W}{X}{a} \in \pfindown{S}$, as $\W$ is assumed to be finitely branching and satisfies upward compatibility.
    The equality of the language of $\W$ and the language of $\barW$ follows by the following claim and the fact that $\W$ and $\downclosure{\W}$ are language-equivalent:
    \begin{claiminproof}
        For every $w\in \Sigma^*$, $\reach{\barW}{w} = \set{\reach{\downclosure{\W}}{w}}$
        \ .
    \end{claiminproof}
    We proceed by induction on $w$.
    
    \noindent
    In the base case, we have $\reach{\barW}{\varepsilon} = \set{ \downclosure{I} }= \set{ \reach{\downclosure{\W}}{w} }$ as desired.
    
    Let us now consider some word $w.a$.
    Using induction and the definition of the transition relation in $\barW$, we have
    \begin{align*}
        \reach{\barW}{w.a}
        &=
        \succe{\barW}{\reach{\barW}{w}}{a}
        \\
        &=
        \succe{\barW}{\set{\reach{\downclosure{W}}{w}}}{a}
        \\
        &=
        \set{ \downclosure{ \succe{\W}{\reach{\downclosure{W}}{w}}{a} } }
    \end{align*}
    
    We claim that we indeed have 
    \[
        \downclosure{ \succe{\W}{\reach{\downclosure{W}}{w}}{a} }
        =
        \reach{\downclosure{\W}}{w.a}
        \ .
    \]
    Let $s \in S$ be such that $s \preceq s'$ for some $s' \in \succe{\W}{\reach{\downclosure{W}}{w}}{a}$.
    This means $s' \trans{a} t'$ in $\W$ for some $t' \in \reach{\downclosure{W}}{w}$.
    By the definition of the transition relation of $\downclosure{\W}$, we conclude $s \trans{a} t$ in $\downclosure{\W}$ which implies $s \in \reach{\downclosure{\W}}{w.a}$.
    
    Let $s \in \reach{\downclosure{\W}}{w.a}$, then we have $s \trans{a} t$ in $\downclosure{\W}$ for some $t \in \reach{\downclosure{\W}}{w}$.
    By the definition of the transition relation of $\downclosure{\W}$, we have that there is some $s'$ with $s \preceq s'$ and $s' \trans{a} t$ in $\W$.
    We obtain $s' \in \succe{\W}{\reach{\downclosure{W}}{w}}{a}$ and conclude \mbox{$s \in \downclosure{ \succe{\W}{\reach{\downclosure{W}}{w}}{a} }$.}
         
    This proves the claim, showing that the language of $\barW$ is equal to the language of $\downclosure{\W}$, which in turn is equal to the language of $\W$.
\end{proof}

\begin{restatable}{rlemma}{restateLemmaDWSTSomegatwofin}
    \label{lem:DWSTSomega2fin}
    Every \DWSTS is equivalent to a finitely branching \DWSTS.
\end{restatable}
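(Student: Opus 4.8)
The plan is to leave the configuration space and the order of $\W$ untouched and to prune both the initial set and the transition relation down to minimal elements, exploiting that in a \wqo every subset has only finitely many minimal elements (they form an antichain) and that every element lies above some minimal one (well-foundedness). Concretely, starting from $\W = (S, T, \preceq, I, F)$ I would define a system $\W''$ over the same \wqo $(S,\preceq)$ with the same downward-closed final set $F$, with initial set $\min(I)$, the finite set of minimal elements of $I$, and with transitions $s \trans{a} r$ in $\W''$ exactly when $r$ is a minimal $a$-successor of $s$ in $\W$, i.e.~$r \in \min(\succe{\W}{s}{a})$. Because $(S,\preceq)$ is a \wqo, both $\min(I)$ and every $\min(\succe{\W}{s}{a})$ are finite, so $\W''$ is finitely branching; moreover every transition of $\W''$ is a transition of $\W$, and $\min(I) \subseteq I$.

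First I would verify that $\W''$ is a \DWSTS. The order is the same \wqo and $F$ is unchanged, hence still downward closed, so only downward compatibility needs checking. Given $s' \preceq s$ and a transition $s \trans{a} r$ in $\W''$, this is in particular a transition of $\W$, so downward compatibility of $\W$ yields $s' \trans{a} r''$ in $\W$ with $r'' \preceq r$; well-foundedness then provides a minimal element $r' \preceq r''$ of $\succe{\W}{s'}{a}$, giving $s' \trans{a} r'$ in $\W''$ with $r' \preceq r'' \preceq r$, as required.

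Next come the two language inclusions. The inclusion $\lang{\W''} \subseteq \lang{\W}$ is immediate: since $\min(I) \subseteq I$, the final set is shared, and the transitions of $\W''$ are transitions of $\W$, every accepting run of $\W''$ is an accepting run of $\W$. For $\lang{\W} \subseteq \lang{\W''}$ the crux is a downward simulation, proved by induction on $w$: whenever $s \trans{w} r$ in $\W$, then $s \trans{w} r'$ in $\W''$ for some $r' \preceq r$. The base case is trivial; in the step $s \trans{v} t \trans{a} r$ one applies the induction hypothesis to get $s \trans{v} t'$ in $\W''$ with $t' \preceq t$, then downward compatibility of $\W$ to obtain $t' \trans{a} r''$ in $\W$ with $r'' \preceq r$, and finally selects a minimal successor $r' \preceq r''$ to get $t' \trans{a} r'$ in $\W''$. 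Now if $w \in \lang{\W}$, witnessed by $i \trans{w} f$ with $i \in I$ and $f \in F$, I would pick $i_0 \in \min(I)$ with $i_0 \preceq i$ and use the word-level version of downward compatibility (shown as for Lemma~\ref{lem:higher-better}) to get $i_0 \trans{w} f_0$ in $\W$ with $f_0 \preceq f$, so $f_0 \in F$ by downward closure; the simulation claim then yields $i_0 \trans{w} f'$ in $\W''$ with $f' \preceq f_0$, and $f' \in F$ again by downward closure, whence $w \in \lang{\W''}$.

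The main obstacle is the $\supseteq$ inclusion, where the pruning of successors and the pruning of initial configurations must be handled together: the argument rests on the interplay of downward compatibility (to push runs to smaller configurations), well-foundedness of the \wqo (to guarantee that the finitely many minimal successors retained by $\W''$ suffice), and downward closure of $F$ (so that the smaller reached configuration is still accepting). Once the downward simulation lemma is established, both finite branching and language equivalence follow with no further effort.
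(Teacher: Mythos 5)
Your proof is correct and follows essentially the same route as the paper: restrict the transition relation to minimal successors and the initial set to its minimal elements, then use downward compatibility, well-foundedness, and downward-closedness of $F$ to transfer accepting runs. The only cosmetic difference is that the paper first normalizes to $\upclosure{\W}$ so it can prove the exact equality $\reach{\barW}{w}=\min(\reach{\W}{w})$, whereas you prove a downward-simulation inclusion directly on $\W$, which works just as well.
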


%\restateLemmaDWSTSomegatwofin*

\begin{proof}
    Given a \DWSTS $\W = (S, T, \preceq, I, F)$ we define a finitely-branching one
    $\barW = (\barS, \barT, \barpreceq, \barI, \barF)$.
    Wlog.~we assume, $\W = \upclosure{\W}$.
    
    The configurations of $\barW$ are the same as those of $\W$, $\barS = S$.
    The transition relation $\barT$ of $\barW$ is a subset of $T$, where only minimal successors wrt.~$\preceq$ are allowed:
    \begin{align} 
    \label{eq:succmin}
    \succe{\barW}{x}{a} \ \eqdef \ \min (\succe{\W}{x}{a}).
    \end{align}
    The initial configurations of $\barW$ are minimal initial configurations of $\W$, $\barI = \min (I)$.
    As $(S, \preceq)$ is a \wqo, all upward closed sets have a minimal basis.
    Since in $\upclosure{\W}$, the set of successors and the set of initial states are upward closed, $\barT$ is finitely branching and $\barI$ is finite.
    Finally, we put $\barF = F$.
    The equality of the languages of $\W$ and $\barW$ is due to the following equality for all $w\in\Sigma^*$:
    \begin{align}
    \label{eq:reachmin}
    \reach{\barW}{w} \ = \ \min (\reach{\W}{w}).
    \end{align}
    Indeed, as $F$ is downward closed, we have that $\reach{\barW}{w}$ contains a configuration from $F$ if and only if
    $\reach{\W}{w}$ does. 
    
    Finally, the equality~\eqref{eq:reachmin} itself is shown by induction on the length of $w$.
    The base case is $\barI = \min(I)$.
    The induction step follows by~\eqref{eq:reachsucc} and~\eqref{eq:succmin}.
    We start with
    \begin{align*}
    X \ \eqdef \ \reach{\barW}{wa} \
    &\stackrel{\eqref{eq:reachsucc}}{=} \ \succe{\barW}{\reach{\barW}{w}}{a}
    &\stackrel{\eqref{eq:succmin}}{=} \ \min (\succe{\W}{\reach{\barW}{w}}{a})
    \end{align*}
    and use the induction hypothesis to derive
    \mbox{$X = \min (\succe{\W}{\min (\reach{\W}{w})}{a})$}.
    Finally, using downward-compatibility and the assumption that the transition relation in $\W$ is upward closed we observe that 
    for every upward closed set $Y\subseteq S$,
    \[
    \succe{\W}{\min (Y)}{a}
    \ = \ 
    \succe{\W}{Y}{a},
    \]
    which allows us to complete the induction step:

    \mbox{\(
        % \min (\succe{\W}{\min (\reach{\W}{w})}{a}) 
        X \ = \ 
        \min (\succe{\W}{\reach{\W}{w}}{a}) \ \stackrel{\eqref{eq:reachsucc}}{=} \  \min (\reach{\W}{wa})
        \ .
        \)}
\end{proof}

By using similar powerset constructions, we can show the following two results.

\begin{restatable}{rlemma}{restateLemmaUpdown}
    \label{lem:updown}
    Every $\omega^2$-\UWSTS is reverse-equivalent to a deterministic \DWSTS.
\end{restatable}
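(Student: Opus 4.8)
The plan is to build the required deterministic \DWSTS by a \emph{backward} powerset construction on $\W$, mirroring the forward determinizations of Lemmas~\ref{lem:UWSTSomega2det} and~\ref{lem:DWSTSomega2det}, but tracking sets of \emph{predecessors} of the final configurations rather than successors of the initial ones. Let $\W = (S, T, \preceq, I, F)$ be the given $\omega^2$-\UWSTS. I would define $\barW = (\barS, \barT, \supseteq, \barI, \barF)$ where the configurations $\barS = \pup{S}$ are the upward-closed subsets of $S$, ordered by $\supseteq$. Since $(\pup{S}, \supseteq)$ is isomorphic, by complementation, to $(\pdown{S}, \subseteq)$, Lemma~\ref{lem:omega2} guarantees it is a \wqo precisely because $\preceq$ is an $\omega^2$-\wqo; this is the only place the $\omega^2$ assumption enters. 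The unique initial configuration is $\barI = \set{F}$, which is legal since $F$ is upward closed. The transition relation is the deterministic predecessor image $X \trans{a} \prede{\W}{X}{a}$, and the final configurations are those sets still meeting $I$, namely $\barF = \setof{X \in \pup{S}}{X \cap I \neq \emptyset}$.

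Next I would verify that $\barW$ is a well-defined deterministic \DWSTS. The transition image stays inside $\pup{S}$: if $X$ is upward closed, then so is $\prede{\W}{X}{a}$, which follows directly from upward compatibility of $\W$ (if $s \trans{a} t$ with $t \in X$ and $s \preceq s'$, then $s' \trans{a} t'$ with $t \preceq t' \in X$); in particular no assumption $\W = \downclosure{\W}$ is needed. Each configuration has exactly one successor per letter, and there is a single initial configuration, so $\barW$ is deterministic. The set $\barF$ is downward closed with respect to $\supseteq$, since $X \cap I \neq \emptyset$ and $Y \supseteq X$ give $Y \cap I \neq \emptyset$. Finally, downward compatibility holds by monotonicity of predecessors: reading $a$ from the $\supseteq$-smaller configuration $Y \supseteq X$ yields $\prede{\W}{Y}{a} \supseteq \prede{\W}{X}{a}$, which is again $\supseteq$-below $\prede{\W}{X}{a}$, because $X \subseteq Y$ implies $\prede{\W}{X}{a} \subseteq \prede{\W}{Y}{a}$.

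The language identity I would establish by induction on $v$: for every $v \in \Sigma^*$ the unique reached configuration satisfies $\reach{\barW}{v} = \revreach{\W}{\rev(v)}$, where $\revreach{\W}{w} = \setof{s}{s \trans{w} f \text{ for some } f \in F}$. The base case is $\reach{\barW}{\varepsilon} = F = \revreach{\W}{\varepsilon}$; the step combines $\reach{\barW}{v.a} = \succe{\barW}{\reach{\barW}{v}}{a} = \prede{\W}{\reach{\barW}{v}}{a}$ from~\eqref{eq:reachsucc} with the identity $\prede{\W}{\revreach{\W}{w}}{a} = \revreach{\W}{a.w}$ of~\eqref{eq:reachpred}, using $\rev(v.a) = a.\rev(v)$. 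Then $v \in \lang{\barW}$ iff $\reach{\barW}{v} \cap I \neq \emptyset$ iff some $i \in I$ satisfies $i \trans{\rev(v)} f$ with $f \in F$, i.e.~iff $\rev(v) \in \lang{\W}$. Hence $\lang{\barW} = \setof{\rev(w)}{w \in \lang{\W}}$, and since $\rev$ is an involution, $\W$ and $\barW$ are reverse-equivalent as required.

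The construction is essentially forced once one reads the system backwards; the one genuine subtlety is orienting the order correctly so that tracking upward-closed predecessor sets yields \emph{downward} (rather than upward) compatibility, which is exactly why $\pup{S}$ must be ordered by $\supseteq$ and why the accepting sets, those meeting $I$, come out downward closed. The main obstacle, such as it is, is the reliance on the $\omega^2$-\wqo hypothesis through Lemma~\ref{lem:omega2} to keep $(\pup{S}, \supseteq)$ a \wqo; the remaining verifications are routine monotonicity and closure checks.
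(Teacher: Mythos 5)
Your proposal is correct and matches the paper's own proof essentially verbatim: the same backward powerset construction on $\pup{S}$ ordered by $\supseteq$, with $\barI = \set{F}$, predecessor-image transitions, final sets those meeting $I$, the \wqo property via Lemma~\ref{lem:omega2}, and the same inductive invariant $\reach{\barW}{v} = \revreach{\W}{\rev(v)}$. No gaps.
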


%\restateLemmaUpdown*

\begin{proof}
%[Proof of Lemma~\ref{lem:updown}]
    Given an $\omega^2$-\UWSTS  $\W = (S, T, \preceq, I, F)$,
    we define a deterministic \DWSTS $\barW = (\barS, \barT, \barpreceq, \barI, \barF)$ as follows:

    Configurations are upward closed subsets of $S$, i.e.~$\barS = \pup{S}$, ordered by the superset relation, the reverse of inclusion:
    $U \barpreceq V$ iff $U \supseteq V$. 
    This order is isomorphic to the order $(\pdown{S}, \subseteq)$ of downward closed subset ordered by inclusion, hence a \wqo by Lemma~\ref{lem:omega2}. 
    There is one initial configuration in $\barW$, namely $\barI = \{F\}$. 
    An upward closed subset $U \in \pup{S}$ is final, i.e.~$U \in \barF$, if $U \cap I \neq \emptyset$, i.e.~if $U$ contains some initial configuration from $I$.
    The set $\barF$ is downward closed as required; indeed, if $U \in \barF$ and $U \subseteq V$ then necessarily $V\in \barF$ too.
    The deterministic transition relation is defined using the pre-image:
    % of $U$ along $a$-labeled transitions of $\W$:
    \[
    U \trans{a} \prede{\W}{U}{a}
    \ .
    %V = \{ s \in S \mid \exists u \in U, \ s \trans{a} u\}.
    \]
    (Note that $\prede{\W}{U}{a}$ is upward closed whenever $U$ is.)
    Finally, we verify that $\barpreceq$ satisfies the downward compatibility condition: 
    If $U \subseteq V$, then $\prede{\W}{U}{a} \subseteq \prede{\W}{V}{a}$.
    
    From the following claim, we deduce that the language of $\W$ is the reverse of the language of $\barW$.
    \begin{claiminproof}
        For every $w\in \Sigma^*$, $\reach{\barW}{w} = \set{\revreach{\W}{\rev(w)}}$.  % $\reach{\barW}{w} = \{\reach{W}{w}\}$.
    \end{claiminproof}
    We prove the claim by induction on $w$.
    In the base case, we have
    \begin{align*}
    \reach{\barW}{\varepsilon}
    = \barI = \set{ F }
    = \set{ \revreach{\W}{w} }
    \ .
    \end{align*}
    Now consider a word $w.a$ and note that $\rev(w.a) = a.\rev(w)$.
    Using induction and the definition of $\barT$, we have
    \begin{align*}
    \reach{\barW}{w.a}
    &\stackrel{\eqref{eq:reachsucc}}{=}
    \succe{\barW}{a}{\reach{\barW}{w}}
    \\
    &=
    \succe{\barW}{a}{\set{\revreach{\W}{\rev(w)}}}
    \\
    &=
    \prede{\W}{a}{\set{\revreach{\W}{\rev(w)}}}
    \\
    &\stackrel{\eqref{eq:reachpred}}{=}
    \set{\revreach{\W}{a.\rev(w)}}
    \ .
    \end{align*}
    Thus,
    $w\in \lang{\barW}$
    iff (by the above claim) $\revreach{\W}{\rev(w)} \in \barF$
    iff $\revreach{\W}{\rev(w)} \cap I \neq \emptyset$
    iff $\reach{\W}{\rev(w)} \cap F \neq \emptyset$
    iff \mbox{$\rev(w) \in \lang{\W}$.}
    %
    %
    %It remains to check that $L(\W') = \rev(L(\W))$. 
    %Indeed, for every $i \in I$ we have the following equivalence:
    %$i \Trans{w} f$ in $\W$ for some $f\in F$ if, and only if
    %$F \Trans{\rev(w)} U$ in $\W'$ for some $U$ such that $i\in U$.
    %Therefore $w\in L(\W)$ if, and only if $\rev(w) \in L(\W')$.
\end{proof}

\begin{restatable}{rlemma}{restateLemmaDownup}
    \label{lem:downup}
    Every $\omega^2$-\DWSTS is reverse-equivalent to a deterministic \UWSTS.
\end{restatable}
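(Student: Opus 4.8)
The plan is to dualize the construction from the proof of Lemma~\ref{lem:updown}, swapping the roles of upward and downward closure throughout. Starting from an $\omega^2$-\DWSTS $\W = (S, T, \preceq, I, F)$, I would build a deterministic \UWSTS $\barW = (\barS, \barT, \subseteq, \barI, \barF)$ by a powerset construction that runs $\W$ backwards. Concretely, I take the configurations to be the downward-closed subsets of $S$, i.e.~$\barS = \pdown{S}$, ordered by inclusion. Since $\W$ is an $\omega^2$-\DWSTS, $(S, \preceq)$ is an $\omega^2$-\wqo, so $(\pdown{S}, \subseteq)$ is a \wqo by Lemma~\ref{lem:omega2}; this is exactly where the $\omega^2$ hypothesis is used. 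The single initial configuration is $\barI = \set{F}$, which is well-defined because $F$ is downward closed in the \DWSTS $\W$. A configuration $U \in \barS$ is final, $U \in \barF$, if it meets the initial set, $U \cap I \neq \emptyset$; this collection is upward closed wrt.~$\subseteq$, as a \UWSTS requires. Finally, the deterministic transition relation is the pre-image, $U \trans{a} \prede{\W}{U}{a}$.

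Before the language computation, I would discharge the two well-definedness and compatibility obligations. First, $\prede{\W}{U}{a}$ must again be downward closed whenever $U$ is: if $s \trans{a} u$ with $u \in U$ and $s' \preceq s$, then downward compatibility of $\W$ yields $s' \trans{a} u'$ with $u' \preceq u$, and since $U$ is downward closed, $u' \in U$, so $s' \in \prede{\W}{U}{a}$. Second, upward compatibility of $\barW$ reduces to monotonicity of the pre-image, $U \subseteq V \implies \prede{\W}{U}{a} \subseteq \prede{\W}{V}{a}$, which is immediate from the definition; because $\barW$ is deterministic, this monotonicity is precisely the simulation condition.

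The heart of the argument is the invariant that the powerset construction tracks the backward reachability sets of $\W$ along the reversed word:
\begin{claiminproof}
For every $w \in \Sigma^*$, $\reach{\barW}{w} = \set{\revreach{\W}{\rev(w)}}$.
\end{claiminproof}
I would prove this by induction on $w$. The base case is $\reach{\barW}{\varepsilon} = \barI = \set{F} = \set{\revreach{\W}{\varepsilon}}$. For the step, using $\rev(w.a) = a.\rev(w)$, equation~\eqref{eq:reachsucc}, the definition of $\barT$, and equation~\eqref{eq:reachpred}, I compute
\begin{align*}
\reach{\barW}{w.a}
&\ = \ \succe{\barW}{\reach{\barW}{w}}{a}
\ = \ \set{\prede{\W}{\revreach{\W}{\rev(w)}}{a}}\\
&\ = \ \set{\revreach{\W}{a.\rev(w)}}
\ = \ \set{\revreach{\W}{\rev(w.a)}}\ .
\end{align*}

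From the claim, reverse-equivalence follows by unfolding acceptance: $w \in \lang{\barW}$ iff $\revreach{\W}{\rev(w)} \in \barF$ iff $\revreach{\W}{\rev(w)} \cap I \neq \emptyset$ iff there exist $i \in I$ and $f \in F$ with $i \trans{\rev(w)} f$, i.e.~iff $\rev(w) \in \lang{\W}$. Hence $\lang{\barW} = \setof{\rev(w)}{w \in \lang{\W}}$, so $\W$ and $\barW$ are reverse-equivalent, as required. I expect no single step to be a genuine obstacle, since the whole proof mirrors Lemma~\ref{lem:updown} under the up/down duality; the two places demanding care are checking that pre-images preserve downward-closedness (this uses the \emph{downward} compatibility of $\W$) and confirming that it is the $\omega^2$ assumption, via Lemma~\ref{lem:omega2}, that makes $(\pdown{S}, \subseteq)$ a \wqo rather than merely Lemma~\ref{c:pfin}.
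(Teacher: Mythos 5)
Your proposal is correct and matches the paper's own proof essentially step for step: the same powerset construction on $\pdown{S}$ ordered by inclusion, the same pre-image transition relation, the same invariant $\reach{\barW}{w} = \set{\revreach{\W}{\rev(w)}}$, and the same appeal to Lemma~\ref{lem:omega2} for the \wqo property. The only difference is that you spell out the downward-closedness of pre-images in slightly more detail than the paper does, which is a welcome addition rather than a deviation.
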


%\restateLemmaDownup*

\begin{proof}
%[Proof of Lemma~\ref{lem:downup}]
    Let    $\W = (S, T, \preceq, I, F)$ be the given $\omega^2$-\DWSTS,
    we show how to construct the required deterministic \UWSTS $\barW = (\barS, \barT, \subseteq, \barI, \barF)$ as follows:
    Configurations are downward closed subsets of $S$, i.e.~$\barS = \pdown{S}$, ordered by inclusion $\subseteq$.
    By Lemma~\ref{lem:omega2}, $(\pdown{S},\subseteq)$ is a \wqo.
    
    The initial configuration is given by $\barI = \set{ F } $.
    Note that by definition $F$ is a downward closed set.
    A downward closed set $D \in \barS$ is final if it contains an initial configuration, i.e.
    \[
        \barF = \setof{ D \in \pdown{S} }{ D \in \barS \wedge D \cap I \neq \emptyset}
    \]
    Note that for any $V$, $U \subseteq V$ for some $U \in \barF$ implies $V \cap I \neq \emptyset$, hence $V \in \barF$.
    Hence $\barF$ is upward closed as required.
    The deterministic transition relation is given by
    \[
        U \trans{a} \prede{\W}{U}{a}
    \]
    If $U$ is downward closed then $\prede{\W}{U}{a}$ is also downward closed by downward-compatibility.
    
    Finally, we show that $\barpreceq$ satisfies upward compatibility.
    For this, let $U,V$ be such that  $U \subseteq V$.
    We need to prove that if $U \trans{a} U'$, then there is a $V'$ with $V \trans{a} V'$ and $  U' \subseteq V'$.
    This follows from $U' = \prede{\W}{U}{a} \subseteq  \prede{\W}{V}{a} = V'$.
    
    The equality of the language of $\W$ and the language of $\barW$ follows by the following claim:
    \begin{claiminproof}
    For every $w\in \Sigma^*$, $\reach{\barW}{w} = \set{\revreach{\W}{\rev(w)}}$.  % $\reach{\barW}{w} = \{\reach{W}{w}\}$.
    \end{claiminproof}
    The proof of the claim is similar to the one in the proof of Lemma~\ref{lem:updown}.
    Thus,
    $w\in \lang{\barW}$
    iff (by the above claim) \mbox{$\revreach{\W}{\rev(w)} \in \barF$}
    iff \mbox{$\revreach{\W}{\rev(w)} \cap I \neq \emptyset$}
    iff \mbox{$\reach{\W}{\rev(w)} \cap F \neq \emptyset$}
    iff \mbox{$\rev(w) \in \lang{\W}$.}
\end{proof}

\section{Details for Section~\ref{sec:separability}}

\restateLemmaSim*

\begin{proof}
%[Proof of Lemma~\ref{lem:sim}]
        
        \mbox{}
    
        \noindent(1)
            We show that for every $s \in \reach{\W}{w}$, there is some \mbox{$(s, s') \in \reach{\prodW}{w}$}.
            
            Indeed, % as $\W'$ is total,
            $s \in \reach{\W}{w}$ implies \mbox{$(s, s') \in \reach{\prodW}{w}$} for \mbox{$s' = \reach{\W'}{w}$.}
            
       \noindent(2)
            We show that for every $(s, s') \in \reach{\prodW}{w}$, there is some $(r, r') \in \reach{\A}{w}$ with $(s, s') \prodpreceq (r, r')$.
           
            We proceed by induction on $w$.
            In the base case, we have  that any $(s,s') \in \prodI = \reach{\prodW}{\varepsilon}$ is dominated by some $(r,r') \in Q$ since $I \subseteq \downclosure{Q}$, Property~\ref{eq:iiI}.
            By the definition of $Q_I$, we have \mbox{$(r,r') \in Q_I = \reach{\A}{\varepsilon}$}.
            
            Now consider $(s,s') \in \reach{\prodW}{w.a}$.
            By definition, there is $(\tilde{s},\tilde{s}') \in \reach{\prodW}{w}$ and $(\tilde{s},\tilde{s'}) \trans{a} (s,s')$ in $\prodW$.
            Applying induction, we obtain $(\tilde{r},\tilde{r}') \in \reach{\A}{w}$ with \mbox{$(\tilde{s},\tilde{s}') \prodpreceq (\tilde{r},\tilde{r}')$.}
            Using the upward-compatibility of $\prodW$, there is a transition \mbox{$(\tilde{r},\tilde{r}') \trans{a} (t,t')$} in $\prodW$ such that $(s,s') \prodpreceq (t,t')$.
            Since $(r,r') \in Q$, and $\downclosure{Q}$ is closed under taking successors in $\prodW$, Property~\ref{eq:iiSucc}, we have that there is $(r,r') \in Q$ such that $(t,t') \prodpreceq (r,r')$.
            We have $(s,s') \prodpreceq (r,r')$ be transitivity.
            To complete the proof, it remains to argue that there is a transition $(\tilde{r},\tilde{r}') \trans{a} (r,r')$ in $\A$.
            To this end, we instantiate the definition of the transition relation, using $(t,t') \prodpreceq (r,r')$.
\end{proof}

\restateLemmaInvsimWprime*

\begin{proof}
%[Proof of Lemma~\ref{lem:invsimWprime}]
    We proceed by induction on $w$.
    The base case follows by the definition of the initial states in $\A$.
    For the induction step, consider $(s, s') \in \reach{\A}{w.a}$, which means 
    there is $(r,r') \in \reach{\A}{w}$ with $(r,r') \trans{a} (s,s')$ in $\A$.
    By definition of the transition relation in the automaton, there is $(t,t')$ such that
    $(r,r') \trans{a} (t,t')$ in $\prodW$ and $ (t,t') \prodpreceq (s,s')$.
    By definition of the transition relation in $\prodW$, we have $r' \trans{a} t'$ in $\W'$.
    
    We apply induction to $(r,r')$ and get \mbox{$\reach{\W'}{w} \preceq' r'$.}
    In $\W'$, we have \mbox{$\reach{\W'}{w} \trans{a} \reach{\W'}{w.a}$.}
    Using upward compatibility of $\W'$, $r'$ can simulate this transition.
    Since $\W'$ is deterministic, it is in fact simulated by $r' \trans{a} t'$ and we conclude $\reach{\W'}{w.a} \preceq' t'$.
    Hence, \mbox{$\reach{\W'}{w.a}  \preceq' t' \preceq' s'$}.
    %    \begin{comment}
    %    %This diagram illustrates the argumentation:
    %    %    \sm
    %    %    {
    %    %        The order between $\reach{\W'}{w}$ and $(r,r')$ should be something like $(-,\preceq')$, but this will look ugly as hell.
    %    %    }
    %    %    
    %        \[
    %            \xymatrix
    %            {
    %                && (s,s') \in Q \\
    %                Q \ni (r,r') \ar@/^1pc/[rru]^a_{\text{ in } \A} \ar[rr]^a_{\text{ in } \prodW} && 
    %                (t,t') \in \prodS \ar@{}[u]|{\rot{\prodpreceq} \qquad} \\
    %                \reach{\W'}{w} \ar@{}[u]|{\rot{\preceq'} \qquad} \ar[rr]^a_{\text{ in } \W'} &&
    %                \reach{\W'}{w.a} \ar@{}[u]|{\rot{\preceq'} \qquad} &&
    %            }
    %        \]
    %    \end{comment}
\end{proof}

%\restateTheoremSepDWSTS*

\begin{proof}
    %[Proof of Theorem~\ref{thm:sepDWSTS}]
    Our goal is to apply Theorem~\ref{thm:core}.
    Consider an arbitrary \DWSTS $\W = (S, T, \preceq, I, F)$ and a deterministic one $\W' = (S', T', \preceq', I', F')$.
    We start with the observation that the inversed versions of $\W$ and $\W'$, namely with the orders $\preceq^{-1}$ and 
    $(\preceq')^{-1}$ and denoted by $\W^{-1}$ and $(\W')^{-1}$, are \ULTSes.
    We claim that these \ULTSes satisfy the assumptions of Theorem~\ref{thm:core}.
    The language of \mbox{$\prodW^{-1}=\W^{-1} \times (\W')^{-1}$} is empty since the language of $\prodW=\W \times \W'$ is empty and inversion does not change the language, $\lang{\W} = \lang{\W^{-1}}$ and similar for $\W'$. 
    Inversion also does not influence determinism. 
    
    It remains to find an inductive invariant of $\prodW^{-1}$ that is finitely represented.
    We claim that 
    \begin{align*}
    X\ =\ \downclosureinv{\reachall{\prodW^{-1}}}
    \end{align*}
    is a suitable choice.
    The subscript indicates that the downward closure is computed relative to the quasi order of $\prodW^{-1}$. 
    The proof of Lemma~\ref{Lemma:Invariants} shows that $X$ is an inductive invariant.
    For the finite representation, note that inversion does not change the transition relation. 
    Hence, $\prodW$ and $\prodW^{-1}$ reach the same configurations, 
    \begin{align*}
    \reachall{\prodW^{-1}}=\reachall{\prodW}=Z\ .
    \end{align*}
    With the definition of inversion, $X=\downclosureinv{Z} = \upclosure{Z}$ holds. 
    %\begin{align*}
    %X=\downclosureinv{\reachall{\prodW^{-1}}}=\upclosure{\reachall{\prodW}}\ .
    %\end{align*}
    Moreover, $\upclosure{Z} = \upclosure{\min(Z)}$, with minimum and upward closure computed relative to $\prodW$. 
    Since the configurations of $\prodW$ are well quasi ordered, $\min(Z)$ is finite. 
    Another application of inversion yields $X=\upclosure{\min(Z)}=\downclosureinv{\min(Z)}$.  
    Hence, $X$ is a finitely-represented downward-closed subset of $\prodW^{-1}$. 
    
    By Theorem~\ref{thm:core}, the languages of $\W^{-1}$ and $(\W')^{-1}$ are regular separable and so are the languages of $\W$ and $\W'$. 
\end{proof}

\restateLemmaIdcompl*

\begin{proof}
%[Proof of Lemma~\ref{lem:idcompl}]
    The ideal completion is a \ULTS.
    Indeed, the set of final configurations $\barF$ is upward-closed since the ideals are ordered by inclusion. 
    The transition relation still satisfies upward compatibility by Lemma~\ref{lem:idealsirred}.
    
    The language is preserved due to the following invariant:

    \begin{claiminproof}
%    \label{eq:cupreach}
    For every $w\in \Sigma^*$, \(\bigcup \reach{\idcompl{\W}}{w} \ = \ \downclosure{\reach{\W}{w}}\)\ .
    \end{claiminproof}
    We prove the claim by induction on $w$.
    
    \noindent
    In the base case, we have
    \begin{align*}
    \bigcup \reach{\idcompl{\W}}{\varepsilon}
    &\, = \, \bigcup \idcompl{I}
    \\
    &\, = \,  \bigcup \idec{S}{\downclosure{I}}
    \\
    &\stackrel{\eqref{eq:union}}{=} \downclosure{I}
    \\
    &\, = \,  \downclosure{ \reach{\W}{\varepsilon}}
    \end{align*}
    as required.
    
    Consider $w.a$. 
    Using upward compatibility, induction, and the fact that unions commute with taking the successor in $\W$, we obtain
    \begin{align*}
    \downclosure{\reach{\W}{w.a}}
    &= \downclosure{\succe{\W}{\reach{\W}{w}}{a}}
    \\
    &= \downclosure{
        \succe{\W}{\downclosure{\reach{\W}{w}}}{a}
    }
    \\
    &= \downclosure{
        \succe{\W}{\bigcup \reach{\idcompl{\W}}{w}}{a}
    }
    \\
    &= \downclosure{
        \succe{\W}{\bigcup_{Y \in \reach{\idcompl{\W}}{w}}Y}{a}
    }
    \\
    &=
    \bigcup_{Y \in \reach{\idcompl{\W}}{w}}
    \downclosure{
        \succe{\W}{Y}{a}
    }
    \ .
    \end{align*}
    Using Equation~\eqref{eq:union}, the last expression is equal to
    \[
    \bigcup_{Y \in \reach{\idcompl{\W}}{w}}
    \bigcup
    \idec{S}{
        \downclosure{
            \succe{\W}{Y}{a}
        }
    }\ .
    \]
    By the above equality and the definition of the transition relation in $\idcompl{\W}$, we obtain
    \begin{align*}
    &\bigcup_{Y \in \reach{\idcompl{\W}}{w}}
    \bigcup
    \downclosure{
        \succe{\W}{Y}{a}
    }
    \\
    =\ &
    \bigcup
    \bigcup_{Y \in \reach{\idcompl{\W}}{w}}
    \succe{\idcompl{\W}}{Y}{a}
    \\
    =\ &
    \bigcup
    \succe{\idcompl{\W}}{\reach{\idcompl{\W}}{w}}{a}
    \\
    =\ &
    \bigcup
    \reach{\idcompl{\W}}{w.a}
    \end{align*}
    as desired.
    
    Suppose now that $\W= (S, T, \preceq, I, F)$ is deterministic.
    In particular, $I = \set{i}$ for some $i \in S$.
    Then $\downclosure{\set{i}}$ is the unique initial configuration of $\idcompl{\W}$. 
    To prove that $\succe{\idcompl{\W}}{X}{a}$ contains a unique element, we show that $\downclosure{\succe{\W}{X}{a}}$ is already an ideal. 
    To this end, it is sufficient to show that whenever $X$ is directed, $\succe{\W}{X}{a}$ is directed. 
    Consider $r,r' \in \succe{\W}{X}{a}$.
    Then there are $s,s' \in X$ with $s \trans{a} r$ and $s' \trans{a} r'$.    
    As $X$ is directed, there is an element $\tilde{s}$ with $s \preceq \tilde{s}$ and $s' \preceq \tilde{s}$.
    Using upward compatibility, $\tilde{s}$ can simulate the transitions of $s$ and $s'$.
    Since $\W$ is deterministic, there is in fact a unique $\tilde{r} \in \succe{\W}{X}{a}$ with $\tilde{s} \trans{a} \tilde{r}$ and $r \preceq \tilde{r}$ as well as $r' \preceq \tilde{r}$, as desired.
\end{proof}

\restateLemmaIsoprod*

\begin{proof}
%[Proof of Lemma~\ref{lem:isoprod}]
    With Lemma~\ref{lem:prodideals}, 
    as an isomorphism between $\idcompl{\W} \times \idcompl{\W'}$ and $\idcompl{\W \times \W'}$
    take the function $I, J \mapsto I\times J$ that maps a pair of ideals to their  product. 
    This is an isomorphism as the transition relation in the ideal completion of a \UWSTS is defined by direct image, 
    and direct image commutes with product.
\end{proof}

\section{Details for Section~\ref{sec:bound}}

\subsection{Proof of Theorem~\ref{Theorem:UpperPN}}

\restatePropositionUpperbounddet*

\begin{proof}
%[Proof of Proposition~\ref{prop:upperbounddet}]
    Let $N_1$ and $N_2$ be the given Petri nets with a total of $d\in \Nat$ places. 
    Let $\W=\wstsof{N_1\times N_2}$. 
    Since $\lang{\W}$ is empty, $X = \Nat^d \setminus \revreachall{\W}$ is an inductive invariant of $\W$ by the proof of Lemma~\ref{Lemma:Invariants}. 
    By Proposition~\ref{prop:inducedinvariant}, $\downclosure{Y}$ with $Y = \idec{\Nat^d}{X}$ is a finitely-represented inductive invariant in the ideal completion $\idcompl{\W}$.  
    By Lemma~\ref{Lemma:ProductNet} and Lemma~\ref{lem:isoprod}, we have
    \begin{align*}
    \idcompl{\wstsof{N_1\times N_2}}\ \iso\ \idcompl{\wstsof{N_1}\times \wstsof{N_2}}\ \iso\ \idcompl{\wstsof{N_1}}\times \idcompl{\wstsof{N_2}}\ .
    \end{align*}
    Hence, as $N_2$ is deterministic, we can construct a separating finite automaton with states $Y$ by Theorem~\ref{thm:core}.
    
    It remains to prove that the cardinality of $Y$ is at most doubly-exponential.
    To this end, we invoke Theorem~\ref{thm:BozelliGanty} and consider a representation $\revreachall{\W} = \upclosure{\set{v_1, \ldots, v_k}}$.
    We have
    \begin{align*}
    \Cmplment
    &= \Nat^d \setminus \upclosure{\set{ v_1, \ldots, v_k }}\\
    &= \Nat^d \setminus \bigcup_{i \in \interval{k}} \upclosure{v_i}\\
    &= \bigcap_{i \in \interval{k}} \left(\Nat^d \setminus \upclosure{v_i} \right)\\
    &= \bigcap_{i \in \interval{k}}\bigcup_{j_i \in \interval{d}} \downclosure{u_{<v_i(j_i)}}
    %   = \bigcap_{i \in \interval{k}} \bigcup_{j_i \in \interval{d}} \downclosure{u_{<v_i(j_i)}}
    \  .
    \end{align*}  
    where the ideal representatives $u_{<v_i(j_i)}$ are constructed as in Lemma~\ref{lem:natideals}(3).     
    We use distributivity to rewrite this expression as 
    \begin{align*}
    \bigcap_{i \in \interval{k}} \bigcup_{j_i \in \interval{d}} \downclosure{u_{<v_i(j_i)}}
    %        &
    %        =
    %        \bigcap_{j_1 = 1}^{d}
    %        \ldots
    %        \bigcap_{j_k = 1}^{d}
    %            \downclosure{\left(\bigcap_{i = 1}^{k} u_{i j_i} \right)}
    %        \\
    &
    =
    \bigcup_{\vec{j} \in \interval{d}^k}
    \bigcap_{i\in \interval{k}}
    \downclosure{u_{<v_i(\vec{j}(i))}}
    \ .
    \end{align*}
    By Lemma~\ref{lem:natideals}(1) and~(2), applied inductively, we obtain that each intersection 
    \begin{align*}
    \bigcap_{i=1}^{k} \downclosure{u_{<v_i(\vec{j}(i))}}\quad\text{is an ideal}\quad \downclosure{u_{\vec{j}}}\ .
    \end{align*}
    This means the ideal decomposition of $\Cmplment$ consists of at most $d^k \leq d^g$ many ideals, with $g$ as defined in Theorem~\ref{thm:BozelliGanty}.
    
    This bound is triply-exponential.
    We improve by observing that $\norm{u_{\vec{j}}} \leq \norm{\set{v_1, \ldots, v_k}} \leq g$ for all $\vec{j}$.
    (Here, the infinity norm is extended to $\Nat_\omega^d$ by treating $\omega$-components as zero.)
    Consequently, all non-$\omega$ components are bounded by $g$, and we only have $(g+2)^d$ many such vectors in $\Nat_\omega^d$. 
    
    The ideal decomposition of $\Cmplment$ thus consists of at most
    \begin{align*}
    h=\left(
    \left(
    \card{T}
    \left(
    \norm{F}
    +
    \norm{M_0}
    +
    \norm{M_f}
    +
    2
    \right)
    \right)^{2^{\bigO{d \cdot \log d}}}
    + 2
    \right)^d
    \end{align*}
    many ideals. 
    Note that even if $ \norm{F} + \norm{M_0} + \norm{M_f}$ are exponential in $\card{N_1}+\card{N_2}$ (due to the binary encoding of values), $h$ is still doubly exponential in the size of the given Petri nets.
\end{proof}

\restateLemmaPNDeterminization*

\begin{proof}
%[Proof of Lemma~\ref{lem:PNDeterminization}]
    We show that for the Petri nets $\NAlambda$ and $\NBfree$ that we have constructed,
    \[
        \lang{\NA \times \NB} = \lambda ( \lang{ \NAlambda \times \NBfree })
    \]
    holds.
        
    Let $w$ be in the left-hand side.
    Consider the corresponding computations $u_1$ and $u_2$ of $\NA$ and $\NB$, respectively.
    $u_2$ can be seen as a computation of $\NBfree$ with $\lambda(u_2) = w$ as desired.
    Consider the computation $u_1'$ of $\NAlambda$ that is obtained as follows:
    If at some position $i$, $u_1$ uses transition $t_1 \in \TA$ and $u_2$ uses transition $t \in \TB$ (where $\lambdaA(t_1) = \lambda (t)$ has to hold), we let $u_1'$ use transition $t_1^{t}$.
    Note that in $\NAlambda$, this transition has label $t$.
    Indeed, $u_1'$ synchronizes with $u_2$ as desired, proving $u_2 \in \lang{ \NAlambda \times \NBfree }$ and thus $\lambda(u_2) = w$ is in the right-hand side.
    
    Let $\lambda(u_2)$ be in the right hand side and let $u_1'$, $u_2$ be the corresponding computations.
    $u_2$ is already a computation of $\NB$ with labeling $\lambda(u_2)$ as desired.
    We define the computation $u_1$ of $\NA$ as follows:
    If at some positions $i$, $u_1'$ uses some transition $t_1^{t}$, we define $u_1$ to use transition $t$ at this position.
    Note that $\lambdaA(t_1) = \lambda(t)$ has to hold.
    The computations synchronize as desired, proving that $\lambda(u_1') = \lambda(u_2)$ is in the left-hand side.
\end{proof}

\restateLemmaBisSep*

\begin{proof}
%[Proof of Lemma~\ref{lem:BisSep}]
    We have $\lang{\NBfree}\subseteq \lang{\complA}$ and hence
   \begin{align*}
 \lang{\NB} = \lambda(\lang{\NBfree}) \subseteq \lambda(\lang{\complA})=\ 
 \lang{\lambda(\complA)}=\lang{\B}.
%    \lang{\NB} = \lambda(\lang{\NBfree}) \subseteq \lambda(\lang{\complA})=    
    \end{align*}

    As for disjointness, assume $w\in \lang{\NA}\cap\lang{\B}$.
    Then there is an accepting computation $u$ in $\NA$ with $\lambdaA(u)=w$ and $v\in\lang{\complA}$ with $\lambda(v)=w$.
    We inductively construct a $v$-labeled accepting computation in $\NAlambda$.
    This will contradict $\lang{\NAlambda} \cap \lang{\complA} = \emptyset$.
    Whenever $u$ uses some transition $t_1$ and $v$ uses $t\in \TB$, use the transition $t_{1}^{t}$ of $\NAlambda$.
    Since we have $\lambdaA (t_1) = \lambda (t)$, the transition $t_{1}^{t}$ indeed exists in $\NAlambda$.
    Because the behavior of $t_1$ and $t_{1}^{t}$ is the same, the resulting computation of $\NAlambda$ is still accepting.
\end{proof}

\subsection{Proof of Theorem~\ref{Theorem:LowerPN} / Proposition~\ref{prop:lowerbound}}

\restatePropositionLowerbound*
\noindent
To construct the required nets, we make use of Lipton's proof of $\EXPSPACE$-hardness for coverability~\cite{Lipton}.
We will not need the precise construction, the following lemma gives a specification of Lipton's Petri nets that is enough for our purposes.

\begin{lemma}[Lipton~\cite{Lipton}]
    \label{Lemma:Lipton}
    For every $n\in\Nat$, 
    \begin{enumerate}[a)]
        \item there is a Petri net $\ninc(n)$ of size polynomial in $n$ with places $\phaltinc$ and $\pout$ such that
        any computation leading to a marking $M$ with $M(\phaltinc) = 1$ has $M(\pout) = 2^{2^n}$.
        \item there is $\ndec(n)$ of size polynomial in $n$ with places $\phaltdec$ and $\pin$ such that there is a computation leading to a marking $M$ with $M(\phaltdec) = 1$ if and only if $M_0(\pin) \geq 2^{2^n}$. 
    \end{enumerate}
    We can assume all transitions in these Petri nets carry label $\llipton$.
\end{lemma}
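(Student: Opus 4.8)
The plan is to reconstruct Lipton's polynomial-size counting nets by induction on a level parameter $k$, using the tower $B_0 = 2$ and $B_{k+1} = B_k^2$, so that $B_n = 2^{2^n}$. The central obstacle throughout is that Petri nets are monotone and offer no zero test, whereas both parts of the lemma require \emph{exact} counting up to the doubly-exponential value $B_n$. The construction turns the squaring identity $B_{k+1} = B_k \cdot B_k$ into a nesting of loops: a gadget that acts $B_{k+1}$ times is assembled from two level-$k$ loops, each iterating exactly $B_k$ times, so that every construct at level $k+1$ is built solely from constructs at level $k$.

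For each level $k$ I would keep a complementary pair of places $(c_k, \bar c_k)$ under the structural invariant $c_k + \bar c_k = B_k$: increments of $c_k$ move a token out of $\bar c_k$, and decrements move it back, so the sum is preserved once established. This pair is what substitutes for the missing zero test. A level-$k$ loop counts its counter down from $B_k$, and certifying that it ran the full $B_k$ iterations amounts to certifying $c_k = 0$, equivalently $\bar c_k = B_k$; since the invariant already forces $\bar c_k \le B_k$, it suffices to confirm $\bar c_k \ge B_k$ by draining $B_k$ tokens from $\bar c_k$ and restoring them. That drain is itself a loop of length $B_k = B_{k-1}^2$, which I implement with nested level-$(k-1)$ loops. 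Hence the termination check at level $k$ descends to level $k-1$, and the recursion bottoms out at $k = 0$: since $B_0 = 2$ is a constant, the pair $(c_0, \bar c_0)$ with $c_0 + \bar c_0 = 2$ lets a single transition that consumes two tokens from $\bar c_0$ certify $c_0 = 0$ in constant size. A point I would stress is that $B_k$ is far too large to appear in the initial marking --- its binary length is $2^k$ --- so the invariants $\bar c_k = B_k$ are not pre-loaded but established dynamically by an initialization phase that builds level $k$ from level $k-1$ via the same squaring gadget.

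Equipped with these level-$n$ gadgets, $\ninc(n)$ runs the initialization phase and then fires the level-$n$ production gadget once, depositing exactly $B_n$ tokens on $\pout$ before it is allowed to set $\phaltinc$; because the complementary invariants force every nested loop to run its exact number of iterations, any marking $M$ with $M(\phaltinc) = 1$ satisfies $M(\pout) = 2^{2^n}$. Dually, $\ndec(n)$ applies the mirror-image level-$n$ \emph{consumption} gadget to $\pin$ and sets $\phaltdec$ only if this certified removal of $B_n$ tokens completes; as an empty place cannot be decremented, the removal completes --- and $\phaltdec$ becomes reachable --- exactly when $M_0(\pin) \ge 2^{2^n}$. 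Each level contributes only $O(1)$ places and transitions and there are $n+1$ levels, so both nets are of size polynomial in $n$; relabelling every transition uniformly by $\llipton$ is harmless, since the lemma constrains only reachable markings, which a uniform relabelling leaves unchanged.

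The step I expect to be the genuine obstacle is the exactness proof: I must show by induction on $k$ that no computation can make a level-$k$ loop iterate fewer or more than $B_k$ times and still pass the termination check and reach the halting configuration. Everything hinges on the invariant $c_k + \bar c_k = B_k$ and on showing it is both preserved inside every gadget and correctly re-established after each use, so that the two nested level-$k$ loops of a level-$(k+1)$ construct may reuse the same level-$k$ counter. Getting this bookkeeping of initialization, use, and restoration right across all $n+1$ levels --- rather than any single gadget in isolation --- is where the real work lies.
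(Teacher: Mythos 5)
Your reconstruction is essentially Lipton's original construction, but note that the paper itself does not prove Lemma~\ref{Lemma:Lipton} at all: it is stated as a black-box \emph{specification} of Lipton's nets with a citation to~\cite{Lipton}, and the surrounding text says explicitly that the precise construction is not needed. So there is no paper-internal proof to compare against; what you supply is the standard argument behind the citation: the tower $B_k = 2^{2^k}$, complementary place pairs $c_k + \bar c_k = B_k$ substituting for the missing zero test, squaring $B_{k+1} = B_k^2$ realized by two nested level-$k$ loops, dynamic initialization (your observation that $B_n$ cannot appear in a polynomial-size initial marking under binary encoding is correct and matches the paper's size definition, which charges $\lceil \log_2(1+\norm{M_0})\rceil$), the monotonicity argument for part~(b), and the remark that uniform relabeling by $\llipton$ is immaterial since only reachable markings are constrained. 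Two points where your sketch is thinner than Lipton's actual construction and would need care in a full write-up. First, your claim that each level contributes only $O(1)$ places and transitions holds only if the level-$k$ zero-test gadget is implemented \emph{once} and shared among its constantly many call sites, with call and return encoded by dedicated control places; naively inlining the level-$(k-1)$ tests inside each level-$k$ loop multiplies the size by a constant factor per level and yields exponential, not polynomial, size. Second, the two nested loops of a level-$(k+1)$ gadget cannot literally reuse a single level-$k$ counter pair concurrently, as your text suggests: the outer loop must hold its count while the inner counter is repeatedly drained and restored, so each level needs a constant number of \emph{independent} counter pairs, and the exactness induction must additionally establish that every gadget returns its counters to their initial configuration before the next use. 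Both issues are standard bookkeeping in Lipton's construction and do not threaten correctness, so your proposal is a sound, self-contained route to a statement the paper deliberately outsources.
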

\noindent
Using the lemma, we create for $x \in \binary$ a new Petri net $N_x(n)$ over the alphabet $\set{ 0, 1, \llipton, \lphase}$ whose language is essentially $\mathcal{L}_{x@2^{2^n}}$, where
\[
    \mathcal{L}_{x@k} = \setof{ w \in \binary^{\geq k}   }{ \text{ the } k\text{-last letter in } w \text{ is } x}
\]

A computation of $N_x(n)$ consists of four successive phases.
Four so-called control places $p_1, \ldots, p_4$  indicate the current phase of the computation, and all transitions of a specific phase check that the corresponding control place carries a token.
\begin{enumerate}
    \item[1.]
    In the first phase, $\ninc(n)$ is used to create $2^{2^n}$ tokens on $\pout$ and one token on $\phaltinc$. The phase ends by firing a $\lphase$-labeled transition that checks for a token on $\phaltinc$ and moves the token from control place $p_1$ to $p_2$.
    Note that all transitions are labeled $\llipton$ up to this point.
    \item[2.]
    In the second phase, there are two transitions labeled by $0$ resp.~$1$ which only check that the control place carries a token.
    
    They create an arbitrary sequence in $\binary^*$, corresponding to the part of the word before the last $2^{2^n}$ letters.
    \item[$x$.]
    The second phase ends with an $x$-labeled transition that moves the token from control place $p_2$ to $p_3$. It consumes one token from $\pout$ and generates one token on $\pin$.
    \item[3.]
    In the third phase, there are two transitions labeled by $0$ resp.~$1$, each moving one token from $\pout$ to $\pin$. 
    The phase ends by firing a $\lphase$-labeled transition that moves the token from $p_3$ to $p_4$.
    \item[4.]
    In the fourth phase, $\ndec(n)$ is used to check that the number of tokens on $\pin$ at the beginning of the phase is at least $2^{2^n}$.
\end{enumerate}
The initial marking for $N_x(n)$ assigns a token to $p_1$ as well as the necessary initial tokens to $\ninc(n)$ and $\ndec(n)$. 
The final marking requires a token on $p_4$ and a token on the place $\phaltdec$ of $\ndec(n)$.

We claim that the language of $N_x(n)$ is
\[
\lang{ N_x(n) }\ =\ \mathcal{L}'.\lphase.\mathcal{L}_{x@2^{2^n}}.\lphase.\mathcal{L}''\ \qquad \text{with }\mathcal{L}',\mathcal{L}'' \subseteq \llipton^*\ .
\]
With the control places $p_1,\ldots, p_4$, words in $\lang{ N_x(n) }$ clearly have the shape $w_{\inc}.c.w.c.w_{\dec}$, where $w_{\inc}$ and $w_{\dec}$ are computations in Lipton's Petri nets.
%where $\mathcal{L}',\mathcal{L}'' \subseteq \llipton^*$ correspond to the transitions taken by $\ninc(n)$ and~$\ndec(n)$.
%Let $w_{\inc}.c.w.c.w_{\dec}$ be a word in the language. 
%where $w_{inc}$ and $w_{dec}$ correspond to computations in Lipton's nets.
We argue that the $2^{2^n}$-last letter in $w$ is $x$. 
After running $\ninc(n)$, we have $2^{2^n}$ tokens on $\pout$.
This means the $x$-labeled transition that ends the second phase has to be fired at most $2^{2^n}$ letters before the end of $w$, because any transition fired in the third phase consumes a token from $\pout$. 
Since any transition fired during the third phase also produces a token on $\pin$, and we check for $2^{2^n}$ tokens on this place in phase four, the $x$-labeled transition has to be fired at least $2^{2^n}$ letters before the end of $w$.

Since $\ninc(n)$ and $\ndec(n)$ are of size polynomial in $n$, and in $N_x(n)$ we only add a constant number of transitions and places, and a polynomial number of entries to the flow function, it is clear that $N_x(n)$ is also polynomially-sized.

It remains to argue that $\lang{ N_0(n) }$ and $\lang{ N_1(n) }$ cannot be separated by a DFA of less than triply-exponential size.
Recall that the languages of each $N_x(n)$ is
\[
    \mathcal{L}'.\lphase.\mathcal{L}_{x@2^{2^n}}.\lphase.\mathcal{L}''
    \ .
\]
Since the languages are not distinguishable in their $\mathcal{L}'$ prefix and $\mathcal{L}''$ suffix, the separator has to distinguish $\mathcal{L}_{0@2^{2^n}}$ from $\mathcal{L}_{1@2^{2^n}}$.  
These languages partition $\binary^{\geq 2^{2^n}}$, so the separator has to incorporate a DFA for $\mathcal{L}_{0@2^{2^n}}$.
It is a classic result from automata theory that any DFA for $\L_{x@m}$ needs to have at least $2^m$ states~\cite{Kozen1997}.
Intuitively, a DFA for $\L_{x@m}$ cannot guess the end of the word.
It always needs to store the last $m$ bits of the input that it has processed so far, for which there are $2^m$ possibilities.
For the sake of completeness, we give a formal proof.

\begin{proposition}
\label{prop:lxat22n}
%    Let $n \in \N$ and consider the Petri nets $N_0(n), N_1(n)$.
    Any DFA $\A$ such that $\lang{N_0(n)} \subseteq \lang{\A}$ and
    $\lang{\A} \cap \lang{N_1(n)} = \emptyset$ needs to have at least $2^{2^{2^n}}$ many states.
\end{proposition}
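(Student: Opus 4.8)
The plan is to prove the lower bound by a Myhill--Nerode style fooling-set argument: I will exhibit $2^{2^{2^n}}$ words, indexed by $\binary^{k}$ with $k = 2^{2^n}$, that any separating DFA $\A$ must send to pairwise distinct states. The starting point is the language decomposition $\lang{N_x(n)} = \mathcal{L}'.\lphase.\mathcal{L}_{x@k}.\lphase.\mathcal{L}''$ with $\mathcal{L}', \mathcal{L}'' \subseteq \llipton^*$, together with the crucial observation that $\mathcal{L}'$ and $\mathcal{L}''$ do \emph{not} depend on $x$: the increment and decrement phases are identical in $N_0(n)$ and $N_1(n)$, so only the middle factor $\mathcal{L}_{x@k}$ carries the parameter. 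First I would fix, once and for all, a common prefix $p = p'.\lphase$ with $p' \in \mathcal{L}'$ and a common suffix $s = \lphase.s''$ with $s'' \in \mathcal{L}''$; both factors are nonempty since the Lipton nets admit accepting increment resp.~decrement computations (Lemma~\ref{Lemma:Lipton}). Because the three factors of the decomposition are chosen independently, $p.w.s \in \lang{N_x(n)}$ for every $w \in \mathcal{L}_{x@k}$.

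Next I would set up the distinguishing words. For each $u \in \binary^{k}$ consider the word $p.u$, and take two distinct $u, u' \in \binary^{k}$. They differ in some position $j \in \{1, \dots, k\}$, and without loss of generality $u_j = 0$ while $u'_j = 1$. As distinguishing continuation I would use $y = 0^{j-1}.\lphase.s''$. The key is the position arithmetic: the word $u.0^{j-1}$ has length $k + (j-1) \geq k$, so its $k$-last letter sits at position $j$, namely $u_j$. Hence $u.0^{j-1} \in \mathcal{L}_{0@k} \setminus \mathcal{L}_{1@k}$ and $u'.0^{j-1} \in \mathcal{L}_{1@k} \setminus \mathcal{L}_{0@k}$. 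Plugging these middle words into the decomposition gives $p.u.y \in \lang{N_0(n)} \setminus \lang{N_1(n)}$ and $p.u'.y \in \lang{N_1(n)} \setminus \lang{N_0(n)}$.

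Now I would invoke the separator properties $\lang{N_0(n)} \subseteq \lang{\A}$ and $\lang{\A} \cap \lang{N_1(n)} = \emptyset$. They yield $p.u.y \in \lang{\A}$ but $p.u'.y \notin \lang{\A}$. If $\A$ reached the same state after reading $p.u$ and after reading $p.u'$, then $p.u.y$ and $p.u'.y$ would be accepted or rejected simultaneously, a contradiction. Therefore the runs of $\A$ on the $2^{k}$ words $\set{ p.u \mid u \in \binary^{k} }$ end in pairwise distinct states, so $\A$ has at least $2^{k} = 2^{2^{2^n}}$ states, as claimed.

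The only genuinely delicate point is the very first step: verifying that the prefix $\mathcal{L}'.\lphase$ and the suffix $\lphase.\mathcal{L}''$ are literally common to both nets and that the decomposition is a true concatenation, so that the fixed $p$ and $s$ recombine with every admissible middle word independently of $x$. This is exactly what forces a separator to tell $\mathcal{L}_{0@k}$ apart from $\mathcal{L}_{1@k}$ on the middle factor and thereby inherit the classical $2^{k}$-state lower bound for $\mathcal{L}_{x@k}$~\cite{Kozen1997}; the remaining position bookkeeping and the Myhill--Nerode contradiction are routine.
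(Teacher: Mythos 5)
Your proposal is correct and is essentially the paper's own argument: the paper also fixes increment/decrement words $w_{\inc}, w_{\dec}$, considers the $2^{k}$ words $w_{\inc}.\lphase.u$ for $u \in \binary^{k}$, and distinguishes two of them that agree on the DFA state via the padding $0^{j-1}$ followed by $\lphase.w_{\dec}$, with exactly the same position arithmetic. The only difference is presentational (you phrase it as a direct fooling-set argument, the paper as a pigeonhole contradiction), and your explicit flagging of the independence of the factors in the decomposition $\mathcal{L}'.\lphase.\mathcal{L}_{x@k}.\lphase.\mathcal{L}''$ is a point the paper establishes in the surrounding discussion rather than inside the proof.
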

\begin{proof}
    For ease of notation, we define $k=2^{2^n}$.
    Assume towards a contradiction that $\A$ has strictly less than $2^k$ states.
    We consider the set $\calB = \binary^{k}$ of all sequences over $\binary$ of length exactly $k$.
    We have that $\card{\calB} = 2^k$.
    
    Let $w_{\inc} \in \llipton^*$ be a word corresponding to computation for $N_{\inc}(n)$ that creates $k$ many tokens on places $\pout$ and one token on $\phaltinc$.
    Similarly, let $w_{\dec} \in \llipton^*$ be a word corresponding to a computation for $N_{\dec}(n)$ that creates a token on $\phaltdec$ (assuming that $\pin$ contains at least $k$ tokens).
    
    For any $w = w_1 \ldots w_{k} \in \calB$, $w_{\inc}.c.w.c.w_{\dec}$ is contained in the language of $N_x(n)$ for exactly one $x \in \binary$, namely for $x = w_1$.
    For each $w \in \calB$, we denote by $q_w$ the unique control state in which the DFA $\A$ is after processing $w_{\inc}.c.w$.
    Since $\A$ has strictly less than $2^k$ states, but $\calB$ has $2^k$ elements, there are distinct $w,w' \in \calB$ such that $q_w = q_{w'}$.
    Since $w \neq w'$, there is some bit $i \geq 1$ such that $w_i \neq w_i'$.
    We assume without loss of generality $w_i = 0, w_i' = 1$.
    Let us define $w_{\mathit{fill}} = 0^{i-1}$.
    Since $w_{\inc}.c.w$ and $w_{\inc}.c.w'$ lead to the same state, also
    \(
        w_{\inc}.c.w.w_{\mathit{fill}}.c.w_{\dec}
        \text{ and }
        w_{\inc}.c.w'.w_{\mathit{fill}}.c.w_{\dec}
    \)
    lead to the same state.
    Thus, either both or none of these words is accepted by $\A$.
    To complete the proof, note that the length \mbox{$i-1$} of $w_{\mathit{fill}}$ was chosen such that $w_i$ is the $k$-last bit of $w.w_{\mathit{fill}}$.
    Therefore, we have $w_{\inc}.c.w.w_{\mathit{fill}}.c.w_{\dec} \in \lang{N_0(n)} \subseteq \lang{\A}$ since the $k$-last bit is $0$, and $w_{\inc}.c.w'.w_{\mathit{fill}}.c.w_{\dec} \in \lang{N_1(n)}$ since the $k$-last bit is $1$.
    We conclude $w_{\inc}.c.w'.w_{\mathit{fill}}.c.w_{\dec} \not\in \lang{\A}$ since \mbox{$\lang{N_1(n)} \cap \lang{\A} = \emptyset$}, a contradiction.
\end{proof}

\end{document}